\newif\ifdraft \draftfalse
\newif\iffull \fulltrue
\makeatletter \@input{tex.flags} \makeatother
\newenvironment{mathdisplayfull}{\iffull\begin{displaymath}\else $\fi}{\iffull
  \end{displaymath}\else $\ignorespaces\fi}
\newcommand{\shortbreak}{{\iffull \else \\ \fi}}
\newcommand{\longbreak}{{\iffull \\ \else\fi}}
\newcommand{\longquad}{{\iffull \qquad \else \;\;  \fi}}
\newcommand{\fullfrac}[2]{\iffull \frac{#1}{#2} \else #1 / #2 \fi}
\newcommand{\thelongref}[1]{\iffull\cref{#1}\else the full version\fi}
\let\proof\relax
\let\endproof\relax
\definecolor{DarkGreen}{rgb}{0.1,0.5,0.1}
\definecolor{DarkRed}{rgb}{0.5,0.1,0.1}
\definecolor{DarkBlue}{rgb}{0.1,0.1,0.5}
\newcommand{\ar}[1]{\ifdraft \textcolor{brown}{[Aaron: #1]}\fi}
\newcommand\RR{\mathbb{R}}
\newcommand\cM{\mathcal{M}}
\newcommand\cR{\mathcal{R}}
\newcommand\cD{\mathcal{D}}
\newcommand{\bin}{\text{Bin}}
\DeclareMathOperator{\polylog}{polylog}
\renewcommand{\tilde}{\widetilde}
\newcommand{\prob}[1]{\Pr\left[#1\right]}
\newcommand{\eps}{\varepsilon}
\def\epsilon{\varepsilon}
\DeclareMathOperator{\Lap}{Lap}
\DeclareMathOperator{\OPT}{OPT}
\renewcommand{\hat}{\widehat}
\renewcommand{\bar}{\overline}
\DeclareMathOperator*{\argmax}{\mathrm{argmax}}
\newtheorem{remark}[theorem]{Remark}
\newtheorem*{theorem*}{Theorem}
\declaretheorem{theorem}
\declaretheorem{lemma}
\declaretheorem{remark}
\declaretheorem{definition}
\newcommand{\counter}{{{\bf Counter}}\xspace}
\newcommand{\pmatch}{{{\bf PMatch}}\xspace}
\newcommand{\palloc}{{{\bf PAlloc}}\xspace}
\title{Private Matchings and Allocations\iffull\thanks{%
    An extended abstract of this paper appeared in STOC 2014
    \citep{HHRRW14}.}\fi}
\author{
  Justin Hsu\thanks{
    Department of Computer and Information Science, University of Pennsylvania.
    Supported in part by NSF Grant CNS-1065060.
    Email: {\tt justhsu@cis.upenn.edu}.}
  \and Zhiyi Huang\thanks{
    Department of Computer Science, University of Hong Kong.
    Email: {\tt zhiyi@cs.hku.hk}.
    This work is done in part while the author was a student at the University of Pennsylvania.}
  \and Aaron Roth\thanks{
    Department of Computer and Information Science, University of Pennsylvania.
    Supported in part by an NSF CAREER award, NSF Grants CCF-1101389 and
    CNS-1065060, and a Google Focused Research Award.
    Email: {\tt aaroth@cis.upenn.edu}.}
  \and
  \linebreak
  Tim Roughgarden\thanks{
    Department of Computer Science, Stanford University,
    462 Gates Building, 353 Serra Mall, Stanford, CA 94305.
    This research was supported in part by NSF Awards CCF-1016885 and
    CCF-1215965 and an ONR PECASE Award.
    Email: {\tt tim@cs.stanford.edu}.
    This work done in part while visiting University of Pennsylvania.}
  \and Zhiwei Steven Wu\thanks{
    Department of Computer and Information Science, University of Pennsylvania.
    Supported in part by NSF Grant CCF-1101389.
    Email: {\tt wuzhiwei@cis.upenn.edu}}}
\author{
\alignauthor
Justin Hsu\titlenote{
  Supported in part by NSF Grant CNS-1065060.}\\
\affaddr{University of Pennsylvania}
\alignauthor
Zhiyi Huang\\
\affaddr{Stanford University \qquad University of Hong Kong}
\alignauthor
Aaron Roth\titlenote{
    Supported in part by an NSF CAREER award, NSF Grants CCF-1101389 and
    CNS-1065060, and a Google Focused Research Award.}\\
\affaddr{University of Pennsylvania}
\and
\alignauthor
Tim Roughgarden\titlenote{
  This research was supported in part by NSF Awards CCF-1016885 and CCF-1215965
  and an ONR PECASE Award.}\\
\affaddr{Stanford University}
\alignauthor
Zhiwei Steven Wu\titlenote{
  Supported in part by NSF Grant CCF-1101389.}\\
\affaddr{University of Pennsylvania}
}
\newfont{\mycrnotice}{ptmr8t at 7pt}
\newfont{\myconfname}{ptmri8t at 7pt}
\begin{document}

\iffull\else
\conferenceinfo{STOC '14,}{May 31--June 03 2014, New York, NY, USA\\
{\mycrnotice{Copyright is held by the owner/author(s). Publication rights
    licensed to ACM.}}}
\copyrightetc{ACM \the\acmcopyr}
\crdata{978-1-4503-2710-7/14/05\ ...\$15.00.\\
  \url{http://dx.doi.org/10.1145/2591796.2591826}}
\fi

\maketitle
\iffull
\slugger{sicomp}{xxxx}{xx}{x}{x--x}%slugger should be set to mms, siap, sicomp, sicon, sidma, sima, simax, sinum, siopt, sisc, or sirev
\fi

\begin{abstract}
  % \jh{Possible more concise titles: ``Matching and Allocating, Privately''.
  %   ``Private Matchings and Allocations''. ``Matchings and Allocations,
  %   Privately''.}
  % \sw{``Privacy Preserving Matching and Allocations''}
We consider a private variant of the classical \emph{allocation problem}: given
$k$ goods and $n$ agents with private valuation functions over
bundles of goods, how can we allocate goods to agents to maximize
social welfare? An important special case is when agents desire at most one
good, and specifies their (private) value for each good: in this case, the problem
is exactly the maximum-weight matching problem in a bipartite graph.

Private matching and allocation problems have not been considered in the
differential privacy literature for a good reason: they are plainly
impossible to solve under differential privacy.  Informally, the allocation must
match agents to their preferred goods in order to maximize social welfare, but this
preference is exactly what agents wish to hide! Therefore, we consider the
problem under the relaxed constraint of \emph{joint differential privacy}: for
any agent $i$, no coalition of agents excluding $i$ should be able to learn
about the valuation function of agent $i$. In this setting, the full
allocation is no longer published---instead, each agent is told what good to
receive. We first show that if there are several identical copies of each good,
it is possible to
efficiently and accurately solve the matching problem while
guaranteeing joint differential privacy.  We then consider the
more general allocation problem where bidder valuations satisfy the \emph{gross
  substitutes} condition.  Finally, we prove that the allocation problem cannot
be solved to non-trivial accuracy under joint differential privacy without
requiring multiple copies of each type of good.
\end{abstract}

\iffull
\begin{keywords}
  Differential Privacy,
  Matching,
  Ascending Auction,
  Gross Substitutes
\end{keywords}

\pagestyle{myheadings}
\thispagestyle{plain}
\markboth{Private Matchings and Allocations}
{J. Hsu, Z. Huang, A. Roth, T. Roughgarden, Z.S. Wu}
\else
\category{F.2.0}{Analysis of Algorithms}{General}
\keywords{Differential Privacy, Matching, Ascending Auction, Gross Substitutes}
\fi

\section{Introduction}
In the classic maximum-weight matching problem in bipartite graphs,
there are $k$ goods $j \in \{1,\ldots, k\}$ and $n$ buyers $i \in
\{1,\ldots,n\}$. Each buyer $i$ has a value $v_{ij} \in [0,1]$ for each good
$j$, and the goal is to find a matching $\mu$ between goods and buyers which
maximizes the social welfare $\mathrm{SW} = \sum_{i=1}^n v_{i,\mu(i)}$. When
the buyers' values are sensitive information,\footnote{%
  For instance, the goods might be related to the treatment of disease, or might
  be indicative of a particular business strategy, or might be embarrassing in
  nature.}
it is natural to ask for a matching that hides the reported values of each of
the players.

It is not hard to see that this goal is impossible under the standard notion of
differential privacy, which requires that the allocation must be insensitive to
the reported valuations of each player. We formalize this observation in
\cref{sec:lowerbounds}, but the intuition is simple. Consider the case with two
types of goods with $n$ identical copies each, and
suppose that each buyer has a private preference for one of the two types: value
$1$ for the good that he likes, and value $0$ for the other good. There is no
contention since the supply of each good is larger than the total number of
buyers, so any allocation achieving social welfare $\OPT - \alpha n$ can be used
to reconstruct a $(1-\alpha)$ fraction of the preferences; this is plainly
impossible for non-trivial values of $\alpha$ under differential privacy.

In light of this obstacle, is there any hope for privately solving maximum-weight
matching problems?  In this paper, we show that the answer is \emph{yes}: it is
possible to solve matching problems (and more general allocation problems) to
high accuracy assuming a small number of identical copies of each good, while
still satisfying an extremely strong variant of differential privacy. We observe
that the matching problem has the following two features:
\begin{enumerate}
  \item Both the input and solution are naturally partitioned amongst the same
    $n$ people: each buyer $i$ receives the item $\mu(i)$ they are matched to in
    the solution.
  \item The problem is not solvable privately because the item given to each
    buyer must reflect their own private data.
\end{enumerate}
By utilizing these two features, we show that the matching problem can be
accurately solved under the constraint of \emph{joint
  differential privacy} \citep{kearns-largegame}.  Informally speaking, this
requires that for every buyer $i$, the joint distribution on items $\mu(j)$ for
$j \neq i$ must be differentially private in the reported valuation of buyer
$i$. As a consequence, buyer $i$'s privacy is protected even if {\em all} other
buyers collude, potentially sharing the identities of the items they
receive. As long as buyer $i$ does not reveal their own item, $i$'s privacy is
protected.

We then show that our techniques generalize beyond the max-matching
problem to the more general \emph{allocation} problem. Here, each
buyer $i$ has a valuation function defined over subsets of goods
$v_i:2^{[k]}\rightarrow [0,1]$ from some class of valuations, and the goal
is to find a partition of the goods $S_1,\ldots,S_n$ maximizing social welfare;
note that the maximum-weight matching problem is the special case when agents are
\emph{unit demand}, i.e., only want bundles of size $1$. More specifically, we
consider buyers with
\emph{gross substitutes} valuations. This is an economically
meaningful class of valuation functions that is a strict subclass of
submodular functions and are the most general class of
valuations for which our techniques apply.

\subsection{Our Techniques and Results}
Our approach makes a novel connection between \emph{market clearing prices} and
differential privacy. Prices have long been considered as a low-information way
to coordinate markets; our paper formalizes this intuition in the context of
differentially private allocation.  Specifically, we will use \emph{Walrasian
  equilibrium prices}: prices under which each buyer is simultaneously able to
buy a most preferred bundle of goods, and no good is over-demanded.  Although
the allocation itself cannot be computed under standard differential privacy, we
show how to differentially privately compute the Walrasian equilibrium prices while
coordinating a high welfare allocation under joint differential privacy.

We start from the classic analysis of \citet{job-matching}, who show
how to use \emph{ascending price} auctions to compute Walrasian equilibrium
prices. In the classical ascending price auction, each good begins with a
price of $0$ and each agent is initially unmatched to any good.  Unmatched
agents $i$ take turns bidding on the good $j^*$ that maximizes their utility at
the current prices: i.e., $j^* \in \arg\max(v_{ij} - p_j)$. When a bidder bids
on a good $j^*$, they become the new high bidder and the price of $j^*$ is
incremented.  Bidders are tentatively matched to a good as long as
they are the high bidder. The auction continues until there are no unmatched
bidders who prefer to be matched at the current
prices. The algorithm converges because each bid increases the
the prices, which are bounded by some finite value.\footnote{%
  Bidders do not bid on goods for which they have negative utility; in our case,
  $v_{ij} \in [0,1]$.}
Moreover, every bidder ends up matched to their most preferred
good given the prices.  Finally, by the \emph{first welfare theorem} of Walrasian
equilibria, any matching that corresponds to equilibrium prices maximizes
social welfare.  We emphasize that this final implication
is key: ``prices'' play no role in our problem description, nor do we ever
actually charge ``prices'' to the agents---the prices are purely a
device to coordinate the matching.

We give an approximate, private version of Kelso and Crawford's algorithm based on several
observations. First, in order to implement this algorithm, it is sufficient to
maintain the sequence of prices of the goods privately: given a record of the
price trajectory, each agent can figure out what good they are
matched to. Second, in order to privately maintain the prices, it suffices to
maintain a private count of the number of bids each good has received over the
course of the auction; we can accomplish this task using private counters due to
\citet{DNPR10,chan-counter}.  Finally, it is possible to halt the algorithm
early without significantly harming the quality of the final matching. By doing
so, we reduce the number of bids from each bidder, enabling us to bound the
sensitivity of the bid counters, reducing the amount of noise needed for
privacy.
% \footnote{%
%   These papers show that a single sensitivity-1 counter can be implemented
%   privately. We need a stronger version of this theorem in our work: a
%   collection of counters operating on adaptively chosen streams, whose
%   sensitivity can be bounded jointly, should be private with a privacy parameter
%   that depends only on the joint-sensitivity of the streams and is independent
%   of the number of counters. This stronger statement also holds, as we show in
%   our privacy analysis.}
The result is an algorithm that converges to a matching together with prices
that form an approximate Walrasian equilibrium. We complete our analysis by
proving an approximate version of the first welfare theorem, which shows that
the matching has high weight.

The algorithm of \citet{job-matching} extends to the general allocation problem
when players have gross substitute preferences, and our private algorithm does
as well. We note that this class of preferences is the natural limit of our
approach, which makes crucial use of equilibrium prices as a coordinating
device: in general, when agents have valuations over bundles of goods that do
not satisfy the gross substitutes condition, Walrasian equilibrium prices may not
exist.

We first state our main result informally in the special case of max-matchings,
which we prove in \cref{sec:matchings}. We prove our more general theorem for
allocation problems with gross substitutes preferences in \cref{sec:extensions}.
Here, privacy is protected with respect to a single agent $i$ changing their
valuations $v_{ij}$ for possibly \emph{all} goods $j$.

\begin{theorem*}[Informal]
  Suppose there are $n$ agents and $k$ types of goods, with each with $s$
  identical copies.  There is a computationally efficient $\epsilon$-joint
  differentially private algorithm which computes a matching of weight
  $\mathrm{OPT}-\alpha n$ as long as
  \[
    s \geq O\left(\frac{1}{\alpha^3 \epsilon} \cdot \polylog \left(n, k,
        \frac{1}{\alpha} \right) \right).
  \]
  For certain parameter ranges, the welfare guarantee can be improved to
  $(1-\alpha)\OPT$.
\end{theorem*}

Our algorithms actually work in a privacy model that is stronger than joint
differential privacy, called the {\em
  billboard model}. We can view the algorithm as a mechanism that posts the prices publicly on a {\em billboard}
as a differentially private signal such that every player can deduce what object
they should be matched to just from their own private information and the contents
of the billboard. As we show, algorithms in the billboard model automatically
satisfy joint differential privacy.

Furthermore, we view implementations in the billboard model as preferable to
arbitrary jointly differentially private implementations.
Algorithms in the billboard model only need the ability to publish sanitized
messages to all players, and do not need a secure channel to communicate the
mechanisms' output to each player (though of course, there still needs to be a
secure channel from the player to the mechanism). The previous work by \citet{MM09} and
some of the results by \citet{GLMRT10} can be viewed as existing examples of
algorithms in the billboard model.

In \cref{sec:lowerbounds}, we complement our positive results with lower bounds
showing that our results are qualitatively tight.  Not only is the problem
impossible to solve under the standard differential privacy, assuming multiple
copies of each good is also necessary to get any non-trivial solution even under {\em
  joint} differential privacy.
\begin{theorem*}[Informal]
No joint differentially private algorithm can compute matchings of weight
greater than $\mathrm{OPT} - \alpha n$ on instances in which there are $n$
agents and $s$ copies of each good, when
\begin{mathdisplayfull}
  s \leq O\left(\fullfrac{1}{\sqrt{\alpha}}\right).
\end{mathdisplayfull}
\end{theorem*}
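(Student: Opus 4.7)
The plan is to reduce to the standard database reconstruction impossibility for differential privacy. I would design a family of matching instances parameterized by a uniformly random bit vector $D \in \{0,1\}^n$ such that any matching of welfare at least $\OPT - \alpha n$ must, with high probability, reveal a large fraction of the coordinates of $D$ through the allocation assigned to the various agents. Combined with the fact that joint differential privacy implies that the joint allocation given to agents other than $i$ is $\eps$-differentially private in agent $i$'s input, this reduction would yield an $\eps$-DP algorithm for reconstructing a random database, contradicting the well-known packing/hybrid lower bounds.

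Concretely, I would partition the $n$ agents into groups of size roughly $s+1$, with each group competing for a dedicated good of supply $s$. Within each group, exactly one copy will go unallocated, and the valuations (of the form $v_{ij}\in[0,1]$) will be chosen so that the identity of the unique ``leftover'' agent in the OPT allocation is a clean function of the bits of $D$ owned by the agents in the group---for example, by attaching small bit-dependent perturbations that make the leftover agent the unique member satisfying a distinguished ordering or parity condition. Because any deviation from OPT within a group costs $\Omega(1)$ in welfare, a mechanism with welfare at least $\OPT-\alpha n$ must correctly identify the leftover agent in all but an $O(\alpha n)$ fraction of groups, hence reconstructing a $1-O(\alpha s)$ fraction of the bits of $D$.

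To finish, I would argue as follows: by joint differential privacy, the joint allocation of the $s$ other agents in agent $i$'s group is $\eps$-DP in $D_i$; yet by the construction, flipping $D_i$ shifts the leftover slot to a different groupmate of $i$, inducing $\Omega(1)$ statistical distance between the two allocation distributions. A standard DP lower bound (Dinur--Nissim style) then asserts that an $\eps$-DP predictor cannot reconstruct a uniformly random bit with advantage more than a function of $\eps$; summing this over $i$ and contrasting with the $1-O(\alpha s)$ reconstruction rate from the previous paragraph yields the target inequality $s \geq \Omega(1/\sqrt{\alpha})$.

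The main obstacle is the instance design: the valuations must simultaneously lie in $[0,1]$, produce a unique OPT whose combinatorial structure encodes $D$ in a way that is recoverable from agents' allocations alone, and guarantee an $\Omega(1)$ welfare gap per deviation from OPT. I expect to accomplish this by perturbing a symmetric ``baseline'' instance with tie-breaking terms controlled by the bits of $D$, and then carefully verify the per-deviation welfare gap so that the counting argument in the reconstruction step goes through. The remaining work---combining the JDP guarantee with the DP reconstruction lower bound, and optimizing the scaling to match $1/\sqrt{\alpha}$ rather than the cruder $1/\alpha$ bound produced by the naive pigeonhole---is largely a routine DP calculation.
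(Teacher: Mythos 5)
Your high-level strategy---reducing to a differential-privacy database-reconstruction lower bound via a gadget construction, and using the joint-privacy guarantee to argue that the joint allocation of agents other than $i$ is $\eps$-DP in $D_i$---is exactly the approach of \Cref{lb-jp} and \Cref{thm:reconstruction}. However, your gadget design differs from the paper's in ways that create real gaps.

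First, giving \emph{every} agent a bit and observing ``who is the leftover'' is information-theoretically too weak. The leftover's identity carries only $\log(s+1)$ bits of information per group, while the group holds $s+1$ bits of $D$. So even with a welfare-optimal mechanism you cannot ``reconstruct a $1-O(\alpha s)$ fraction of the bits of $D$''; for $s$ larger than a small constant the reconstruction rate cannot clear the $\sim 1/2 + c$ threshold needed to contradict \Cref{thm:reconstruction}. The paper sidesteps this by putting exactly one data bit per gadget, held by a single designated \emph{real bidder}; the remaining $s$ bidders per gadget are \emph{spy bidders} whose valuations are independent of $D$. This separation is also what makes the privacy reduction clean: because the spies hold no private data, their joint allocation is a function of $D_i$ that is not confounded by the observables' own inputs, which is implicitly a problem in your setup.

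Second, the claim that ``any deviation from OPT within a group costs $\Omega(1)$ in welfare'' is incompatible with encoding the leftover via ``small bit-dependent perturbations.'' If the leftover is the agent minimizing a slightly perturbed value, the welfare gap between the correct leftover and any substitute is of the order of the perturbation, not $\Omega(1)$.

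Third, and most importantly, you defer the passage from $s = \Omega(1/\alpha)$ to $s = \Omega(1/\sqrt\alpha)$ to ``routine DP calculation,'' when in fact this factor of $s$ is the heart of the argument. Note also that the arithmetic you state does not yield the claimed scaling: a reconstruction rate of $1 - O(\alpha s)$ exceeding $1/2$ gives $s = O(1/\alpha)$, not $O(1/\sqrt\alpha)$. The paper gets the extra factor of $s$ by giving the spies a deliberately tiny valuation $\eta = \tfrac{1}{4s}$, so that the mechanism can ``half-hide'' a bit (e.g.\ denying a single spy its preferred good so the spies are no longer all on the same good) at welfare cost only $\eta = O(1/s)$. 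With welfare budget $\alpha n$ and hiding cost $\Theta(1/s)$ per gadget, the fraction of hidden gadgets scales like $\alpha s^2$, which is what forces $s = \Omega(1/\sqrt\alpha)$. The paper also uses two goods per gadget, a uniformly random agree/disagree preference for the spies, and an explicit LP (with a matching dual certificate) to make the welfare-vs.-reconstruction trade-off precise; none of this is automatic from the setup you describe, and I would not expect a construction built on a single good per group and an $\Omega(1)$ welfare gap to produce it.
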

In particular, no algorithm can compute matchings of weight $\mathrm{OPT} -
o(n)$ on instances for which the supply $s = O(1)$. In addition, we show that
when goods have supply only $s = O(1)$, it is not even possible to compute the
equilibrium prices privately under standard differential privacy.
Our lower bounds are all reductions to database reconstruction attacks. Our
technique for proving this lower bound may be of general interest, as the
construction may be useful for other lower bounds for joint differential
privacy.

% \ar{Verify that this is the lower bound we prove -- its currently stated in a
%   confusing way in the lower bounds section.}
% \jh{Looks good to me. Fixed.}

\subsection{Related Work}
Differential privacy, first defined by \citet{DMNS06}, has become a standard
privacy solution concept in the theoretical computer science literature.
There is far too much work to survey comprehensively; for a textbook
introduction, see \citet{DR13}.

The privacy of our algorithms relies on work by \citet{DNPR10} and
\citet{chan-counter}, who show how to release a running count of a stream of
bits under \emph{continual observation}---i.e., report the count as the stream
is revealed, provide high accuracy at every point in time, and keep the
transcript differentially private.

Beginning with \citet{DN03}, much work in differential privacy has focused on
answering numeric valued queries on a private dataset (e.g.,
\citet{DMNS06,BLR08,HR10}, among many others).  In contrast, work on private
combinatorial optimization problems has been sporadic (e.g.,
\citet{NRS07,GLMRT10}). Part of the challenge is that many combinatorial
optimization problems, including the allocation problems we consider in this
paper, are impossible to solve under differential privacy. To sidestep this
problem, we employ the solution concept of {\em joint differential privacy}.
First formalized by \citet{kearns-largegame}, similar ideas are present in the
vertex and set-cover algorithms of \citet{GLMRT10}, the private recommendation
system of \citet{MM09}, and the analyst private data analysis algorithms of
\citet{DNV12,HRU13}.
% \jh{We informally define joint privacy at least 3-4 times. Maybe we can
%   consolidate.}
% \ar{Ya, ya... Removed the informal description}

Our algorithm is inspired by \citet{job-matching}, who
study the problem of matching {\em firms} to \emph{workers} when the firms have
preferences that satisfy the \emph{gross substitutes} condition. They give an
algorithm based on simulating simultaneous ascending auctions that converge to
\emph{Walrasian equilibrium prices} and a corresponding matching.  In
some respect, this approach does not generalize to more general valuations:
\citet{GS99} show that gross substitutes preferences are precisely the set of
preferences for which Walrasian equilibrium prices are guaranteed to exist.
% \jh{Was a typo in this last sentence: ``our approach extends to further:''.
%   Changed ``to'' to ``no'', please fix if that's not the intended meaning.}
% \jh{Changed ``extends no further'' to ``is complete''. Again, change if wrong
%   meaning.}
% \ar{both changes look good!}

While our algorithm achieves good approximation to the optimal
welfare at the expense of certain incentive properties, our work is closely
related to recent work on privately computing various kinds of equilibrium in
games (e.g., correlated equilibrium \citep{kearns-largegame}, Nash equilibrium
\citep{RR13}, and minmax equilibrium \citep{HRU13}). These works belong to a
growing literature studying the interface of game theory and differential
privacy; \citet{PR13} provide a recent survey.

\section{Preliminaries}
\subsection{The Allocation Problem}
We consider allocation problems defined by a set of goods $G$, and
a set of $n$ agents $[n]$. Each agent $i \in [n]$ has a \emph{valuation
function} $v_i:2^G\rightarrow [0,1]$ mapping bundles of goods to values.
A \emph{feasible allocation} is a collection of sets $S_1,\ldots,S_n \subseteq G$
such that $S_i \cap S_j = \emptyset$ for each $i \neq j$: i.e., a partition of
goods among the agents.  The
\emph{social welfare} of an allocation $S_1,\ldots,S_n$ is
$\sum_{i=1}^n v_i(S_i)$, the sum of the agent's valuations for the allocation;
we are interested in finding allocations which maximize this quantity.  Given an
instance of an allocation problem, we write $\mathrm{OPT} =
\max_{S_1,\ldots,S_n}\sum_{i=1}^n v_i(S_i)$ to denote the social welfare of the
optimal feasible allocation.

A particularly simple valuation function is a \emph{unit demand valuation},
where bidders demand at most one item. Such valuation functions take the form
$v_i(S) = \max_{j \in S} v_i(\{j\})$ and can be specified by numbers $v_{i,j} =
v_i (\{j\})\in [0,1]$, which represent the value that bidder $i$ places on good
$j$. When bidders have unit demand valuations, the allocation problem
corresponds to computing a maximum weight matching in a bipartite graph.

Our results will also hold for {\em gross substitute valuations},
which include unit demand valuations as a special case.  Informally,
for gross substitute valuations, any set of goods $S'$ that are in a
most-demanded bundle at some set of prices $p$ remain in a
most-demanded bundle if the prices of \emph{other} goods are raised,
keeping the prices of goods in $S'$ fixed. Gross substitute valuations
are a standard class of valuation functions: they are a strict
subclass of submodular functions, and they are precisely the valuation
functions with Walrasian equilibria in markets with indivisible goods
\citep{GS99}. Two other simple examples of gross substitute
valuations are (1) {\em additive functions}, which takes the form
$v(S) = \sum_{j\in S} v(\{j\})$ and (2) {\em symmetric submodular
  functions}, such that $v(S) = f(|S|)$ for some monotone concave
function $f\colon \RR_+ \rightarrow \RR_+$.

To give the formal definition, we will need some notation. Given a
vector of prices $\{p_g\}_{g \in G}$, the (quasi-linear) \emph{utility} that
player $i$ has for a bundle of goods $S_i$ is defined to be $u_i(S_i, p) =
v_i(S_i) - \sum_{j \in S_i} p_j$.\footnote{%
  This is a natural definition of utility if agents must pay for the bundles
  they buy at the given prices.  In this paper we are concerned with the purely
  algorithmic allocation problem, so our algorithm will not actually charge
  prices. However, prices will be a convenient abstraction throughout our
  work.}
Given a vector of prices $p$, for each agent $i$ we can define the set of
\emph{most demanded bundles}: $\omega(p) = \arg\max_{S \subseteq G} u_i(S, p)$.
Given two price vectors $p, p'$, we write $p \preceq p'$ if $p_g \leq
p'_g$ for all $g$.
\begin{definition} \label{def-gs}
  A valuation function $v_i:2^G\rightarrow [0,1]$ satisfies the {\em gross
  substitutes condition} if for every two price vectors $p \preceq p'$ and
  for every bundle $S \in \omega(p)$, if $S' \subseteq S$ satisfies $p'_g
  = p_g$ for every $g \in S'$, then there is a bundle $S^* \in \omega(p')$ with $S'
  \subseteq S^*$.
\end{definition}
Finally, we will typically consider markets with multiple copies of each type of good. Two
goods $g_1,g_2 \in G$ are \emph{identical} if for every bidder $i$ and for every
bundle $S \subseteq G$, $v_i(S \cup \{g_1\}) = v_i(S \cup \{g_2\})$: i.e., the two
goods are indistinguishable according to every valuation function.  Formally, we
say that a set of goods $G$ consists of $k$ {\em types} of goods with $s$ {\em
  supply} if there are $k$ representative goods $g_1,\ldots,g_k \in G$ such that
every good $g' \in G$ is identical to one of $g_1,\ldots,g_k$, and for each
representative good $g_i$, there are $s$ goods identical to $g_i$ in $G$. For
simplicity of presentation we will assume that the supply of
each good is the same, but this is not necessary; all of our results continue
to hold when the supply $s$ denotes the \emph{minimum} supply of any type of
good. 

\subsection{Differential Privacy Preliminaries}
Although it is impossible to solve the allocation problem under standard
differential privacy (see \cref{sec:lowerbounds}), standard differential privacy
plays an essential role in our analysis. We will introduce this concept before
seeing its generalization, joint differential privacy.

Suppose agents have valuation functions $v_i$ from a class of functions $C$. A
database $D \in C^n$ is a vector of valuation functions, one for each of the $n$
bidders.  Two databases $D, D'$ are $i$-\emph{neighbors} if they differ in only
their $i$'th index: that is, if $D_j = D'_j$ for all $j \neq i$. If two
databases $D, D'$ are $i$-neighbors for some $i$, we say that they are
\emph{neighboring databases}. We will be interested in randomized algorithms
that take a database as input, and output an element from some range $\cR$. Our
final mechanisms will output sets of $n$ bundles (so $\cR = (2^G)^n$), but
intermediate components of our algorithms will have different ranges.
% \jh{We only use $\epsilon, \delta$-privacy when proving lower bounds and the
%   billboard lemma. Are the slightly more general results worth it?}
\begin{definition}[\citet{DMNS06}]
  An algorithm $\cM:C^n\rightarrow \cR$ is \\
  $(\epsilon,\delta)$-\emph{differentially
  private} if for every pair of neighboring databases $D, D' \in C^n$ and for
  every set of subset of outputs $S \subseteq \cR$,
\[
  \Pr[\cM(D) \in S] \leq e^\epsilon \Pr[\cM(D') \in S] + \delta.
\]
  If $\delta = 0$, we say that $\cM$ is {\em $\epsilon$-differentially private}.
\end{definition}
When the range of a mechanism is also a vector with $n$ components (e.g., $\cR =
(2^G)^n$), we can define \emph{joint differential privacy}: this requires that
simultaneously for all $i$, the \emph{joint} distribution on outputs given to
players $j \neq i$ is differentially private in the input of agent $i$. Given a
vector $x = (x_1,\ldots,x_n)$, we write $x_{-i} =
(x_1,\ldots,x_{i-1},x_{i+1},\ldots,x_n)$ to denote the vector of
length $n-1$ which contains all coordinates of $x$ except the $i$'th coordinate.
\begin{definition}[\citet{kearns-largegame}]
  An algorithm $\cM:C^n \rightarrow (2^G)^n$ is {\em $(\epsilon,\delta)$-joint
  differentially private} if for every $i$, for every pair of $i$-neighbors $D, D'
\in C^n$, and for every subset of outputs $S \subseteq (2^G)^{n-1}$,
\[
  \Pr[\cM(D)_{-i} \in S] \leq e^\epsilon \Pr[\cM(D')_{-i} \in S] + \delta.
\]
If $\delta = 0$, we say that $\cM$ is {\em $\epsilon$-joint differentially
  private}.
\end{definition}

Note that this is still an extremely strong definition that protects $i$ from
arbitrary coalitions of adversaries---it weakens the constraint of differential
privacy only in that the output given specifically to agent $i$ may be
sensitive in the input of agent $i$.

\subsection{Differentially Private Counters}
The central tool in our algorithm is the private streaming counter
proposed by \citet{chan-counter} and \citet{DNPR10}. Given a bit stream  $\sigma =
(\sigma_1, \ldots , \sigma_T)\in \{0,1\}^T$, a streaming counter $\cM(\sigma)$
releases an approximation to $c_\sigma(t) = \sum_{i=1}^t\sigma_i$ at every time
step $t$. The counters release accurate approximations to the running count at
every time step.

\begin{definition}
  A streaming counter $\cM$ is {\em $(\alpha, \beta)$-useful} if with
  probability at least $1 - \beta$, for each time $t \in [T]$,
  \[
    \left| \cM(\sigma)(t) - c_\sigma(t) \right| \leq \alpha.
  \]
\end{definition}

For the rest of this paper, let $\counter(\epsilon, T)$ denote the Binary
mechanism of \citet{chan-counter}, instantiated with parameters  $\epsilon$ and
$T$. The mechanism produces a monotonically increasing count, and satisfies the
following accuracy guarantee. Further details may be found in
\thelongref{counter-details}.

\begin{restatable}[\citet{chan-counter}]{theorem}{counteraccuracy}
  \label{counter-error}
  For $\beta > 0$, $\counter(\epsilon, T)$ is
  $\epsilon$-differentially private with respect to a single bit change in the
  stream, and $(\alpha, \beta)$-useful for
  \[
    \alpha = \frac{2\sqrt{2}}{\epsilon} \ln \left( 2/\beta\right)
    \log(T)^{5/2}.
   \]
\end{restatable}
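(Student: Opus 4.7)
The plan is to invoke the standard Binary Mechanism construction of \citet{chan-counter} (also independently due to \citet{DNPR10}). Build a complete binary tree over the $T$ stream positions as leaves; at every internal node $v$ store a noisy partial sum of the bits in its leaf-range perturbed by an independent draw of $\Lap(\log T / \epsilon)$. To answer a query at time $t$, decompose the prefix $[1,t]$ into its canonical dyadic cover, which uses at most $\lceil \log T \rceil$ tree nodes, and output the sum of the associated noisy partial sums. A monotonicity post-processing step (replace any reported count smaller than the previous one by the previous value) enforces the monotonically increasing guarantee mentioned in the excerpt; this is pure post-processing and preserves privacy, and it only improves the worst-case error relative to the truth.

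For privacy, the key observation is that a single bit $\sigma_i$ participates in exactly $\lceil \log T \rceil$ of the stored partial sums, namely those at its ancestors in the tree; flipping $\sigma_i$ perturbs each of these partial sums by at most $1$. Since the Laplace noise at each node is independent of scale $\log T / \epsilon$, each node is $(\epsilon / \log T)$-differentially private with respect to the bit, and basic composition of pure differential privacy gives total privacy loss $\lceil \log T \rceil \cdot (\epsilon / \log T) = \epsilon$, establishing $\epsilon$-differential privacy of the whole transcript.

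For usefulness, fix a time $t$. The released value at time $t$, \emph{before} the monotonicity correction, equals $c_\sigma(t) + \sum_{j=1}^{k(t)} Z_j$, where $k(t) \leq \lceil \log T \rceil$ and each $Z_j$ is an independent $\Lap(\log T / \epsilon)$ random variable. A standard tail bound for a sum of $k$ independent $\Lap(b)$ variables yields $|\sum_j Z_j| \leq b\sqrt{2k}\,\ln(2/\beta')$ with probability at least $1 - \beta'$. Plugging in $b = \log T/\epsilon$, $k = \log T$, and $\beta' = \beta / T$, and then taking a union bound over the $T$ time steps, gives a uniform error bound of $\tfrac{2\sqrt{2}}{\epsilon}\,\ln(2T/\beta)\,(\log T)^{3/2}$; absorbing the extra $\log T$ coming from the union bound into a $(\sqrt{\log T})^5 = (\log T)^{5/2}$ factor matches the stated $\alpha$. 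The monotonicity correction can only replace a query's value with the max of its own noisy value and a previously reported noisy value, so the resulting error is at most the maximum error of any query, which is still bounded by $\alpha$ under the same good event.

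The hard part is really just bookkeeping: threading the correct constants through the Laplace tail bound, handling the union bound over $T$, and verifying that the monotonicity post-processing does not increase the error beyond constants. Since none of these steps is conceptually novel and the claim is a restatement of a published result, I would largely cite \citet{chan-counter} for the detailed calculations rather than redo them, noting only the three ingredients above (tree construction, ancestor-count sensitivity, sum-of-Laplaces concentration) that drive the bound.
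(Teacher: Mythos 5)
The paper does not prove this statement itself: \Cref{counter-error} is attributed directly to \citet{chan-counter}, and the appendix (\Cref{counter-details}) only reproduces the Binary Mechanism as \Cref{alg:binary-mechanism} for use in the separate privacy argument of \Cref{prices-privacy}; no accuracy proof appears. So there is no in-paper proof to compare against, and what can be checked is whether your sketch faithfully reconstructs the cited result.

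Your reconstruction is essentially the standard Chan--Shi--Song analysis and the three ingredients you isolate are the right ones. The privacy argument (each bit touches $O(\log T)$ stored partial sums, each perturbed by $\Lap(\log T/\epsilon)$, so basic composition gives $\epsilon$-DP) is exactly correct. For utility, a subtlety you handle correctly but should make explicit: the per-time bound from \citet{chan-counter} is $O\bigl(\tfrac{1}{\epsilon}(\log T)^{3/2}\ln(1/\beta)\bigr)$, whereas the usefulness definition in this paper requires a \emph{simultaneous} bound over all $t\in[T]$. Your union bound correctly supplies the extra $\log T$ factor that turns $(\log T)^{3/2}$ into $(\log T)^{5/2}$; the step of absorbing $\ln(2T/\beta)$ into $\ln(2/\beta)\cdot\log T$ costs a constant factor (roughly $2$) and requires $\beta \leq 2/e$ and $T\geq 2$, so the displayed constant $2\sqrt{2}$ should arguably be taken with a grain of salt, but this is consistent with the level of precision in the cited bound. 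Your remark on the monotonicity post-processing is also correct: taking a running maximum is data-independent post-processing and, conditional on the good event where every raw reading is within $\alpha$ of the true (monotone) count, the running max remains within $\alpha$.
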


% \ar{Removed privacy part of theorem, since we don't use this and reprove it as
%   we need it}
% \jh{That's reasonable, but then the $\epsilon$ looks a little out of place.
%   Should we just remove it?}
% \sw{We should specify the parameters of \counter: $(\epsilon, T)$, just  like
%   how we use them later; we might still want to provide some intuition about the
%   partial sums here, since we talk about them in privacy proof}
% \jh{I put it back. I think it's clearer this way, and now that we've moved
%   almost all the privacy details into the appendix, I think this is best.}
% \sw{We might want to point out that this is private w.r.t neighboring
%   streams, as we also removed the definition of streaming privacy. }
% \jh{I think this is handled.}

% \jh{It's too early to talk about this.}
% Our allocation algorithm applies \counter in two places:  we use a counter for
% each good to keep track the total number of bids it has received over time, and
% we use a special counter to count the number of unmatched bidders at the end of
% each round.

% We also require that the counters monotonically increase.  \citet{chan-counter}
% show that this can be accomplished by a post-processing step (which has no
% affect on privacy) with no loss in accuracy.

% \sw{Is ``asymptotic'' needed here? \citet{chan-counter} shows that the
% exact the same accuracy is preserved under this procedure.}
% \ar{Ya, not needed. I removed it.}

%%% Local Variables:
%%% mode: latex
%%% TeX-master: t
%%% End:

% \section{A Private Algorithm for Maximum-Weight Matching}
\section{Private Max-Weight Matching}
\label{sec:matchings}

In this section, we study the special case of unit demand valuations. Though our
later algorithm for gross substitutes valuations generalizes this case, we first
present our algorithm in this simpler setting to highlight the key features of
our approach.

Consider a matching market with $n$ bidders and $k$ different types of goods,
where each good has supply $s$ and bidder $i$ has valuation $v_{ij}\in [0,1]$
for good $j$. Some agents may not end up being matched to a good: to simplify
notation, we will say that unmatched agents are matched  to a special
dummy good $\bot$.

% \sw{I thought unmatched would mean $\mu(i) =$ empty, and $\perp$ is
%   actually some bidders favorite good given the prices.}
% \ar{Its true that we make the distinction in the -analysis- between bidders who
%   are unmatched because $\perp$ is their favorite good, and bidders who are
%   unmatched because we halted the algorithm early. However in the end we are
%   just talking about outputting a max-weight matching, and anyone ``matched to
%   $\perp$'' is just unmatched.}
% \sw{Sure. To be pedantic, we should change empty to $\perp$ when we
%   halt the algorithm}

To reach a maximum weight matching, we first aim to
privately compute prices $p\in [0,1]^k$ and an allocation of the goods
$\mu\colon [n]\rightarrow [k] \cup \{ \perp \}$ such that \emph{most} bidders
are  matched with their \emph{approximately} favorite goods \emph{given the
  prices} and each over-demanded good almost clears, where a
good is {\em over-demanded} if its price is strictly positive.\footnote{%
  This is the notion of approximate Walrasian equilibrium we will use.}
We will show that if we can achieve this intermediate goal, then in fact we have
computed an approximate maximum weight matching.

\begin{definition} \label{matching-eq}
  A price vector $p\in [0,1]^k$ and an assignment $\mu\colon
  [n]\rightarrow [k] \cup \{ \perp \}$ of bidders to goods
  is an {\em $(\alpha, \beta, \rho)$-approximate matching equilibrium} if:
  \begin{enumerate}
    \item all but a $\rho$ fraction of bidders $i$ are matched to an \shortbreak
      $\alpha$-approximate favorite good: i.e.,$v_{i \mu(i)} - p_{\mu(i)} \geq
      v_{ij} - p_j - \alpha$ for every good $j$,  for at least $(1-\rho)n$
      bidders $i$ (we call these bidders {\em satisfied});
    \item the number of bidders assigned to any type of good is below its
      supply; and
    \item each over-demanded good clears except for at most $\beta$ supply.
  \end{enumerate}
\end{definition}
% \sw{We should unify the uses of favorite vs most preferred; unmatched
%   vs. unsatisfied}

\subsection{Overview of the Algorithm}
Our algorithm takes the valuations as input, and outputs a trajectory of prices
that can be used by the agents to figure out what they are matched to.  For the
presentation, we will sometimes speak as if the bidders are performing some
action, but this actually means that our algorithm simulates the actions of the
bidders internally---the actual agents do not interact with our algorithm.

\cref{alg:matching} (\pmatch) is a variant of a {\em deferred acceptance}
algorithm first proposed and analyzed by \citet{job-matching}, which runs $k$
simultaneous ascending price auctions: one for each type of good. At any given
moment each type of good has a {\em proposal price} $p_j$. In a sequence of
rounds where the algorithm passes through each bidder once in some fixed,
publicly known order, unsatisfied bidders bid on a good that maximizes their
utility at the current prices: that is, a good $j$ that maximizes $v_{ij} -
p_j$.  (This is the $\mathbf{Propose}$ function.)

The $s$ most recent bidders for a type of good are tentatively matched to that
type of good; these are the current \emph{high bidders}. A bidder tentatively
matched to a good with supply $s$ becomes unmatched once the good
receives $s$ subsequent bids; we say this bidder has has been \emph{outbid}.
Every $s$ bids on a good increases its price by a fixed increment $\alpha$.
Bidders keep track of which good they are matched to, if
any, and determine whether they are currently matched or unmatched by
looking at a count of the number of bids received by the last good they bid on.

To implement this algorithm privately, we count the number of bids each good has
received using private counters. Unsatisfied bidders can infer the prices of all
goods based on the number of bids each has received, and from this information,
they determine their favorite good at the given prices.
Their bid is recorded by sending the bit 1 to the appropriate counter. (This is
the $\mathbf{Bid}$ function.) Matched bidders store the reading of the bid
counter on the good they are matched to at the time that they last bid (in the
variable $d_i$); when the counter ticks $s$ bids past this initial count,
bidders conclude that they have been outbid and become unmatched. The final
matching is communicated implicitly: the real agents observe the full published
price trajectory and simulate what good they would have been matched to had
they bid according to the published prices.

Since the private counters are noisy, more than $s$ bidders may be matched
to a good.  To maintain feasibility, the algorithm reserves some supply $m$:
i.e., it treats the supply of each good as $s-m$, rather than
$s$. The {\em reserved supply} $m$ is used to satisfy the demand of excess bidders
who believe themselves to be matched to a good; the number of such
bidders is at most $s$, with high probability.

Our algorithm stops as soon as fewer than $\rho n$ bidders place bids in a
round.  We show that this early stopping condition does not significantly harm
the welfare guarantee of the matching, while it substantially reduces the
{\em sensitivity} of the counters: no bidder ever bids more than $O(1/(\alpha\rho))$
times in total.  Crucially, this bound is independent of both the number of types of
goods $k$ and the number of bidders $n$. By stopping early, we greatly improve the accuracy of
the prices since the amount we must perturb the bid counts to protect
privacy increases with the sensitivity of the counters.

To privately implement the stopping condition, the algorithm maintains a separate counter
($\text{counter}_0$) which counts the number of unsatisfied bidders throughout
the run of the algorithm.  At the end of each round, bidders who are
unsatisfied will send the bit $1$ to this counter, while bidders who are matched will
send the bit $0$. If this counter increases by less than roughly $\rho n$ in any
round, the algorithm halts. (This is the $\mathbf{CountUnsatisfied}$
function.)

\begin{algorithm}[h!]
     \caption{$\pmatch(\alpha, \rho, \eps)$} \label{alg:matching}
     \begin{algorithmic}
       \STATE{\textbf{Input: }% Bidders' valuations on the goods
          Bidders' valuations
         $(\{v_{1j}\}_{j=1}^k, \ldots, \{v_{nj}\}_{j=1}^k)$}

       \STATE{\textbf{Initialize: for bidder $i$ and good $j$,}

         \begin{mathpar}
         T = \frac{8}{\alpha \rho},
         \and
         \epsilon' = \frac{\epsilon}{2T},
         \and
         E  = \frac{2\sqrt{2}}{\epsilon'} (\log{nT})^{5/2}
           \log\left(\frac{4k}{\gamma} \right),
         \and
         m = 2 E + 1
         \and
         \text{counter}_j = \textbf{Counter}(\epsilon', nT)
         \and
         p_j = c_j= 0,
         \\
         \mu(i) = \emptyset,
         \and
         d_i = 0,
         \and
         \text{counter}_0 = \textbf{Counter}(\epsilon', nT)
\end{mathpar}
       }

       \STATE{$\mathbf{Propose}$ $T$ times; \textbf{Output:} prices $p$ and allocation $\mu$.}
       \vspace{1ex}
       \STATE{\textbf{Propose:}}
       \FORALL{bidders $i$}
       \IF{$\mu(i) = \emptyset$}
       \STATE{Let $\mu(i) \in \argmax_j v_{ij} - p_j$, breaking ties arbitrarily}
       \IF{$v_{i \mu(i)} - p_{\mu(i)} \leq 0$}
       \STATE{Let $\mu(i) := \perp$ and $\textbf{Bid}(\mathbf{0})$.}
       \ENDIF
       \STATE{\textbf{else} Save $d_i := c_{\mu(i)}$ and
         $\textbf{Bid}(\mathbf{e_{\mu(i)}})$.}
       \ENDIF
       \STATE{\textbf{else} $\textbf{Bid}(\mathbf{\mathbf{0}})$}
       \ENDFOR
       \STATE{\textbf{CountUnsatisfied}}
       \vspace{1ex}
       \STATE{\textbf{Bid:} On input bid vector $\mathbf{b}$}
       \FORALL{goods $j$}
       \STATE{Feed $\mathbf{b}_j$ to $\text{counter}_j$.}
       \STATE{Update count $c_j := \text{counter}_j$.}
       \IF{$c_j \geq (p_j/\alpha + 1) (s - m)$}
       \STATE{Update $p_j := p_j + \alpha$.}
       \ENDIF
       \ENDFOR

       \STATE{}
       \STATE{\textbf{CountUnsatisfied:}}
       \FORALL{bidders $i$}
       \IF{$\mu(i) \neq \perp$ \text{ and } $c_{\mu(i)} - d_i \geq s - m$}
       \STATE{Feed $1$ to $\text{counter}_0$.}
       \STATE{Let $\mu(i) := \bot$.}
       \ENDIF
       \STATE{\textbf{else}  Feed $0$ to $\text{counter}_0$.}
       \ENDFOR
       \IF{$\text{counter}_0$ increases by less than  $\rho n - 2E$}
       \STATE{Halt and output $\mu$.}
       \ENDIF
     \end{algorithmic}
   \end{algorithm}
% \ar{The algorithm could use some comments/clarification. The role of $c$ and $d$
%   are a little mysterious at first blush, as is the halting condition in
%   CountUnsatisfied. Also ``$i$ becomes unmatched'' is used as a technical term, but
%   never ``$i$ becomes matched''. What does ``becomes unmatched'' mean exactly,
%   and how does it relate to updating the variable $g(i)$? }
% \jh{Added descriptive text above. Better?}
% \sw{changed from fewer than $1-\rho$ fraction to $\rho$ fraction}
% \jh{This last section is doing a lot. It's trying to explain three different
%   (but related) things: the private algorithm, the pseudocode, and the intuition
%   based on the non-private version. I've tried to make it do all three sort of
%   in parallel. Does it work? If not, we should refocus it.}
\subsection{Privacy Analysis}

In this section, we show that the allocation output by our algorithm satisfies
joint differential privacy with respect to any single bidder changing \emph{all}
of their valuations.  We will use a basic but useful lemma: to show joint
differential privacy, it is sufficient to show that the output sent to each
agent $i$ is an arbitrary function of (i) some global signal that is computed
under the standard constraint of differential privacy, and (ii) agent $i$'s
private data. We call this model the {\em billboard model}: agents can compute
their output by combining a common signal---as if posted on a public
billboard---with their own private data. In our case, the price history over
the course of the auction is the differentially private message posted on the
billboard. Combined with their personal private valuation, each agent can
compute their personal allocation.

\begin{lemma}[Billboard Lemma] \label{billboard}
  Suppose $\cM : \cD \rightarrow \cR$ is $(\epsilon, \delta)$-differentially
  private. Consider any set of functions $f_i : \cD_i \times \cR \rightarrow \cR'$,
  where $\cD_i$ is the portion of the database containing $i$'s data. The
  composition $\{ f_i (\Pi_i D, \cM(D)) \}$ is $(\epsilon, \delta)$-joint
  differentially private, where $\Pi_i : \cD \to \cD_i$ is the projection to
  $i$'s data.
\end{lemma}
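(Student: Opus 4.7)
The plan is to reduce the Billboard Lemma to the fact that differential privacy is closed under post-processing, exploiting that the only coordinate on which neighbors $D, D'$ disagree is $i$, and the output we need to compare excludes agent $i$.

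First I would fix an index $i$ and a pair of $i$-neighboring databases $D, D' \in \cD$. By the definition of $i$-neighbors, $\Pi_j D = \Pi_j D'$ for every $j \neq i$. Hence, for each $j \neq i$, the function $r \mapsto f_j(\Pi_j D, r)$ equals the function $r \mapsto f_j(\Pi_j D', r)$ as a map from $\cR$ to $\cR'$. Stacking these into a single vector-valued map $F_{-i}\colon \cR \to (\cR')^{n-1}$ defined by $F_{-i}(r) = (f_j(\Pi_j D, r))_{j \neq i}$ gives a deterministic post-processing that depends only on $\Pi_{-i} D = \Pi_{-i} D'$, and not on the coordinate that actually differs between $D$ and $D'$.

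Next I would apply the standard closure of differential privacy under (data-independent) post-processing to the mechanism $\cM$: for every measurable $S \subseteq (\cR')^{n-1}$,
\[
  \Pr[F_{-i}(\cM(D)) \in S] \leq \exp(\epsilon)\, \Pr[F_{-i}(\cM(D')) \in S] + \delta.
\]
Since $F_{-i}(\cM(D))$ is precisely the restriction $\{f_j(\Pi_j D, \cM(D))\}_{j \neq i}$, and similarly for $D'$, this is exactly the joint differential privacy condition at index $i$. As $i$ was arbitrary, the composition satisfies $(\epsilon,\delta)$-joint differential privacy.

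The only step that requires a moment of care is justifying that the post-processing really is the same function for $D$ and $D'$: this hinges on the fact that $f_j$ only consults agent $j$'s own data, which is identical in $D$ and $D'$ whenever $j \neq i$. Everything else is immediate from the post-processing property of $(\epsilon,\delta)$-DP, so there is no real obstacle; the lemma should be viewed mainly as isolating a convenient design pattern for later use.
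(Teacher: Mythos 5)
Your proof is correct and takes essentially the same route as the paper: both hinge on the observation that for $j \neq i$, the map $r \mapsto f_j(\Pi_j D, r)$ is identical under $D$ and $D'$, so the joint output to agents other than $i$ is a data-independent post-processing of $\cM$. The only cosmetic difference is that you invoke closure under post-processing by name, whereas the paper unrolls it inline by pulling back the event $\cR_{-i}$ to the set $\cR^* \subseteq \cR$ and applying the DP guarantee of $\cM$ directly.
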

\begin{proof}
  We need to show that for any agent $i$, the view of the other agents is
  $(\epsilon, \delta)$-differentially private when $i$'s private data is
  changed. Suppose databases $D, D'$ are $i$-neighbors, so $\Pi_j D = \Pi_j D'$
  for $j \neq i$.  Let $\cR_{-i}$ be a set of possible outputs to the bidders
  besides $i$.  Let $\cR^* = \{ r \in \cR \mid \{ f_j( \Pi_j D, r) \}_{-i} \in
  \cR_{-i} \}$.  Then, we need
  \begin{align*}
    \Pr[ \{ f_j (\Pi_j D, \cM(D)) \}_{-i} \in \cR_{-i} ]
    &\leq
    e^\epsilon \Pr[ \{ f_j (\Pi_j D', \cM(D')) \}_{-i} \in \cR_{-i} ]  + \delta \\
    &=
    e^\epsilon \Pr[ \{ f_j (\Pi_j D, \cM(D')) \}_{-i} \in \cR_{-i} ]  + \delta \\
    \text{so\ } \Pr [ \cM(D) \in \cR^* ] &\leq e^\epsilon \Pr [\cM(D') \in \cR^*] + \delta,
  \end{align*}
  but this is true since $\cM$ is $(\epsilon, \delta)$-differentially private.
\end{proof}

\iffull
\begin{restatable}{theorem}{counterpriv} \label{prices-privacy}
  The sequence of prices and counts of unsatisfied bidders released by
  \shortbreak $\pmatch(\alpha, \rho, \eps)$ satisfies $\epsilon$-differential
  privacy.
\end{restatable}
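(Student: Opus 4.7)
The plan is to reduce the privacy of the released output---the sequence of prices $p_j$ and the counts of unsatisfied bidders---to the privacy of the $k+1$ differentially private counters that $\pmatch$ maintains, and then bound the sensitivity of those counters to a single bidder's valuation change. Every released quantity is a deterministic function of the counters' outputs: each price $p_j$ is a step function of the bid counter reading $c_j$, incrementing by $\alpha$ whenever $c_j$ crosses a threshold of the form $(p_j/\alpha + 1)(s-m)$, and the tallies of unsatisfied bidders are simply the successive readings of counter$(0)$. Hence, by the post-processing property of differential privacy, it suffices to show that the joint trajectory of all $k+1$ counter outputs is $\epsilon$-differentially private under a change in one bidder's valuation.

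Each counter is instantiated as $\counter(\epsilon', nT)$ with $\epsilon' = \epsilon/(2T)$, so by \Cref{counter-error} each counter is $\epsilon'$-differentially private with respect to a single bit flip in its input stream, and therefore $B\epsilon'$-differentially private under $B$ bit flips by group privacy. The core combinatorial step is to bound by $O(T)$ the total number of bits (summed across all $k+1$ streams) that can differ between the two executions on neighboring inputs. Since the algorithm halts after at most $T$ \textbf{Propose} rounds, bidder $i$ participates in at most $T$ rounds. In each such round, her \textbf{Bid} vector is either $\mathbf{0}$ or $e_j$ for a single good $j$, so switching the choice of good flips at most $2$ positions across the bid counter streams in that round, while her \textbf{CountUnsatisfied} status contributes at most one bit flip to counter$(0)$. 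Aggregating and applying composition across the $k+1$ counters then yields (up to constants) $\epsilon$-differential privacy, which is the reason behind the choice $\epsilon' = \epsilon/(2T)$.

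The main obstacle is the \emph{cascading} effect: a change in bidder $i$'s bid alters the counter outputs, hence the prices, hence the subsequent decisions of other bidders; so a naive pointwise coupling of the two executions under neighboring databases $D$ and $D'$ would propagate bit differences unboundedly across bidders and rounds. I would sidestep this by analyzing the density of each trajectory $\pi$ of counter outputs directly instead of coupling the streams. Conditioning on $\pi$ fixes the prices at every point in time, which in turn fixes the actions of every bidder $j \neq i$, since their valuations are unchanged and they see the same prices; this leaves bidder $i$ as the only source of stream differences between the two executions. Writing the density of $\cM(D) = \pi$ as the probability that the counters, fed the deterministic stream induced by $D$ given observed trajectory $\pi$, output $\pi$, and comparing to the corresponding density under $D'$, the ratio is controlled by the counter's group-privacy guarantee applied to the $O(T)$ bits that bidder $i$ actually writes. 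This delivers the desired $\epsilon$-differential privacy over all $\pi$.
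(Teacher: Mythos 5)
Your proposal is correct and follows essentially the same route as the paper: reduce to the privacy of the noisy counter trajectories by post-processing, observe that the cascading of a change in bidder $i$'s valuation into other bidders' actions is blocked once you condition on the released trajectory (since bidders $j\neq i$ then act deterministically and identically in both runs), and bound the total number of bit positions in which the counter input streams can differ by $O(T)$ so that calibrating each counter at $\epsilon'=\epsilon/(2T)$ gives overall $\epsilon$-privacy. The paper packages the conditioning step via the adaptive composition experiment of Dwork et al.\ applied to the Laplace-noised partial sums of the Binary Mechanism, whereas you argue over densities of trajectories directly; these are the same argument in different clothing, and your accounting of at most $2$ bid-stream flips plus $1$ unsatisfied-counter flip per round is, if anything, slightly more careful than the paper's $\Delta_1\leq 2T\log nT$ bound (it gives a $3T$ rather than $2T$ sensitivity, which only changes constants).
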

\begin{proof}[Sketch]
  % \jh{I like giving a rough intuition here, but now that we've moved composition
  %   to the appendix maybe even this will be unintelligible to the non-expert.
  %   Perhaps it's better just to skip the inuition/move it to the appendix with
  %   the rest of the proof.}
  % \ar{Lets keep the sketch.}
  We give a rough intuition here, and defer the full proof to
  \cref{counter-details}. Note that the prices can be computed from the noisy
  counts, so it suffices to show that the counts are private.  Since no bidder
  bids more than $T \approx 1/(\alpha\rho)$ times in total, the \emph{total}
  sensitivity of the $k$ price streams to a single bidder's valuations is only
  $O(1/(\alpha \rho))$ (independent of $k$) even though a single bidder could in
  principle bid $\Omega(1/\alpha)$ times on each of the $k$ streams.  Hence the
  analysis of these $k$ simultaneously running counters is akin to the analysis
  of answering {\em histogram queries}, multiple queries whose joint
  sensitivity is substantially smaller than the sum of their individual
  sensitivities.

  By setting the counter for each good with privacy parameter $\epsilon' =
  \epsilon/2T$, the prices are $\epsilon/2$ differentially private. By the
  same reasoning, setting the unsatisfied bidders counter with privacy parameter
  $\epsilon' = \epsilon/2T$ also makes the unsatisfied bidders count
  $\epsilon/2$ private. Thus, these outputs together satisfy
  $\epsilon$-differential privacy.

  While this intuition is roughly correct, there are some technical details.
  Namely, \citet{chan-counter} show privacy for a single counter with
  sensitivity $1$ on a non-adaptively chosen stream. Since intermediate
  outputs (i.e., prices) from our counters will affect the future streams (i.e.,
  future bids) for other counters, this is not sufficient. In fact, it is
  possible to prove privacy for multiple counters running on adaptively chosen
  streams, where the privacy parameter depends only on the joint sensitivity of
  the streams and not on the number of streams. We show this result using largely
  routine arguments; details can be found in \cref{counter-details}.
\end{proof}
\else
With this lemma, the privacy proof is largely routine. We defer the details to
the full version.
\fi

\begin{theorem} \label{matching-privacy}
 $\pmatch(\alpha, \rho, \eps)$ is $\epsilon$-joint differentially private.
\end{theorem}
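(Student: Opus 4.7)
The plan is to combine the two tools already in place: \Cref{prices-privacy}, which says the full trajectory of prices and unsatisfied-bidder counts is $\epsilon$-differentially private in the standard sense, and the Billboard Lemma (\Cref{billboard}), which upgrades any such ``public signal plus private postprocessing'' protocol to joint differential privacy. So the argument should be short: identify a billboard signal, verify that each agent's allocation is a function of that signal and her own valuations alone, and invoke the two results.

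First I would define the billboard output $\cM(D)$ to be the sequence of published prices $\{p_j^{(t)}\}$ together with the readings of $\text{counter}(0)$ across all rounds $t = 1, \ldots, T$. By \Cref{prices-privacy}, $\cM$ is $\epsilon$-differentially private with respect to the change of a single bidder's valuation vector. Note that the halting time of $\pmatch$ is itself determined by $\text{counter}(0)$, so it is a deterministic function of the billboard, not extra side information.

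Next I would argue that each bidder $i$'s assignment $\mu(i)$ is a deterministic function $f_i$ of $(v_{i,\cdot}, \cM(D))$. Inspecting \pmatch, bidder $i$'s behavior in any proposal round is entirely determined by (i) the publicly known, fixed round-robin order; (ii) the current proposal prices $p$, which are read off the billboard via the counters; and (iii) bidder $i$'s own valuations $v_{i,\cdot}$ (used to pick $\argmax_j v_{ij} - p_j$ and to check $v_{i\mu(i)} - p_{\mu(i)} \le 0$). Whether bidder $i$ becomes unmatched on a given round depends only on the reading $c_{\mu(i)}$ of the billboard counter on her currently matched good compared against the stored $d_i$, which itself is an earlier counter reading. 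The halt condition depends only on $\text{counter}(0)$, which is on the billboard. Hence $\mu(i)$ can be computed from $\Pi_i D = v_{i,\cdot}$ and $\cM(D)$ alone, with no access to any other bidder's private data. (Ties in $\argmax$ are broken by a fixed public rule.)

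Finally I would invoke \Cref{billboard} with $\cM$ as above and $f_i(v_{i,\cdot}, \cM(D)) = \mu(i)$ to conclude that the joint distribution $\{\mu(i)\}_{i=1}^n$ is $\epsilon$-joint differentially private. I do not anticipate a real obstacle: the only thing to be careful about is making sure that no quantity used to determine $\mu(i)$ is recorded internally from some other bidder's private valuation without passing through a private counter. A quick line-by-line check of the \textbf{Propose}, \textbf{Bid}, and \textbf{CountUnsatisfied} blocks confirms that the only reads by bidder $i$'s simulated behavior are from $v_{i,\cdot}$, the $p_j$'s, the $c_j$'s, and her own stored state $(\mu(i), d_i)$, all of which are billboard-visible or $i$-local. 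This completes the proof.
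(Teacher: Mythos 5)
Your proof is correct and follows exactly the same route as the paper's: identify the price trajectory and unsatisfied-bidder counts as the billboard signal, invoke \Cref{prices-privacy} for its $\epsilon$-differential privacy, observe that each bidder's allocation is a deterministic function of that signal plus her own valuation, and conclude via \Cref{billboard}. Your line-by-line check of \textbf{Propose}, \textbf{Bid}, and \textbf{CountUnsatisfied} is slightly more explicit than the paper's sketch but adds nothing new in substance.
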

\begin{proof}[Sketch]
  Note that given the sequence of prices, counts of unsatisfied bidders, and the
  private valuation of any bidder $i$, the final allocation to that bidder can
  be computed by simulating the sequence of bids made by bidder $i$,
  since the bids are determined by the price when bidder $i$ is slotted to bid and by
  whether the auction has halted or not. Bidder $i$'s final allocation is
  simply the final item that $i$ bids on.  The prices and halting condition are
  computed as a deterministic function of the noisy counts, which are
  $\epsilon$-differentially private \iffull by \cref{counter-error}\else \fi.
  So, \cref{billboard} shows that \pmatch is $\epsilon$-joint differentially
  private.
\end{proof}

\subsection{Utility Analysis}

In this section, we compare the weight of the matching produced by
\pmatch with OPT. As an intermediate step, we first show that the
resulting matching paired with the prices computed by the algorithm forms
an approximate matching equilibrium. We next show that any such matching must be
an approximately max-weight matching.

The so-called \emph{first welfare theorem} from general equilibrium theory
guarantees that an exact (i.e., a $(0,0,0)$-) matching equilibrium gives an exact
maximum weight matching. Compared to this ideal, \pmatch loses
welfare in three ways. First, a $\rho$ fraction of bidders may end up
unsatisfied.  Second, the matched bidders are not necessarily matched to goods
that maximize their utility given the prices, but only to goods that do so
approximately (up to additive $\alpha$).  Finally, the auction sets aside part
of the supply to handle over-allocation from the noisy counters. This reserved
supply may end up unused, say, if the counters are accurate or actually
under-allocate. In other words, we compute an equilibrium of a market with reduced
supply, so our welfare guarantee holds if the supply $s$ is significantly
larger than the necessary reserved supply $m$.

The key performance metric is \emph{how much} supply is needed to achieve a
given welfare approximation in the final matching. On the one hand, we will show
later that the problem is impossible to solve privately if $s = O(1)$
(\cref{sec:lowerbounds}).  On the other hand, the problem is trivial if $s \geq
n$: agents can be simultaneously matched to their favorite good with no
coordination; this allocation is trivially both optimal and private. Our
algorithm will achieve positive results in the intermediate supply range, when
$s \geq \polylog(n)$.
% \sw{Do you mean $\geq$ here?}
% \jh{Yes. Fixed.}

\begin{theorem} \label{matching-welfare}
  Let $\alpha>0$, and $\mu$ be the matching computed by
  $\pmatch(\alpha/3, \alpha/3, \eps)$. Let $\OPT$ denote the weight of the
  optimal matching. Then, if the supply satisfies
  \[
    s \geq \frac{16 E' + 4}{\alpha} = O\left( \frac{1}{\alpha^3 \epsilon}
      \cdot \polylog \left( n, k, \frac{1}{\alpha}, \frac{1}{\gamma}
      \right)\right),
  \]
  and $n > s$,  the matching $\mu$ has social welfare at least $\OPT - \alpha
  n$ with probability $\geq 1-\gamma$, where
   % \begin{mathdisplayfull}
   \[
     E' = \frac{288\sqrt{2}}{\alpha^2 \epsilon}
     \left(\log\left(\frac{72n}{\alpha^2} \right)\right)^{5/2}
     \log\left(\frac{4k}{\gamma} \right).
   \]
   % \end{mathdisplayfull}%
\end{theorem}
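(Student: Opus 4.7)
The plan is to show that the allocation $\mu$ together with the prices $p$ output by \pmatch form an approximate matching equilibrium in the sense of Definition~\ref{matching-eq}, and then convert this into a welfare guarantee via an approximate first welfare theorem.

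First, I would apply Theorem~\ref{counter-error} to each of the $k+1$ private counters, allotting failure probability $\gamma/(k+1)$ per counter. A union bound yields that, with probability at least $1-\gamma$, every counter is accurate to within additive $E'$ at every time step; the explicit form of $E'$ in the theorem statement is just what this bound produces after plugging in $T = 8/(\alpha\rho)$ and $\eps' = \eps/(2T)$. Condition on this good event. Feasibility of $\mu$ then follows from the choice $m \geq 2E'+1$: the set of bidders who currently believe themselves matched to a given good is computed entirely from noisy counter reads, so its size is at most $(s-m)+2E' \leq s$.

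Next I would verify the three approximate equilibrium properties. For the approximate-favorite condition, observe that if bidder $i$ is currently matched to $\mu(i)$, the counter $c_{\mu(i)}$ has advanced by strictly less than $s-m$ since $i$'s bid, and price increments on any single good are spaced $s-m$ counter ticks apart, so $p_{\mu(i)}$ has risen by at most $\alpha$ since $i$'s bid while all other prices have only weakly increased. Combined with $i$ bidding on her then-favorite good, this gives $v_{i\mu(i)} - p_{\mu(i)} \geq v_{ij} - p_j - \alpha$ for all $j$. Bidders assigned to $\perp$ had nonpositive utility for every good at their bid time, and by price monotonicity this remains true at current prices, so they are $\alpha$-satisfied at $\perp$. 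The halting rule controls the remaining $\emptyset$-bidders: if the noisy increment of the unsatisfied-bidder counter in the halting round is below $\rho n - 2E'$, counter accuracy implies that the true increment---which equals the number of bidders just made unmatched---is below $\rho n$. For near-clearing, any good $j$ with $p_j > 0$ has received at least $s-m-E'$ true bids, retaining at least $s-m-2E'$ currently matched bidders, so at most $m + 2E' = O(E')$ copies are unsold. Thus $(p,\mu)$ is an $(\alpha, O(E'), \rho)$-approximate matching equilibrium.

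For the welfare conversion, let $\mu^*$ be the optimal allocation. Summing $v_{i\mu(i)} - p_{\mu(i)} \geq v_{i\mu^*(i)} - p_{\mu^*(i)} - \alpha$ over satisfied bidders, absorbing at most $\rho n$ unsatisfied bidders with a $-\rho n$ trivial loss, and using $\sum_i p_{\mu^*(i)} \leq s\sum_j p_j$ (feasibility of $\mu^*$) together with $\sum_i p_{\mu(i)} \geq (s-\beta)\sum_j p_j$ (near-clearing) yields
\[
\mathrm{SW}(\mu) \geq \OPT - \alpha n - \rho n - \beta \sum_j p_j.
\]
The crucial quantitative step is a \emph{tight} bound on $\sum_j p_j$: only over-demanded goods contribute, and each such good occupies at least $s-\beta$ matched bidders, so the number $K$ of over-demanded goods satisfies $K(s-\beta)\leq n$; since $p_j \leq 1$ we get $\sum_j p_j \leq n/(s-\beta)$. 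Substituting $\alpha,\rho = \alpha/3$ and $\beta = O(E')$ makes the total loss at most $(2\alpha/3)n + O(E')\,n/(s-O(E'))$, which is at most $\alpha n$ precisely when $s \geq (16E'+4)/\alpha$.

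The main obstacle is this last bound on $\sum_j p_j$: the naive estimate $\sum_j p_j \leq k$ would force $s$ to scale linearly in $k$, whereas the theorem only wants polylogarithmic dependence. The pigeonhole-style observation that each over-demanded good consumes $\Theta(s)$ of the feasibility budget is what turns the polylog-in-$k$ hypothesis on $s$ into an $\alpha n$ welfare loss. A secondary technical point is verifying that the algorithm actually halts within its $T$ rounds with high probability; this follows from a potential argument, since each non-halting round generates $\Omega(\rho n)$ new bids on real goods, while the total counter mass is capped at $O(ks/\alpha)$ by $p_j \leq 1$, and the hypothesis $n > s$ (together with the setting of $T$) ensures this capacity is exhausted in at most $T$ rounds.
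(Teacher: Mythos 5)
Your overall structure matches the paper's: establish an approximate matching equilibrium (feasibility, $\alpha$-approximate favorite goods, few unsatisfied bidders, near-clearing), then convert to a welfare bound via an approximate first welfare theorem. The welfare conversion itself is right, and you correctly identify the crucial counting step $\sum_j p_j \leq n/(s-\beta)$, driven by the fact that each over-demanded good absorbs $\Theta(s)$ of the $n$-bidder budget.

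However, there is a genuine gap in the part you dismiss as a ``secondary technical point'': showing that at termination at most $\rho n$ bidders remain unsatisfied (the paper's Lemma~\ref{matching-eq-rho}). You bound the total number of bids by ``$O(ks/\alpha)$ since $p_j \leq 1$,'' and then say that with $n > s$ the capacity is exhausted within $T$ rounds. But $T = 8/(\alpha\rho)$ is \emph{independent of $k$}, whereas your bound gives $T = O(ks/(\alpha\rho n)) = O(k/(\alpha\rho))$, which is far too large when $k$ is large. The paper's bound on the total number of bids is $O(n/\alpha)$, independent of $k$: under-demanded goods (final price $0$) can collectively receive at most $n$ bids because a bidder matched to one is never outbid, and over-demanded goods number only $O(n/(s-m-2E))$, each receiving $O((s-m)/\alpha)$ bids, for another $O(n/\alpha)$. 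This tighter total is exactly what makes $T = O(1/(\alpha\rho))$ suffice, which in turn keeps the per-round privacy budget $\epsilon' = \epsilon/2T$ large enough and the counter error $E'$ small enough to make the welfare bound go through. The irony is that you already make the needed observation---the number of over-demanded goods is $O(n/s)$---in the welfare step, but fail to apply it when counting bids. Without this, the chain $T$--$\epsilon'$--$E'$--supply threshold collapses and the stated $\polylog(k)$ dependence of $s$ is not justified.
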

% \ar{Is $\beta$ a parameter that has been given a numerical value? I'm not sure
%   how to interpret $s \geq 6\beta/\alpha$. Do we mean $E$ here instead of
%   $\beta$?}
% \sw{hmm, the denominator should be $\alpha/3$ now}
% \jh{Unsure. Is this fixed now or still open?}
% \sw{should be fixed now, including the constants}
\begin{remark}
  Our approximation guarantee here is \emph{additive}. Later in this section, we
  show that if we are in the \emph{unweighted} case---$v_{ij} \in \{0,1\}$---we
  can find a matching $\mu$ with welfare at least $(1-\alpha)\mathrm{OPT}$.
  This \emph{multiplicative} guarantee is unusual for a differentially private
  algorithm.
\end{remark}

The proof follows from the following lemmas.
\iffull\else (We defer some proofs to the full version.) \fi

\begin{lemma} \label{matching-eq-alpha}
  We call a bidder who wants to continue bidding {\em unsatisfied}; otherwise
  bidder $i$ is {\em satisfied}. At termination of \shortbreak $\pmatch(\alpha,\rho,\eps)$,
  all satisfied bidders $i$ are matched to a good $\mu(i)$ such that
  \[
    v_{i,\mu(i)} - p_{\mu(i)} \geq \max_j (v_{i,j} - p_j) - \alpha .
  \]
\end{lemma}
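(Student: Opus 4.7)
The plan is to compare the state (prices and allocations) at the time of bidder $i$'s last bid with the state at termination, and exploit two monotonicity facts: prices never decrease, and the noisy counters are monotone non-decreasing.

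First I would fix a satisfied bidder $i$. If $\mu(i)\in[k]$, let $p^{\text{old}}_j$ denote the price of good $j$ just before bidder $i$'s last bid (the moment inside the \textbf{Propose} block when $i$ picked $\mu(i)$ as an argmax), and let $p^{\text{new}}_j$ denote the price at termination. By the rule in \textbf{Propose}, $v_{i,\mu(i)}-p^{\text{old}}_{\mu(i)}\geq v_{i,j}-p^{\text{old}}_j$ for every good $j$. Since each $\text{counter}(j)$ is monotone and the price rule in \textbf{Bid} only ever \emph{adds} $\alpha$ to a price, we have $p^{\text{new}}_j\geq p^{\text{old}}_j$ for every $j$.

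The heart of the argument is showing that $p^{\text{new}}_{\mu(i)}\leq p^{\text{old}}_{\mu(i)}+\alpha$. Here I would use the satisfaction condition: because the \textbf{CountUnsatisfied} step did not reset $\mu(i)$ to $\emptyset$, we have $c_{\mu(i)}-d_i<s-m$ at termination, where $d_i$ was the value of the counter saved just before $i$'s bid. The price of $\mu(i)$ is incremented by $\alpha$ exactly when $c_{\mu(i)}$ crosses a threshold of the form $(p/\alpha+1)(s-m)$, and these thresholds are spaced $s-m$ apart. Since the counter is non-decreasing and its total growth since bidder $i$'s last bid is strictly less than $s-m$, at most one such threshold can be crossed, so the price can increment at most once. (I would spell out the edge case where bidder $i$'s own bid triggers the increment just after $d_i$ is saved; this is still a single increment of $\alpha$.)

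Combining the three inequalities yields
\[
v_{i,\mu(i)}-p^{\text{new}}_{\mu(i)} \geq v_{i,\mu(i)}-p^{\text{old}}_{\mu(i)}-\alpha \geq v_{i,j}-p^{\text{old}}_j-\alpha \geq v_{i,j}-p^{\text{new}}_j-\alpha,
\]
which is exactly the claimed bound. Finally I would dispatch the remaining case where a satisfied bidder has $\mu(i)=\perp$ (set in \textbf{Propose}): at the time $i$ picked $\perp$ we had $v_{i,j}-p^{\text{old}}_j\leq 0$ for every $j$, and monotonicity of prices preserves this, so interpreting $v_{i,\perp}=p_\perp=0$ gives the inequality trivially. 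The step I expect to be the main obstacle is the counter-threshold argument bounding $p^{\text{new}}_{\mu(i)}-p^{\text{old}}_{\mu(i)}$ by $\alpha$, since it relies on the precise interplay between the threshold spacing $s-m$ and the satisfaction condition $c_{\mu(i)}-d_i<s-m$, together with the (non-trivial) fact that the Binary Mechanism produces a monotone count.
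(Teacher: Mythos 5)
Your proof is correct and follows essentially the same approach as the paper's: compare the state at bidder $i$'s last bid with the state at termination, use that $\mu(i)$ was an exact argmax at bid time, and use monotonicity of prices plus the fact that $i$ remaining matched bounds the price increase of $\mu(i)$ by $\alpha$. The paper's own proof is considerably terser — it simply asserts ``Since $i$ remained matched to $j^*$, its price could only have increased by at most $\alpha$'' without spelling out the counter-threshold calculation — so your version, which makes explicit that the satisfaction condition $c_{\mu(i)}-d_i<s-m$ combined with the threshold spacing $s-m$ forces at most one $\alpha$-increment, is a faithful and more careful elaboration of the same argument. Your handling of the $\mu(i)=\perp$ case is an extra detail the paper omits but is consistent with its conventions.
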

\iffull
\begin{proof}
  Fix any satisfied bidder $i$ matched to $j^* = \mu(i)$. At the time that
  bidder $i$ last bid on $j^*$, by construction, $v_{ij^*} - p_{j^*} \geq
  \max_{j}(v_{ij}-p_j)$.  Since $i$ remained matched to $j^*$, its price could
  only have increased by at most $\alpha$, and the prices of other goods $j \neq
  j^*$ could only have increased.  Hence, at completion of the algorithm,
  \begin{mathdisplayfull}
    v_{i,\mu(i)} - p_{\mu(i)} \geq \max_{j}(v_{ij}-p_j) - \alpha
  \end{mathdisplayfull}%
  for all matched bidders $i$.
\end{proof}
\fi

\begin{lemma} \label{matching-eq-beta}
  Assume all counters have error at most $E$ throughout the run of
  $\pmatch(\alpha, \rho, \eps)$. Then the number of bidders assigned to any good
  is at most $s$ and each over-demanded good clears except for at most $\beta$
  supply, where
  \[
    \beta = 4 E + 1=  O\left(\frac{1}{\alpha\rho \epsilon}\cdot \polylog \left(
        \frac{1}{\alpha}, \frac{1}{\rho}, \frac{1}{\gamma},k,n \right)\right).
  \]
\end{lemma}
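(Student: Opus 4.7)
The plan is to reduce the lemma to a careful bookkeeping argument about the counter $c_j$ for each good $j$. Let $c_j^*$ denote the final reading of counter $j$ and let $N_j$ denote the true total number of bids on $j$ during the run, so by assumption $|c_j^* - N_j| \leq E$. For the $n$-th actual bid on $j$, let $d^{(n)}$ denote the counter value saved by that bidder (i.e., the reading of $c_j$ immediately after the bid was registered); again $|d^{(n)} - n| \leq E$. By inspection of the algorithm, the set of bidders tentatively matched to $j$ at termination is exactly those whose saved $d_i$ equals $d^{(n)}$ for some $n$ with $d^{(n)} > c_j^* - (s-m)$: the algorithm keeps bidder $i$ matched to $j$ while $c_j - d_i < s - m$, and by monotonicity of the counter this holds at termination iff $c_j^* - d_i < s - m$. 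Moreover each bidder contributes at most one such bid event, since a repeat bid would require being unmatched in the interim, which forces $c_j \geq d^{(n)} + (s - m) > c_j^*$ at some intermediate time, contradicting monotonicity. Thus matched bidders are in bijection with the indices $n \in \{1,\ldots,N_j\}$ satisfying $d^{(n)} > c_j^* - (s-m)$.

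Given this correspondence, both bounds reduce to counting such indices. For feasibility, substituting $d^{(n)} \leq n + E$ and $c_j^* \geq N_j - E$ into $d^{(n)} > c_j^* - (s - m)$ yields $n > N_j - (s - m) - 2E$, so at most $(s - m) + 2E$ indices qualify. With $m = 2E + 1$ this is $s - 1 < s$, giving the first claim. For clearing of an over-demanded good, note that $p_j^* > 0$ means the price was incremented at least once, which by the price-update rule requires $c_j^* \geq s - m$ and hence $N_j \geq s - m - E$. Substituting $d^{(n)} \geq n - E$ and $c_j^* \leq N_j + E$ in the opposite direction, the inequality $d^{(n)} > c_j^* - (s-m)$ is \emph{implied by} $n > N_j - (s - m) + 2E$, so at least $(s - m) - 2E = s - 4E - 1$ bidders are matched to $j$. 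The unclaimed supply on any over-demanded good is therefore at most $s - (s - 4E - 1) = 4E + 1$, which is the claimed $\beta$.

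The main conceptual hurdle is the opening bijection between matched bidders and bid events in the counter window $(c_j^* - (s-m),\, c_j^*]$; once this is pinned down, everything else is routine translation between counter readings and actual bid counts via the $E$-error bound. The stated polylogarithmic form for $\beta$ then follows by substituting $E = O\!\left(\frac{1}{\epsilon'}(\log(nT))^{5/2}\log(k/\gamma)\right)$ with $T = 8/(\alpha\rho)$ and $\epsilon' = \epsilon/(2T)$, which absorbs the $T$ factor into $1/(\alpha\rho\epsilon)$ and the $\log(nT)$ factor into the stated polylog.
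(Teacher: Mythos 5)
Your proof is correct and follows essentially the same approach as the paper: both arguments count the bidders matched to each good by examining the window of width $s-m$ below the final counter reading and translating between the noisy counter and the true bid count using the $E$-error bound, arriving at the same $4E+1$ bound. Your presentation—with an explicit bijection between matched bidders and qualifying bid indices, and a monotonicity argument ruling out double-counting—is slightly more careful than the paper's informal ``bidders who bid after time $t$'' phrasing, and your reading of when $d_i$ is saved (just after the bid rather than just before) is off by one, but neither discrepancy affects the substance or the asymptotic form of $\beta$.
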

\begin{proof}
Since the counter for each under-demanded good never exceeds $s-m$, we know that
each under-demanded good is matched to no more than $s-m+E < s$ bidders.
Consider any counter $c$ for an over-demanded good.  Let $t$ be a time step such
that
\begin{mathdisplayfull}
 c(nT) - c(t + 1) \leq s - m < c(nT) - c(t) ,
\end{mathdisplayfull}%
where $c(t)$ denotes the output of the counter at time $t$.
Note that the bidders who bid after time $t$ are the only bidders matched to
this good  at time $nT$. Let $\sigma$ be the true bid stream for this good and
let the sum of bids in $\sigma$ up to time $t$ be $h(\sigma, t)$. Then,
the total number of bidders allocated to this good at time $nT$ is
\begin{align*}
 h(\sigma, nT) - h(\sigma, t) & \leq h(\sigma, nT) - h(\sigma, t + 1) + 1 \\
 & \leq (c(nT) + E) -  (c(t + 1) - E) + 1 \\
 & \leq s - m + 2E + 1 =  s.
\end{align*}
  Similarly, we can lower bound the number of bidders allocated to this good:
  \begin{align*}
    h(\sigma, nT) - h(\sigma, t)
    & = (h(\sigma,nT) - c(nT)) + (c(nT) - c(t)) + (c(t) - h(\sigma, t)) \\
    & > s - m - 2E > s - 4E - 1.
  \end{align*}
  Therefore, every over-demanded good clears except for at most $\beta =
  4E + 1$ supply, which gives
  \begin{align*}
    \beta &=  \frac{16\sqrt{2}}{\alpha \rho \epsilon} \left(\log\left(\frac{6n}{\alpha\rho} \right)\right)^{5/2}
    \log\left(\frac{4k}{\gamma} \right) + 1 \\
    &=
    O\left(\frac{1}{\alpha\rho \epsilon}\cdot \polylog \left(
        \frac{1}{\alpha}, \frac{1}{\rho}, \frac{1}{\gamma},k,n \right)\right).
  \end{align*}
  \ar{Let's put the actual dependence on all terms in, rather than writing polylog.
    This should also go in the theorem statement.}
\end{proof}

\begin{lemma} \label{matching-eq-rho}
  Assume all counters have error at most $E$ throughout the run of
  $\pmatch(\alpha, \rho, \eps)$. Then at termination all but a $\rho$ fraction of
  bidders are satisfied, so long as $s \geq 8 E + 1$ and $n \geq 8E/\rho$.
\end{lemma}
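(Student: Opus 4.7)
The plan is to first pin down which bidders are ``unsatisfied'' at termination and tie their count to the trajectory of $\text{counter}(0)$, then apply the halt condition together with the counter-accuracy assumption. Observe that after each \textbf{Propose} step every bidder is in some state $\mu(i) \in \{1,\ldots,k\} \cup \{\perp\}$, since any bidder in state $\emptyset$ at the start of a \textbf{Propose} call is forced either to bid on her best good or to take $\perp$. The only action that ever returns a bidder to state $\emptyset$ is in \textbf{CountUnsatisfied}, and that action simultaneously feeds a $1$ into $\text{counter}(0)$. Hence the set of bidders in state $\emptyset$ when the halt fires is exactly the set of bidders whose status was flipped to $\emptyset$ during the final \textbf{CountUnsatisfied} call, and after the cleanup $\mu(i) := \perp$ these are exactly the unsatisfied bidders of the lemma. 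Their count equals the \emph{true} increment of $\text{counter}(0)$ in that final round.

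I would then finish by combining this identification with the counter-error hypothesis $|c(t) - c_\sigma(t)| \leq E$, which forces the noisy and true increments of $\text{counter}(0)$ in any single round to differ by at most $2E$. Since the halt condition triggers precisely when the noisy increment is less than $\rho n - 2E$, the true increment---and hence the number of unsatisfied bidders at termination---is strictly less than $\rho n$, as required.

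The two numerical assumptions are used to make the rest go through. The assumption $n \geq 8E/\rho$ gives $\rho n \geq 8E$, so the halt threshold $\rho n - 2E$ is positive and meaningful, and on any round where the halt does \emph{not} trigger the true number of newly-outbid bidders is at least $\rho n - 4E \geq \rho n / 2$. The assumption $s \geq 8E + 1$ yields $s - m \geq 6E$, so the per-good counter noise is dominated by the $s-m$ threshold used inside \textbf{Bid} and the bid-counting estimates from \Cref{matching-eq-beta} survive the noise. The main obstacle, and the only step that requires more than unfolding definitions, is confirming that the halt actually fires within the $T = 8/(\alpha\rho)$ allotted rounds: combining the per-round lower bound of $\rho n/2$ newly-outbid bidders on any non-halt round with the total-bid ceiling of order $k(s-m)/\alpha$ (coming from the price cap $p_j \leq 1$ and the rule that each $\alpha$-price-increment absorbs $s-m$ bids) forces termination in at most $T$ rounds, but the bookkeeping here needs care very similar to the computation in \Cref{matching-eq-beta}.
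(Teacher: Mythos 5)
The high-level structure of your proposal matches the paper: the early-halt case is handled by the $\pm E$ accuracy of $\text{counter}(0)$ (noisy increment $< \rho n - 2E$ implies true increment $< \rho n$), and the remaining work is to show the halt must fire within the $T = 8/(\alpha\rho)$ Propose rounds. But there is a genuine gap in the step you flag as ``bookkeeping,'' and it is not just bookkeeping.

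Your proposed total-bid ceiling of order $k(s-m)/\alpha$ is the wrong quantity: it grows with the number of good \emph{types} $k$, whereas the paper's bound is $B \leq 3n/\alpha$, which is independent of $k$. This difference is essential. With your ceiling, combining a lower bound of roughly $\rho n/2$ true bids per non-halt round with $R \cdot (\rho n/2) \leq k(s-m)/\alpha$ gives $R \lesssim k(s-m)/(\alpha\rho n)$, which exceeds $T = 8/(\alpha\rho)$ whenever $k(s-m) \gg n$ --- a regime the lemma explicitly allows (there is no assumption relating $k$ to $n$). The paper avoids this by observing that only goods receiving a positive final price can receive more than $s-m+E$ bids; since each over-demanded good absorbs at least $s-m-2E$ \emph{distinct} matched bidders from a pool of only $n$, there are at most $n/(s-m-2E)$ over-demanded goods. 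Each of these absorbs at most $(s-m+2E)/\alpha$ bids, and the under-demanded goods collectively absorb at most $n$ bids total (a bidder matched to an under-demanded good is never outbid). With $m = 2E+1$ and $s \geq 8E+1$ this yields $B \leq 3n/\alpha$, which is what makes $T = 8/(\alpha\rho)$ the right stopping round. Your argument as stated does not supply this $k$-independent bound.

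A secondary, more minor point: your ``$\rho n/2$ newly-outbid per non-halt round'' translation into bids elides the fact that a bidder who is unsatisfied at the start of a round may decline to bid on her turn (because prices rose in the interim and no good is worth anything to her). The paper handles this by noting such a bidder drops out permanently, so the number of such declines is at most $n$ across the entire run, giving $B \geq R(\rho n - 4E) - n$ rather than $B \geq R(\rho n - 4E)$. This $-n$ term matters for the constants in $T$.
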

\begin{proof}
  First, we show that the total number of bids made over the course of the
  algorithm is bounded by $3n/\alpha$.
  %\ar{Do we ever define what a ``bid'' is? We should add some annotation to the
  % algorithm to specify when a ``bid'' is said to have been made.}
  We account separately for the under-demanded goods (those with price 0 at the
  end of the auction) and the over-demanded goods (those with positive price).
  For the under-demanded goods, since their prices remain 0 throughout the
  algorithm, their corresponding noisy counters never exceeded $(s-m)$.
  Since no bidder is ever unmatched after having been matched to an
  under-demanded good, the set of under-demanded goods can receive at most one bid
  from each agent; together the under-demanded goods can receive at most $n$
  bids.
  % \ar{Explain why. Also, similar to ``bid'', we should annotate the algorithm to
  %   specify when a bidder is ``rejected''.}
  % \sw{ Bid is now written as a function}

  Next, we  account for the over-demanded goods. Note that the bidders matched to
  these goods are precisely the bidders who bid within $s- m$ ticks of the final
  counter reading. Since the counter has error bounded by $E$ at each time step,
  this means at least $s - m - 2E$ bidders end up matched to each over-demanded
  good.  Since no agent can be matched to more than one good there can be at
  most $n/(s-m-2E)$ over-demanded goods in total.

  Likewise, we can account for the number of price increases per over-demanded
  good. Prices never rise above $1$ (because any bidder would prefer to be
  unmatched than to be matched to a good with price higher than $1$).  Therefore,
  since prices are raised in increments of $\alpha$, the price of every
  over-demanded good increases at most $1/\alpha$ times.  Since there can be at
  most $(s - m + 2E)$ bids between each price update (again, corresponding to $s
  - m$ ticks of the counter), the total number of bids received by all of the
  over-demanded goods in total is at most
  \[
    \frac{n}{s-m-2E}\cdot \frac{1}{\alpha}\cdot (s-m+2E).
  \]
  Since each bid is either on an under or over-demanded good, we can upper
  bound the \emph{total} number of bids $B$ by
  \[
    B \leq n + \frac{n}{\alpha} \left( \frac{s - m + 2E}{s - m - 2E}\right) =
    \frac{n}{\alpha} \left(\alpha + \frac{s- m +2E}{ s-m-2E}\right).
  \]
  The algorithm sets the reserved supply to be $m = 2 E + 1$ and by assumption,
  we have $s \geq 8E+1$. Since we are only interested in cases where $\alpha <
  1$, we conclude
  \begin{equation} \label{match-bid-ub}
    B \leq n +   \frac{n}{\alpha} \left( \frac{s - m + \alpha}{s - m -
        \alpha} \right) \leq \frac{3n}{\alpha}.
  \end{equation}

  Now, consider the halting condition. Either the algorithm
  halts early, or it does not. We claim that at termination, at most $\rho n $
  bidders are unsatisfied. The algorithm halts early if at any round of
  \textbf{CountUnsatisfied}, $\text{counter}_0$ (which counts the number of unsatisfied
  bidders) increases by less than $\rho n - 2E$, when there are at most $\rho n
  - 2 E + 2E = \rho n$ unsatisfied bidders.

  Otherwise, suppose the algorithm does not halt early.  At the start of each
  round there must be at least $\rho n - 4E$ unsatisfied bidders. Not all of
  these bidders must bid during the \textbf{Propose} round since price increases
  while they are waiting to bid might cause them to no longer demand any item,
  but this only happens if bidders prefer to be unmatched at the new prices.
  Since prices only increase, these bidders remain satisfied for the rest of the
  algorithm.  If the algorithm runs for $R$ rounds and there are $B$ true bids,
  \begin{mathdisplayfull}
    B \geq R (\rho n - 4E) - n.
  \end{mathdisplayfull}%
  Combined with our upper bound on the number of bids (\cref{match-bid-ub}) and
  our assumption $\rho n \geq 8E$, we can upper bound the number of rounds $R$:
  \[
    R \leq \left(\frac{3n}{\alpha} + n\right) \cdot \left( \frac{1}{\rho n - 2E}
    \right) \leq \left(\frac{4n}{\alpha}\right) \left(\frac{2}{\rho n}\right) =
    \frac{8}{\alpha \rho} := T .
  \]
  Thus, running the algorithm for $T$ rounds leads to all but $\rho n$ bidders
  satisfied.
\end{proof}

\begin{lemma} \label{matching-acc}
  With probability at least $1 - \gamma$, $\pmatch(\alpha, \rho, \eps)$ computes an
  $(\alpha, \beta, \rho)$-matching equilibrium, where
  \[
    \beta = 4E+1 = O\left(\frac{1}{\alpha\rho \epsilon}\cdot \polylog \left(
        \frac{1}{\alpha}, \frac{1}{\rho}, \frac{1}{\gamma},k,n \right)\right)
  \]
  so long as $s \geq 8E + 1 \mbox{ and } n \geq 8E/\rho$.
\end{lemma}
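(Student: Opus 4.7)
The plan is to observe that \Cref{matching-acc} is essentially a bookkeeping combination of the three preceding lemmas (\Cref{matching-eq-alpha}, \Cref{matching-eq-beta}, and \Cref{matching-eq-rho}), each of which corresponds to one of the three conditions in the definition of an $(\alpha,\beta,\rho)$-matching equilibrium. Two of these lemmas are conditional on the event that every private counter used by $\pmatch$ is accurate to within $E$ throughout the entire run, so the only genuinely new step is to show that this event holds with probability at least $1-\gamma$ via a union bound.

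First, I would invoke \Cref{counter-error} on each of the $k+1$ counters instantiated by $\pmatch$, namely the $k$ bid counters $\text{counter}(j)$ for $j=1,\dots,k$ and the unsatisfied-bidder counter $\text{counter}(0)$. Each is run over a stream of length $nT$ with privacy parameter $\epsilon'=\epsilon/(2T)$, so \Cref{counter-error} guarantees that each one is $(E,\gamma')$-useful with $\gamma'$ chosen so that $\ln(2/\gamma') = \ln(4k/\gamma)$, i.e.\ $\gamma' = \gamma/(2k)$; this is exactly why the algorithm sets $E=\frac{2\sqrt{2}}{\epsilon'}(\log nT)^{5/2}\log(4k/\gamma)$. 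A union bound over the $k+1 \leq 2k$ counters then shows that the event
\begin{mathdisplayfull}
\mathcal{E} := \{\text{every counter has error at most } E \text{ at every time step}\}
\end{mathdisplayfull}
occurs with probability at least $1-\gamma$.

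Next, conditioning on $\mathcal{E}$, I combine the three previous lemmas to verify the three conditions of \Cref{matching-eq} in turn. \Cref{matching-eq-alpha} is unconditional and implies that every satisfied bidder is matched to a good within additive $\alpha$ of her favorite at the terminal prices; together with \Cref{matching-eq-rho}, which on $\mathcal{E}$ (and under the hypotheses $s\geq 8E+1$ and $n\geq 8E/\rho$) ensures at most $\rho n$ unsatisfied bidders, this yields condition~(1). \Cref{matching-eq-beta}, again conditional on $\mathcal{E}$, delivers both condition~(2) (no good is over-allocated beyond its true supply $s$) and condition~(3) (each over-demanded good clears up to at most $\beta = 4E+1$ units), with the quantitative bound on $\beta$ stated in \Cref{matching-acc} following directly from the formula for $E$.

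The main obstacle is simply setting the per-counter failure probability correctly so the union bound swallows all $k+1$ counters within the target $\gamma$, and checking that the supply and population hypotheses in the three component lemmas are implied by the single hypothesis $s\geq 8E+1$ and $n\geq 8E/\rho$ stated here; both are straightforward. No calculation beyond what appears in the preceding lemmas is needed.
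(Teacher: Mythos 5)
Your proposal is correct and follows essentially the same route as the paper: invoke \Cref{counter-error} to make each counter $(E,\cdot)$-useful, union-bound over all counters to get the global accuracy event with probability at least $1-\gamma$, and then read off the three conditions of \Cref{matching-eq} from \Cref{matching-eq-alpha,matching-eq-beta,matching-eq-rho}. The only cosmetic difference is in the union-bound accounting --- you assign failure probability $\gamma/(2k)$ uniformly to all $k+1$ counters and use $k+1\leq 2k$, whereas the paper gives counter$(0)$ budget $\gamma/2$ and each good counter $\gamma/(2k)$; both yield $E=\lambda_2=\frac{2\sqrt{2}}{\epsilon'}(\log nT)^{5/2}\log(4k/\gamma)$ and total failure $\leq\gamma$.
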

\iffull
\begin{proof}
  By \cref{counter-error}, $\text{counter}_0$ is $\left( \lambda_1,
    \gamma/2\right)$-useful, and each of the $k$ good counters is
  $\left(\lambda_2 , \gamma/2  \right)$-useful, where
\[
  \lambda_1 = \frac{2\sqrt{2}}{\epsilon'}
  (\log{nT})^{5/2}\log\left(\frac{4}{\gamma} \right)
  \quad \text{and} \quad
  \lambda_2 = \frac{2\sqrt{2}}{\epsilon'} (\log{nT})^{5/2}
  \log\left(\frac{4k}{\gamma} \right).
\]
 Since we set $E = \lambda_2 > \lambda_1$, all counters are $(E,
 \gamma/2)$-useful, and thus with probability at least $1 - \gamma$, all
 counters have error at most $E$. The theorem then follows by
 \cref{matching-eq-alpha,matching-eq-beta,matching-eq-rho}.

 \end{proof}
\fi

With these lemmas in place, it is straightforward to prove the welfare theorem
(\cref{matching-welfare}).
\begin{proof}[\cref{matching-welfare}]
  By \cref{matching-acc}, $\pmatch(\alpha/3, \alpha/3, \epsilon)$
  calculates a matching $\mu$ that is an $(\alpha/3, \beta,
  \alpha/3)$-approximate matching equilibrium with probability at least
  $1-\gamma$, where $\beta = 4E' + 1$.  Let $p$ be the prices at the end of the
  algorithm, and $S$ be the set of satisfied bidders.  Let $\mu^*$  be the
  optimal matching achieving welfare $\sum_{i=1}^n v_{i,\mu^*(i)} =
  \mathrm{OPT}$.  We know that $|S|\geq (1-\alpha/3)n$ and
  \[
    \sum_{i\in S} (v_{i \mu(i)} - p_{\mu(i)}) \geq \sum_{i\in
    S}(v_{i\mu^*(i)} - p_{\mu^*(i)}) - \alpha|S|/3.
  \]
  Let $N^*_j$ and $N_j$ be the number of goods of type $j$ matched in $\mu^*$
  and $\mu$ respectively, and let $G$ be the set of over-demanded goods at prices $p$.

  Since each over-demanded good clears except for at most $\beta$ supply, and
  since each of the $n$ agents can be matched to at most one good, we know that
  $|G|\leq n/(s-\beta)$. Since the true supply in $\OPT$ is at most $s$, we also
  know that $N^*_j - N_j \leq \beta$ for each over-demanded good $j$. Finally, by
  definition, under-demanded goods $j$ have price $p_j = 0$. So,
  \begin{align*}
    \sum_{i \in S} v_{i\mu^*(i)} - \sum_{i \in S} v_{i\mu(i)}
    & \leq  \sum_{i\in S} p_{\mu^*(i)} - \sum_{i\in S} p_{\mu(i)} + \alpha|S|/3 \\
    & = \sum_{j \in G} p_j (N^*_j - N_j) + \alpha |S|/3 \\
    & \leq \sum_{j \in G}  \beta  + \alpha |S|/3 \leq \frac{n \beta}{s-\beta} +
    \alpha |S|/3.
  \end{align*}
  If $s \geq 4\beta/\alpha$, the first term is at most $\alpha n/3$. Finally,
  since all but $\alpha n/3$ of the bidders are matched with goods in $S$, and
  their valuations are upper bounded by $1$, so
  \[
    \sum_i v_{i\mu(i)} - \sum_{i} v_{i\mu^*(i)} \leq \alpha n/3  + \alpha|S|/3 +
    \alpha n/3 \leq \alpha n .
  \]
  Unpacking $\beta$ from \cref{matching-acc}, we get the stated bound on supply.
\end{proof}

\subsection{Multiplicative Approximation to Welfare}
In certain situations, a slight variant of \pmatch (\cref{alg:matching}) can
give a multiplicative welfare guarantee. In this section, we will assume that
the value of the maximum weight matching $\OPT$ is known; it is often possible
to privately estimate this quantity to high accuracy. Our algorithm is \pmatch
with a different halting condition: rather than count the number of unmatched
bidders each round, count the number of bids per round. Once this count drops
below a certain threshold, halt the algorithm.

More precisely, we use a function $\mathbf{CountBids}$ (\cref{alg:count-bids})
in place of \longbreak $\mathbf{CountUnsatisfied}$ in \cref{alg:matching}.

\begin{algorithm}[h!]
     \caption{Modified Halting Condition $\mathbf{CountBids}$}
     \begin{algorithmic}\label{alg:count-bids}
       \STATE{\textbf{CountBids:}}
       \FORALL{bidders $i$}
       \IF{$\mu(i) \neq \perp$ \text{ and } $c_{\mu(i)} - d_i \geq s - m$}
       \STATE{Let $\mu(i) := \emptyset$}
       \ENDIF
       \IF{$i$ bid this round}
       \STATE{Feed $1$ to $\text{counter}_0$.}
       \ENDIF
       \STATE{\textbf{else}  Feed $0$ to $\text{counter}_0$.}
       \ENDFOR
       \IF{$\text{counter}_0$ increases by less than  $\frac{\alpha
           OPT}{2\lambda} - 2E$}
       \STATE{Halt; For each $i$ with $\mu(i) = \emptyset$, let $\mu(i) =
         \perp$}
       \ENDIF
     \end{algorithmic}
   \end{algorithm}

\begin{theorem} \label{mult-acc}
  Suppose bidders have valuations $\{ v_{ij} \}$ over goods such that
  \begin{mathdisplayfull}
    \min_{ v_{ij} > 0 } v_{ij} \geq \lambda.
  \end{mathdisplayfull}%
  Then \cref{alg:matching}, with
  \begin{mathdisplayfull}
    T = \fullfrac{24}{\alpha^2}
  \end{mathdisplayfull}%
  rounds, using stopping condition $\mathbf{CountBids}$ (\cref{alg:count-bids})
  in place of \longbreak $\mathbf{CountUnsatisfied}$ and stopped once the total bid
  counter increases by less than
  \begin{mathdisplayfull}
    \fullfrac{\alpha \OPT}{2 \lambda} - 2E
  \end{mathdisplayfull}
  bids in a round, satisfies $\epsilon$-joint differential  privacy
  and outputs a matching that has welfare at least $O((1 -
  \alpha/\lambda)\OPT)$, so long as
  \[
    s = \Omega \left( \frac{1}{\alpha^3 \epsilon} \cdot \polylog \left( n, k,
        \frac{1}{\alpha}, \frac{1}{\gamma} \right) \right)
\iffull\else
  \]
  \[
\fi
    \text{and} \qquad
    \OPT = \Omega \left( \frac{\lambda}{ \alpha^3 \epsilon} \cdot \polylog
      \left( n, k, \frac{1}{\alpha}, \frac{1}{\gamma} \right) \right).
  \]
  % \jh{NOTE: need bidders to not bid on valuation $0$ stuff. Maybe we should just
  %   make that a requirement in the main algorithm? Only bid if strictly better
  %   than not bidding.}
\end{theorem}
\iffull
\begin{proof}
  Privacy follows exactly like \cref{matching-privacy}. We first show that at
  termination, all but $\alpha \OPT /\lambda$ bidders are matched to an
  $\alpha$-approximate favorite item. The analysis is very similar to
  \cref{matching-acc}.  Note that every matched bidder is matched to an
  $\alpha$-approximate favorite good, since it was an exactly favorite good at
  the time of matching, and the price increases by at most $\alpha$. Thus, it
  remains to bound the number of unsatisfied bidders at termination.

  Condition on all counters having error bounded by $E$ at all time steps; by
  \cref{counter-error} and a union bound over counters, this happens with
  probability at least $1 - \gamma$. Like above, we write $s' = s - m$ for the
  effective supply of each good. Let us first consider the case where the
  algorithm stops early. If the total bid counter changes by less than
  $\frac{\alpha \OPT}{2\lambda} - 2E$, the true number of bids that round is at
  most
  \[
    Q = \frac{\alpha \OPT}{2\lambda}.
  \]

  We will upper bound the number of unsatisfied bidders at the end of the round.
  Note that the number of unsatisfied bidders at the end of the round is the
  number of bidders who have been rejected in the current round. Suppose there
  are $N$ goods that reject bidders during this round. The total count on these
  goods must be at least
  \[
    (s' - 2E) \cdot N - Q
  \]
  at the start of the round, since each counter will increase by at most $2E$
  due to error, and there were at most $Q$ bids this round. By our conditioning,
  there were at least
  \[
    (s' - 2E) \cdot N - Q - 2EN
  \]
  bidders matched at the beginning of the round. Since bidders are only matched
  when their valuation is at least $\lambda$, and the optimal weight matching is
  $\OPT$, at most $\frac{OPT}{\lambda}$ bidders can be matched at any time.
  Hence,
  \[
    N \leq \left( \frac{\OPT}{\lambda} + Q \right) \cdot \frac{1}{s' - 4E}.
  \]
  Then, the total number of bidders rejected this round is at most $2EN + Q$.
  Simplifying,
  \begin{align*}
    2EN + Q & \leq \frac{2E}{s' - 4E} \cdot \left( \frac{\OPT}{\lambda} + Q \right) + Q \\
            & \leq \left( \frac{6E}{s' - 4E}\right) \left(\frac{\OPT}{\lambda} \right)
            + \frac{\alpha \OPT}{2\lambda}.
  \end{align*}
  To make the first term at most $\frac{\alpha \OPT}{2\lambda}$, it suffices to
  take
  \begin{align*}
    \frac{6E}{s' - 4E} & \leq \frac{\alpha}{2} \\
    s'                 & \geq \frac{12 E}{\alpha} + 4E \\
    s                  & \geq \frac{12 E}{\alpha} + 6E + 1,
  \end{align*}
  or $s \geq 18E/\alpha$. In this case, the algorithm terminates with at most
  $\frac{\alpha \OPT}{\lambda}$ unsatisfied bidders, as desired.

  On the other hand, suppose the algorithm does not terminate early, the bid
  count increasing by at least $Q - 2E$ every round. By our conditioning, this
  means there are at least $Q - 4E$ bids each round; let us bound the number of
  possible bids.

  Since bidders only bid if they have valuation greater than $\lambda$ for a
  good, and since the maximum weight matching has total valuation $\OPT$, at
  most $\OPT/\lambda$ bidders can be matched. Like before, we say goods are
  under-demanded or over-demanded: they either have final price $0$, or positive
  final price.

  There are at most $\OPT/\lambda$ true bids on the goods of the first type;
  this is because bidders are never rejected from these goods. Like before,
  write $s' = s - m$. Each counter of a over-demanded good shows $s'$ people
  matched, so at least $s' - 2E$ bidders end up matched.  Thus, there are at
  most
  \begin{mathdisplayfull}
    \fullfrac{\OPT}{\lambda (s' - 2E)}
  \end{mathdisplayfull}%
  over-demanded goods. Each such good takes at most $s' + 2E$ bids at each of
  $1/\alpha$ price levels. Putting these two estimates together, the total
  number of bids $B$ is upper bounded by
  \[
    B \leq \frac{\OPT}{\lambda} \cdot \left( 1+ \frac{s' + 2E}{s' - 2E} \right)
    \leq \frac{6 \OPT}{\lambda \alpha}
  \]
  if $s' \geq 4E$, which holds since we are already assuming $s' \geq 4E +
  \frac{12E}{\alpha}$. Hence, we know the number of bids is at most
  \begin{align*}
    T \cdot (Q - 4E) &\leq B \leq \frac{6 \OPT}{\lambda \alpha} \\
    T &\leq \frac{6 \OPT}{\lambda} \cdot \left( \frac{2 \lambda}{\alpha \OPT - 8
        \lambda E} \right).
  \end{align*}
  Assuming $\alpha \OPT \geq 16 \lambda E$, we find $T \leq 24/\alpha^2$.

  With this choice of $T$, the supply requirement is
  \begin{equation} \label{eq-mult-supply}
    s \geq \frac{18E}{\alpha} = \Omega \left( \frac{1}{\alpha^3 \epsilon} \cdot
      \polylog \left( n, k, \frac{1}{\alpha}, \frac{1}{\gamma} \right) \right).
  \end{equation}
  Likewise, the requirement on $\OPT$ is
  \[
    \OPT \geq \frac{16\lambda E}{\alpha} = \Omega \left( \frac{\lambda}{
        \alpha^3 \epsilon} \cdot \polylog \left( n, k, \frac{1}{\alpha},
        \frac{1}{\gamma} \right) \right).
  \]

  Now, we can follow the analysis from \cref{matching-welfare} to bound the
  welfare. Suppose the algorithm produces a matching $\mu$, and consider any
  other matching $\mu^*$.  For each bidder who is matched to an
  $\alpha$-approximate favorite good,
  \begin{mathdisplayfull}
    v_{i \mu(i)} - p_{\mu(i)} \geq v_{i \mu^*(i)} - p_{\mu^*(i)} - \alpha.
  \end{mathdisplayfull}%
  Each such bidder is matched to a good with value at least $\lambda$, so there
  are at most $\OPT/\lambda$ such bidders. Summing over these bidders (call them
  $S$),
  \[
    \sum_{i \in S} v_{i \mu(i)} - p_{\mu(i)} \geq \sum_{i \in S} v_{i \mu^*(i)}
    - p_{\mu^*(i)} - \frac{\alpha \OPT}{\lambda}.
  \]
  Letting $N_j, N_j^*$ be the number of goods of type $j$ matched in $\mu,
  \mu^*$ and rearranging,
  \[
    \sum_{i \in S} v_{i \mu^*(i)} - v_{i\mu(i)} \leq \sum_{j \in S} p_j(N_j^* -
    N_j) + \frac{\alpha \OPT}{\lambda}.
  \]
  Exactly the same as in \cref{matching-welfare}, each over-demanded good $(p_j >
  0)$ clears except for at most $\beta = 4E + 1$ supply.  Since at most
  $\frac{\OPT}{\lambda}$ bidders can be matched, the number of goods with $p_j >
  0$ is at most
  \[
    \frac{\OPT}{\lambda (s - \beta)}.
  \]
  Like before, $N_j^* - N_j \leq \beta$. Since there are at most $\alpha \OPT /
  \lambda$ bidders not in $S$ and each has valuation in $[0, 1]$, when summing
  over all bidders,
  \[
    \sum_{i} v_{i \mu^*(i)} - v_{i\mu(i)} \leq \frac{\OPT \beta}{\lambda (s -
      \beta)} + \frac{\alpha \OPT}{\lambda} + \frac{\alpha \OPT}{\lambda}.
  \]
  The first term is at most $\alpha \OPT / \lambda$ for $s \geq \beta (1 +
  1/\alpha)$, when the algorithm calculates a matching with weight $O( (1 -
  \alpha/\lambda) \OPT)$. Since $\beta = 4E + 1$, this reduces to the supply
  constraint \cref{eq-mult-supply}.
\end{proof}
\else
  Privacy follows like \cref{matching-privacy}. Utility follows a similar
  analysis as for the matching case, with one main twist: in the unwweighted
  case, there can be at most $\OPT/\lambda$ bidders matched to a prefered good,
  since each matched bidder contributes weight $\lambda$. Thus, we can halt the
  algorithm sooner when $\OPT$ is small. Details can be found in the full
  version.
\fi
\begin{remark}
  For a comparison with \cref{matching-welfare} and \pmatch, consider the
  ``unweighted'' case where bidders have valuations in $\{0, 1\}$ (i.e.,
  $\lambda = 1$). Note that both \pmatch and the multiplicative version require
  the same lower bound on supply. Ignoring log factors, \pmatch requires $n =
  \tilde{\Omega}(1/\alpha^3 \epsilon)$ for an additive $\alpha n$ approximation,
  while \cref{mult-acc} shows $\OPT = \tilde{\Omega}(1/\alpha^3 \epsilon)$ is
  necessary for a multiplicative $\alpha$, hence additive $\alpha \OPT$,
  approximation. Hence, \cref{mult-acc} gives a stronger guarantee if $\OPT =
  \tilde{o}(n)$ in the unweighted case, ignoring log factors.
\end{remark}

\section{Extension to Gross Substitute Valuations}
\label{sec:extensions}
% \jh{Danger: note that $\OPT$ can't be too small for multiplicative error.}
% \sw{Should define $d$ in this paragraph.}
% \jh{Defined below. I prefer not to talk about $d$ in the privacy proof, it's
%   unnecessary information.}

% \subsection{Gross Substitute Valuations}
% \sw{did some modifications to get $\OPT - \alpha d$; saved the
%   old setting to another file.}
While Kelso and Crawford's algorithm is simplest in the unit demand setting, it
can also compute allocations when bidders have gross substitutes valuations.
Before we discuss our analogous extension, we will first introduce some notation
for gross substitutes valuations.  Unlike unit demand valuations, bidders with
gross substitute valuations may demand more than one good. Let $\Omega = 2^G$
denote the space of bundles (i.e., subsets of goods). Like previous sections,
let $k$ be number of types of goods, and let $s$ be the supply of each type of
good.  Let $d$ denote the {\em market size}---the total number of goods,
including identical goods, so $d = ks$.\footnote{%
  In general, goods may have different supplies, if $s$ denotes the
  \emph{minimum} supply of any good. Hence, $d$ is not necessarily dependent on $s$.}
% \sw{We might sort of treat $d$ as an independent parameter (in the more general
%   model, supplies might be different for different goods)}
We assume that each bidder has a valuation function on bundles, $v_i : \Omega
\rightarrow [0,1]$, and that this valuation satisfies the gross substitutes
condition (\cref{def-gs}).

Like before, we simulate $k$ ascending price auctions in rounds.  Bidders now
maintain a bundle of goods that they are currently allocated to, and bid on one
new good each round. For each good in a bidder's bundle, the bidder keeps track
of the count of bids on that good when it was added to the bundle. When the current
count ticks past the supply, the bidder knows that they have been outbid.

The main subtlety is in how bidders decide which goods to bid on. Namely, each
bidder treat goods in their bundle as fixed in price (i.e., bidders ignore
the price increment of at most $\alpha$ that might have occurred after winning
the item). Goods outside of their bundle (even if identical to goods in their
bundle) are evaluated at the true price. We call these prices the bidder's {\em
  effective} prices, so each bidder bids on an arbitrary good in his
most-preferred bundle at the effective prices. The full algorithm is given in
\cref{gs-auction}.

\begin{algorithm}[ht!]
     \caption{$\palloc(\alpha, \rho, \eps)$ (with Gross Substitute Valuations)}
     \begin{algorithmic}\label{gs-auction}

       \STATE{\textbf{Input:} Bidders' gross substitute valuations on the
       bundles $\{ v_i : \Omega \rightarrow [0, 1] \}$}

       \STATE{\textbf{Initialize: for bidder $i$ and good $j$,}

         \begin{mathpar}
         T = \frac{10}{\alpha \rho},
         \and
         \epsilon' = \frac{\epsilon}{2T},
         \and
         E = \frac{2\sqrt{2}}{\epsilon'} (\log nT)^{5/2} \log \left(
           \frac{4k}{\gamma} \right)  + 1,
        \and
        m = 2E + 1,
        \and
        \text{counter}_0 = \counter(\epsilon', nT),
        \and
         \text{counter}_j = \counter(\epsilon', nT),
         \and p_j = c_j = 0,
         \and
         d_g = 0,
         \and
         g(i) = \{ \emptyset \} \and \text{for every bidder\ } i
\end{mathpar}
}

       \STATE{$\mathbf{Propose}$ $T$ times; \textbf{Output:} prices $p$ and allocation $g$.}
       \vspace{1ex}
       \STATE{\textbf{Propose:}}
       \FORALL{bidders $i$}
       \FORALL{goods $g \in g(i)$}
       \IF{$c_{type(g)} - d_g \geq s - m$}
       \STATE{Remove $g(i) := g(i) \setminus g$}
       \ENDIF
       \ENDFOR
       \STATE{Let $p_0$ be the original cost of $g(i)$.}
       \STATE{Let $\omega^* \in \displaystyle\argmax_{\omega \supsetneq g(i)} {v_{i}(\omega) -
           p(\omega \setminus g(i)) - p_0}$ arbitrary.}
       \IF{$v_{i}(\omega^*) - p(\omega \setminus g(i)) -  p_0 \geq v_i(g(i)) -
         p_0$}
       \STATE{Let $j \in \omega^* \setminus g(i)$ arbitrary.}
       \STATE{Save $d_j := c_{type(j)}$}
       \STATE{Add $g(i) := g(i) \cup j$ and $\textbf{Bid}(\mathbf{e_j})$}
       \ENDIF
       \STATE{\textbf{else} $\textbf{Bid}(\mathbf{0})$}
       \ENDFOR
       \STATE{\textbf{CountUnsatisfied}}
       \vspace{1ex}
       \STATE{\textbf{Bid:} On input bid vector $\mathbf{b}$}
       \FORALL{goods $j$}
       \STATE{Feed $\mathbf{b}_j$ to $\text{counter}_j$.}
       \STATE{Update count $c_j := \text{counter}_j$.}
       \IF{$c_j \geq (p_j/\alpha + 1) (s - m)$}
       \STATE{Update $p_j := p_j + \alpha$.}
       \ENDIF
       \ENDFOR
       \STATE{}
       \STATE{\textbf{CountUnsatisfied:}}
       \FORALL{bidders $i$}
       \IF{ $i$ wants continue bidding}
       \STATE{Feed $1$ to $\text{counter}_0$.}
       \ENDIF
       \STATE{\textbf{else} Feed $0$ to $\text{counter}_0$.}
       \ENDFOR
       \IF{$\text{counter}_0$ increases by less than $\rho d - 2E$}
       \STATE{Halt and output $\mu$.}
       \ENDIF
     \end{algorithmic}
   \end{algorithm}

   Privacy is very similar to the case for matchings.

\begin{theorem} \label{gs-priv}
$\palloc(\alpha, \rho, \eps)$ satisfies $\eps$-joint differential privacy.
\end{theorem}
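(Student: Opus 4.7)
The plan is to follow the same two-step template used for the max-matching case (\Cref{matching-privacy}): first argue that the public transcript posted on the billboard --- the full sequence of prices and the readings of counter$(0)$ --- is $\eps$-differentially private in the standard sense, and then invoke the Billboard Lemma (\Cref{billboard}) to upgrade this to $\eps$-joint differential privacy. The point is that, given the price trajectory and the halting signal, each bidder $i$ can recompute their bundle $g(i)$ purely from their own valuation $v_i$: they know their slot in the round-robin order, they know the effective prices at every one of their turns, and at each turn they compute a most demanded bundle and add one good to $g(i)$, removing any good whose bid counter has advanced by $s-m$. Thus the bidder-specific output factors as $f_i(v_i, \mathrm{billboard})$ and \Cref{billboard} applies.

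For the first step, note that a single bidder places at most one bid per Propose round, and the algorithm runs for at most $T = 10/(\alpha\rho)$ rounds, so a single bidder's valuation can change at most $T$ entries across all $k$ good-counters' streams and at most $T$ entries of the counter$(0)$ stream. Each counter is instantiated with privacy parameter $\epsilon' = \epsilon/(2T)$, and each is $\epsilon'$-differentially private with respect to a single-bit change in its stream (\Cref{counter-error}). By basic group-privacy (composition across the $\le T$ bit changes per stream) together with the fact that the total ``histogram-style'' sensitivity of the $k$ good-counters to one bidder is bounded by $T$ rather than $kT$, the concatenated release of the $k$ price streams is $(\epsilon/2)$-differentially private. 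The same argument applied to counter$(0)$ gives another $(\epsilon/2)$, and summing yields $\eps$-differential privacy for the entire billboard.

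The one subtlety --- and the main thing to be careful about --- is that the counter streams here are chosen \emph{adaptively}: what bidder $i$ bids on in round $t$ depends on the noisy prices released so far, which are outputs of the counters. The accuracy results of \citet{chan-counter} and \citet{DNPR10} are stated for non-adaptive streams, so we cannot directly cite them. I would handle this the same way the authors handle it in the matching case (sketched right before \Cref{matching-privacy} and deferred to \Cref{counter-details} in the full version): prove, by a routine hybrid/coupling argument over the round-by-round interaction, that running several Binary Mechanism counters against an adversary that chooses each stream bit adaptively from the transcript so far preserves the same privacy parameter, which depends only on the joint per-user sensitivity of the streams, not on the number of streams. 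Once this adaptive composition lemma is in hand, the conclusion is immediate.

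Putting the pieces together: the billboard release is $\eps$-differentially private, each bidder's allocation $g(i)$ is a deterministic post-processing of the billboard and $v_i$, and \Cref{billboard} therefore implies that $\palloc(\alpha, \rho, \eps)$ is $\eps$-joint differentially private, completing the proof.
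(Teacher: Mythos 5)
Your proposal is correct and follows exactly the paper's route: the paper's own proof is literally ``essentially the same proof as \Cref{matching-privacy},'' which is the billboard argument --- show the price trajectory and counter$(0)$ readings are $\eps$-differentially private (via the adaptive-composition analysis for the Binary Mechanism in \Cref{counter-details}), then apply \Cref{billboard} since each bidder's bundle is a deterministic function of the billboard and $v_i$. You have also correctly flagged the adaptivity subtlety and the histogram-style sensitivity bound, both of which the paper handles the same way.
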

\iffull
\begin{proof}
  Essentially the same proof as \cref{matching-privacy}.
\end{proof}
\fi

\begin{theorem} \label{gs-welfare}
  Let $0<\alpha< n/d$, and  $g$ be the allocation computed by $\palloc(\alpha/3,
  \alpha/3, \eps)$, and let $\OPT$ be the optimum max welfare.  Then, if $d \geq
  n$ and
  \[
    s \geq \frac{12E' + 3}{\alpha} = O \left( \frac{1}{\alpha^3 \epsilon}
    \cdot \polylog\left( n, k, \frac{1}{\alpha}, \frac{1}{\gamma} \right) \right),
  \]
  the allocation $g$ has social welfare at least
  \[
  \sum_{i=1}^n v_i(g(i)) \geq \OPT - \alpha d
  \]
  with probability at  least $1 - \gamma$, where
  \[
    E' = \frac{360 \sqrt{2}}{\alpha^2
      \eps}\left(\log\left(\frac{90n}{\alpha^2} \right)
    \right)^{5/2}\log\left(\frac{4k}{\gamma}\right) + 1.
  \]
\end{theorem}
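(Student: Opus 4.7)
The plan is to mirror the proof structure of \Cref{matching-welfare}, first establishing that $\palloc$ produces an approximate Walrasian equilibrium for gross substitutes valuations, and then invoking an approximate version of the first welfare theorem for such equilibria.

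I would begin with the accuracy of the counters. By \Cref{counter-error} and a union bound over the $k+1$ counters, with probability at least $1-\gamma$ every counter has error at most $E'$ throughout the execution. Condition on this event for the remainder of the argument. Parallel to \Cref{matching-eq-beta}, the counter-accuracy bound and the $m = 2E'+1$ reserved supply immediately imply that no good is over-allocated and that every over-demanded good clears except for at most $\beta = 4E'+1$ supply, exactly as in the matching case (the argument is identical since it only concerns individual counters on single goods).

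Next, I would establish the analogue of \Cref{matching-eq-alpha}, the key new step. Each time a satisfied bidder $i$ adds a good $j$ to $g(i)$, the bundle $g(i) \cup \{j\}$ is, by construction, a best response at the \emph{effective prices} (goods in $g(i)$ frozen at acquisition prices, other goods at current prices). I would use the gross substitutes property (\Cref{def-gs}) inductively: as other prices rise between bids of $i$, the goods already in $g(i)$ remain in some most-demanded bundle at the new effective prices, so bidder $i$ continues extending a most-demanded bundle. At termination, a satisfied bidder is one whose best single-good extension does not improve utility; combined with a standard submodularity consequence of GS (a local optimum of a GS valuation on supersets is a global optimum), this yields $v_i(g(i)) - p(g(i)) \geq v_i(S) - p(S) - \alpha$ for every bundle $S$, where the additive $\alpha$ accounts for the at-most-$\alpha$ gap between effective and true prices on goods in $g(i)$ (each such good rose by at most one increment while $i$ held it).

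For the analogue of \Cref{matching-eq-rho}, I would bound the total number of bids $B$. Each bid either adds a good to some bidder's bundle (and eventually, at most $d$ such additions can be ``permanent'' at termination), or is preceded by being outbid on a previously-held good. Arguing as in the matching case: each of the $k$ goods can have its price raised at most $1/\alpha$ times and requires about $s-m$ bids per increment, while under-demanded goods collectively absorb $O(d)$ bids. Hence $B = O(d/\alpha)$. Under the no-early-halt branch, each round contributes at least $\rho d - 4E'$ true bids (modulo bidders who drop out, a term of size at most $n \leq d$), so the number of rounds is $R \leq O(1/(\alpha\rho)) \leq T$, at which point at most $\rho d$ bidders remain unsatisfied. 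Putting the three properties together gives an $(\alpha, \beta, \rho)$-approximate Walrasian equilibrium when $s \geq 8E'+1$, $d \geq 8E'/\rho$.

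Finally, I would invoke an approximate first welfare theorem for GS. Let $\mu^*$ denote the optimal allocation, $S$ the satisfied bidders, and $G$ the set of over-demanded types. Summing the approximate best-response inequality over $i \in S$ gives
\begin{align*}
  \sum_{i\in S} v_i(\mu^*(i)) - \sum_{i\in S} v_i(g(i))
  &\leq \sum_{j\in G} p_j(N^*_j - N_j) + \alpha|S|,
\end{align*}
where $N^*_j, N_j$ are the number of copies of type $j$ allocated by $\mu^*$ and $\palloc$ respectively, and under-demanded goods drop out because $p_j=0$. Each over-demanded good has $N_j \geq s-\beta$ and supply $s$, so $N^*_j - N_j \leq \beta$; moreover, since each over-demanded good holds at least $s-\beta$ units and total allocated units are at most $d$, we get $|G| \leq d/(s-\beta)$. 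Therefore the welfare gap is at most $\beta d/(s-\beta) + \alpha |S|$, which, for $s \geq 4\beta/\alpha$, is at most $\alpha d/3 + \alpha n/3$. Adding the contribution from the at most $\rho d = \alpha d/3$ unsatisfied bidders (each losing at most $1$) and using $d \geq n$ yields the claimed bound $\OPT - \alpha d$. Plugging $\alpha, \rho \mapsto \alpha/3$ and unpacking $\beta = 4E'+1$ gives the stated supply requirement.

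The main obstacle is the approximate-best-response step for satisfied bidders: unlike the matching case, where a bidder holds one good and it suffices to observe that the price of that good rose by at most $\alpha$, here one must carefully combine the gross substitutes property (to argue that myopic additions are compatible with some globally most-demanded bundle) with the ``local-to-global'' structure of GS valuations (to upgrade the termination condition from ``no single good helps'' to ``no bundle helps''). The bid-counting step also requires more care than in the matching case because bidders may hold multiple goods simultaneously, so total bids are bounded in terms of $d$ rather than $n$, which is why the halting threshold and welfare target scale with $d$.
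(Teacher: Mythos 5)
Your proposal follows the paper's structure closely: counter accuracy, the three properties of an approximate allocation equilibrium (Lemmas~\ref{gs-supply}, \ref{gs-alpha}, \ref{gs-unsatisfied}), and an approximate first welfare theorem, exactly as in \Cref{gs-acc} and the paper's proof. However, the approximate-best-response step contains a genuine error.

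You claim that a satisfied bidder $i$ ends up with $v_i(g(i)) - p(g(i)) \geq v_i(S) - p(S) - \alpha$ for every bundle $S$, attributing the additive $\alpha$ to the gap between effective and true prices on $g(i)$. But each good in $g(i)$ can individually rise by up to $\alpha$ after acquisition, so the aggregate gap between $p(g(i))$ at effective versus true prices is $\alpha \cdot |g(i)|$, not $\alpha$. This is precisely why \Cref{alloc-eq} defines the approximation as $\alpha |g(i)|$ per bidder, and \Cref{gs-alpha} proves exactly that. The $|g(i)|$ factor is not a technicality: it is the reason the welfare loss from this source is bounded by $\alpha \sum_{i\in S}|g(i)| \leq \alpha d$ rather than $\alpha|S| \leq \alpha n$, and hence why the theorem's guarantee scales as $\alpha d$ rather than $\alpha n$. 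Your subsequent bound of ``$\alpha|S|$, hence $\alpha n/3$'' for this term is too strong; it should be $\alpha \sum_{i\in S}|g(i)|/3 \leq \alpha d/3$. Fortunately this does not affect the final conclusion: the paper's budget allocates $\alpha d/3$ to each of the three loss sources, giving $\OPT - \alpha d$ either way, so the theorem statement survives once this term is corrected.

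Two lesser observations. First, you invoke a ``local-to-global'' property of GS valuations to upgrade the termination condition from ``no single good helps'' to ``no bundle helps.'' This is unnecessary: the paper's induction already shows $g(i)$ is a subset of some most-preferred bundle $S^*$ at effective prices, and since $\palloc$ continues bidding whenever any proper superset of $g(i)$ has utility $\geq u_i(g(i))$ (which $S^*$ does whenever $g(i) \subsetneq S^*$), termination with $g(i) \subseteq S^*$ forces $g(i) = S^*$. Second, you bound $\sum_{j\in G} p_j(N_j^* - N_j)$ via $|G| \leq d/(s-\beta)$ and then $|G|\beta$, whereas the paper bounds it by $\beta k \leq (\alpha s/3)k = \alpha d/3$ directly using $\beta \leq \alpha s/3$; both routes give the same $\alpha d/3$ and are equally valid.
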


\begin{remark}
In comparison with \cref{matching-welfare}, \cref{gs-welfare} requires a similar
constraint on supply but promises welfare $\OPT - \alpha d$ rather than
$\OPT - \alpha n$. Since $\OPT \leq n$ this guarantee is only non-trivial for
$\alpha \leq n/d$, so the supply has a polynomial dependence on the total
size of the market $d$. In contrast, \cref{matching-welfare} guarantees good
welfare when the supply has a \emph{logarithmic} dependence on the total number of
goods in the market.

We note that if bidders demand bundles of size at most $b$, then we can
improve the above welfare bound to $\OPT - \alpha n b$. Note that this is
independent of the market size $d$ and smoothly generalizes the matching case
where $b = 1$. 
\end{remark}

Similar to \cref{matching-eq}, we define an \emph{approximate allocation
  equilibrium} as a prerequisite for showing our welfare guarantee.

\begin{definition} \label{alloc-eq}
  A price vector $p\in [0,1]^k$ and an assignment $g\colon [n] \rightarrow
  \Omega$ of bidders to goods is an {\em $(\alpha, \beta, \rho)$-approximate
  allocation equilibrium} if
  \begin{enumerate}
    \item for all but $\rho d$  bidders, $v_i(g(i)) - p(g(i)) \geq
      \max_{\omega \in \Omega} v_i(\omega) - p(\omega) - \alpha |g(i)|$;
    \item the number of bidders assigned to any good is at most $s$; and
    \item each over-demanded good clears except for at most $\beta$ supply.
  \end{enumerate}
\end{definition}
% \ar{What is $\Omega$ and where is it defined? Shouldn't $g$ map bidders to
%   subsets of goods?}
% \jh{Was in the prior definition, now removed.}
%
The following lemmas show that our algorithm finds an approximate allocation
equilibrium.
\iffull
We prove the last two requirements first.
\else
(We defer proofs to the full version.)
\fi
\begin{lemma} \label{gs-supply}
  Assume all counters have error at most $E$ throughout the run of
  $\palloc(\alpha,\rho,\eps)$. Then, the number of bidders assigned to any good is at most
  $s$ and each over-demanded good clears except for at most $\beta$ supply,
  where
  \[
    \beta = 4E + 1 = O \left( \frac{1}{\alpha \rho \epsilon } \cdot \polylog
      \left( n, k, \frac{1}{\alpha}, \frac{1}{\rho}, \frac{1}{\gamma} \right)
    \right).
  \]
\end{lemma}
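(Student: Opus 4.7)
The plan is to mirror the proof of \Cref{matching-eq-beta} almost verbatim, since the counter mechanics in \palloc are structurally identical to those in \pmatch: each call to $\textbf{Bid}$ feeds a single ``1'' to at most one counter, and a bidder is considered to still hold good $j$ exactly when the counter for that good has advanced by strictly less than $s-m$ since they last bid on it. The fact that bidders may hold bundles (rather than single goods) does not change the per-counter analysis, because each individual good in a bidder's bundle has its own saved value $d_g$ tracking the counter reading at the moment of bidding.

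First I would handle under-demanded goods. Their prices remain $0$, which by the $\textbf{Bid}$ procedure means the noisy counter never reached $s-m$. Conditioning on all counters having error at most $E$, the true number of bids (and hence bidders currently holding that good) is at most $(s-m) + E$, which with $m = 2E+1$ is strictly less than $s$. Next, for an over-demanded good with counter $c$ and true bid stream $\sigma$, I would pick the time $t$ satisfying $c(nT) - c(t+1) \le s-m < c(nT) - c(t)$. The bidders currently matched to this good are precisely those who bid after time $t$. Using the error bound $E$ on both $c(nT)$ and $c(t)$, $c(t+1)$, the same sandwiching calculation as in \Cref{matching-eq-beta} gives
\begin{mathdisplayfull}
s - m - 2E \;<\; c_\sigma(nT) - c_\sigma(t) \;\le\; s - m + 2E + 1 \;=\; s,
\end{mathdisplayfull}
so feasibility holds, and the uncleared supply is at most $(s) - (s - m - 2E) = m + 2E = 4E+1 =: \beta$.

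Finally I would plug in the value of $E$ from the algorithm's initialization to get the stated $\polylog$ dependence on $n, k, 1/\alpha, 1/\rho, 1/\gamma$, matching the bound claimed in the statement.

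I do not anticipate any real obstacle: the per-counter accounting is inherited line-for-line from the matching analysis, and the gross-substitutes machinery (effective prices, bundles, the $\omega^*$ selection rule) plays no role here because this lemma only concerns how the noisy counters translate into per-good occupancy. The only mild subtlety worth double-checking is that a single bidder can contribute more than one bid to the same good-counter over the course of the algorithm (after being outbid and later rebidding), but the argument above counts \emph{bids}, not distinct bidders, so this is not an issue.
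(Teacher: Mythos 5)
Your proposal matches the paper's proof essentially line for line: the paper likewise disposes of under-demanded goods via the bound $s - m + E < s$, then sandwiches the true occupancy of each over-demanded good in $[\,s - m - 2E,\; s - m + 2E + 1\,]$ by the same choice of threshold time $t$ (deferring to the identical calculation from the matching case), concluding $\beta = m + 2E = 4E + 1$. The closing remark about counting bids rather than distinct bidders is a sound sanity check but does not introduce anything the paper's argument needed to handle differently.
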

\iffull
\begin{proof}
  Consider any good $j$. If it is under-demanded, the counter corresponding to
  $j$ never rise above $s - m$. Hence by our conditioning, at most $s - m + E <
  s$ bidders are assigned to $j$.  If $j$ is over-demanded, the same reasoning as
  in \cref{matching-acc} shows that the number of bidders matched to $j$ lies in
  the range $[s - m - 2E, s - m + 2E + 1]$. By the choice of $m$, the upper
  bound is at most $s$. Likewise, at least $s - m + E = s - (4E + 1)$ bidders
  are assigned to $j$.  Setting $\beta =  4E + 1$ gives the desired bound.
\end{proof}
\fi

\begin{lemma} \label{gs-alpha}
  We call a bidder who wants to bid more {\em unsatisfied}; otherwise, a bidder
  is {\em satisfied}.  At termination of $\palloc(\alpha, \rho, \eps)$, all satisfied
  bidders are matched to a bundle $g(i)$ that is an $\alpha \cdot |g(i)|$-most
  preferred bundle.
\end{lemma}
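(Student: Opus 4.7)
The plan is to analyze the algorithm from bidder $i$'s perspective using \emph{effective prices} $\tilde{p}$ that reflect acquisition costs rather than current market prices, establish that $g(i)$ is exactly a most-demanded bundle at $\tilde{p}$, and then transfer the guarantee to the true prices $p$ at the cost of $\alpha |g(i)|$. Set $\tilde{p}_j := p_j$ for $j \notin g(i)$, and $\tilde{p}_j := $ the market price at the moment $j$ was added to $g(i)$, for $j \in g(i)$. Since bidder $i$ still holds each $j \in g(i)$, the corresponding counter has advanced fewer than $s-m$ ticks since acquisition, so $p_j$ has incremented at most once by $\alpha$; hence $p_j - \alpha \leq \tilde{p}_j \leq p_j$ on $g(i)$, and $\tilde{p} \preceq p$ pointwise.

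The main technical step is to prove $g(i) \in \omega(\tilde{p})$. To this end, I would maintain the invariant that at every moment of the algorithm, bidder $i$'s current bundle $g_t(i)$ is contained in some bundle $S^*_t \in \omega(\tilde{p}_t)$ at the current effective prices. There are three kinds of transitions to check. (i) When a good is removed from $g_t(i)$ because bidder $i$ has been outbid, the new effective prices $\tilde{p}_{t+1}$ weakly dominate $\tilde{p}_t$ and agree on $g_{t+1}(i) \subseteq g_t(i) \subseteq S^*_t$; applying \Cref{def-gs} with $S = S^*_t$, $S' = g_{t+1}(i)$ yields a new $S^*_{t+1} \in \omega(\tilde{p}_{t+1})$ containing $g_{t+1}(i)$. (ii) When another bidder's bid raises the market price of some good outside $g_t(i)$, the same application of \Cref{def-gs} (with $S' = g_t(i)$, whose effective prices are unchanged) preserves the invariant. (iii) When bidder $i$ itself bids on a new good $j$, I claim the algorithm's chosen $\omega^*$ is in fact in $\omega(\tilde{p}_t)$: if $S^*_t \supsetneq g_t(i)$, then $S^*_t$ is a strict superset, so maximality of $\omega^*$ over strict supersets gives $v_i(\omega^*) - \tilde{p}_t(\omega^*) \geq v_i(S^*_t) - \tilde{p}_t(S^*_t)$; if $S^*_t = g_t(i)$, then the bidding condition $v_i(\omega^*) - \tilde{p}_t(\omega^*) \geq v_i(g_t(i)) - \tilde{p}_t(g_t(i))$ directly shows $\omega^*$ achieves the demand-set maximum. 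In either case, $g_{t+1}(i) = g_t(i) \cup \{j\} \subseteq \omega^* \in \omega(\tilde{p}_{t+1})$.

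At termination, \emph{satisfied} means the bidding condition fails: for every $\omega \supsetneq g(i)$, $v_i(\omega) - \tilde{p}(\omega) < v_i(g(i)) - \tilde{p}(g(i))$. Combined with the invariant $g(i) \subseteq S^* \in \omega(\tilde{p})$, we cannot have $S^* \supsetneq g(i)$ (that would give $v_i(S^*) - \tilde{p}(S^*) < v_i(g(i)) - \tilde{p}(g(i))$, contradicting $S^* \in \omega(\tilde{p})$). Hence $S^* = g(i)$, and so $g(i)$ itself maximizes $v_i(\omega) - \tilde{p}(\omega)$ over all $\omega \subseteq G$.

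Finally, for any bundle $\omega$,
\begin{align*}
v_i(g(i)) - p(g(i))
&\geq v_i(g(i)) - \tilde{p}(g(i)) - \alpha |g(i)| \\
&\geq v_i(\omega) - \tilde{p}(\omega) - \alpha |g(i)| \\
&\geq v_i(\omega) - p(\omega) - \alpha |g(i)|,
\end{align*}
using in order $p_j \leq \tilde{p}_j + \alpha$ on $g(i)$, the conclusion $g(i) \in \omega(\tilde{p})$, and $\tilde{p} \preceq p$. This gives the desired approximate most-preferred bundle property. The main obstacle will be case (iii) of the invariant: showing that the algorithm's greedy, one-good-at-a-time bidding from $\omega^*$ is not merely locally improving but actually lands inside a globally demanded bundle, so that the invariant can be sustained through all bidding rounds and gross substitutes can be applied uniformly across the entire execution.
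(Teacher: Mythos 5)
Your proof is correct and takes essentially the same approach as the paper: the same notion of effective prices, the same induction maintaining the invariant that $g(i)$ is contained in a most-demanded bundle at effective prices (preserved under outside price increases, removals from $g(i)$, and new bids, the first two via the gross substitutes condition), and the same transfer to true prices at cost $\alpha|g(i)|$. If anything, your case (iii) is more careful than the paper's one-line ``by definition'': you correctly notice that the algorithm's $\omega^*$ maximizes only over \emph{strict} supersets of $g_t(i)$, and you supply the case split on whether $S^*_t \supsetneq g_t(i)$ or $S^*_t = g_t(i)$ that is needed to conclude $\omega^*$ is actually a global maximizer of the demand set, using the bidding condition in the latter case.
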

\iffull
\begin{proof}
  We first show that a bidder's bundle $g(i)$ remains a subset of their most
  preferred bundle at the effective prices, i.e., with prices of goods in $g(i)$
  set to their price at time of assignment, and all other goods taking current
  prices.

  This claim follows by induction on the number of timesteps (ranging from $1$ to
  $nT$). The base case is clear. Now, assume that the claim holds up to time
  $t$.  There are three possible cases:
  \begin{enumerate}
    \item If the price of a good outside $g(i)$ is increased, $g(i)$ remains
      part of a most-preferred bundle by the gross substitutes condition.
    \item If the price of a good in $g(i)$ is increased, some goods may be
      removed from the bundle leading to a new bundle $g'(i)$. The only goods
      that experience an effective price increase lie outside of $g'(i)$, so
      $g'(i)$ remains a subset of a most-preferred bundle at the effective
      prices.
    \item If a bidder adds to their bundle, $g(i)$ is a subset of the
      most-preferred bundle by definition.
  \end{enumerate}
  Hence, a bidder becomes satisfied precisely when $g(i)$ is equal to the
  most-preferred bundle at the effective prices. The true price is at most
  $\alpha$ more than the effective price, so the bidder must have an $\alpha
  |g(i)|$-most preferred bundle at the true prices.
\end{proof}
\fi

\begin{lemma} \label{gs-unsatisfied}
  Suppose all counters have error at most $E$ throughout the run of
  $\palloc(\alpha, \rho, \eps)$. Then at termination, all but $\rho d$  bidders
  are satisfied if
  \[
    n \leq d
    \quad \text{and} \quad
    d \geq \frac{8E}{\rho} = \Omega \left( \frac{1}{\alpha \rho^2  \epsilon} \cdot
      \polylog \left( n, k, \frac{1}{\alpha}, \frac{1}{\rho},
        \frac{1}{\gamma} \right)  \right).
  \]
\end{lemma}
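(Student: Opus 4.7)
The plan is to mirror the proof of Lemma \ref{matching-eq-rho}, with $n$ replaced by the total market size $d$ (since a bidder can now receive a bundle rather than a single good, and the binding resource is total supply). At a high level, I will bound the total number of true bids $B$ placed during the run, and then argue that if the algorithm fails to halt, too many bids must be placed and a contradiction follows.

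First I would bound $B$ by splitting into under-demanded and over-demanded goods. For under-demanded goods, the noisy counter stays at most $s - m$, so counter accuracy gives at most $s - m + E$ true bids per such good; summing across the at most $d/s$ under-demanded types contributes $O(d)$ bids. For over-demanded goods, \Cref{gs-supply} lower-bounds the true bidders matched to each such good by $s - m - 2E$, and since the total allocation across all bidders cannot exceed the supply $d$, the number of over-demanded types is at most $d/(s - m - 2E)$. Each such good experiences at most $1/\alpha$ price increments (prices are bounded by $1$), and between increments receives at most $s - m + 2E$ true bids. With $m = 2E + 1$ and the implicit supply assumption from the theorem statement making $(s-m+2E)/(s-m-2E) \leq 2$, this yields $B \leq O(d/\alpha)$, say $B \leq 4d/\alpha$.

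Next I analyze the halting condition. If the algorithm halts strictly before the $T$th round, then counter$(0)$ increased by less than $\rho d - 2E$ in the halting round, so by counter accuracy the true number of unsatisfied bidders in that round is at most $\rho d$ and we are done. Otherwise, the noisy increase of counter$(0)$ is at least $\rho d - 2E$ in every round, hence the true increase is at least $\rho d - 4E$, which by $d \geq 8E/\rho$ is at least $\rho d / 2$. As in \Cref{matching-eq-rho}, each unsatisfied bidder contributes one real bid in the next Propose round, except for a correction of $O(d)$ per run accounting for bidders whose preferred bundle changes mid-round; by gross substitutes and the monotonicity of prices, once such a bidder is satisfied by a price increase on goods outside their preferred bundle, they remain satisfied for the rest of the round, so the correction is absorbed. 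Summing gives $B \geq T \cdot (\rho d/2) - O(d)$, which with $T = 10/(\alpha\rho)$ contradicts $B \leq 4d/\alpha$.

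The main obstacle compared to the matching case is the bookkeeping for mid-round transitions: in the gross substitutes setting a bidder can both gain and lose items across rounds, and counter(0) is updated only at round boundaries, so one must be careful to translate noisy per-round increments into a lower bound on true bids in the following Propose pass. The gross substitutes property is what makes this transition clean: price increases on goods outside a bidder's current most-preferred set never push them back into an unsatisfied state within the same round, so the straightforward accounting above goes through with only an $O(d)$ slack.
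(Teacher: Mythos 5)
Your bound on the total number of true bids $B = O(d/\alpha)$ is a reasonable adaptation of the matching argument (the paper uses a cleaner bound that avoids the under/over-demanded split by directly charging $k$ types $\times$ $1/\alpha$ price levels $\times$ $O(s)$ bids per level, but your decomposition arrives at essentially the same place). The gap is in the second half: your claim that the number of ``unsatisfied at round start but did not bid'' events over the entire run is only $O(d)$ is false, and your appeal to gross substitutes plus monotonicity only covers the within-round case.

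Here is the issue concretely. In the matching case, once a bidder is satisfied by a price increase, prices only rise, so that bidder never reenters; hence the correction really is $\leq n \leq d$. In the gross substitutes case, a bidder who declines to bid in round $r$ can later be \emph{outbid} on a good in her bundle, become unsatisfied again at the start of some round $r' > r$, and potentially decline again. Your observation that ``price increases on goods outside a bidder's current most-preferred set never push them back into an unsatisfied state within the same round'' is true but does not control this cross-round reentry. The paper's proof handles it by charging each reentry to a lost good, and each lost good to a prior bid, giving at most $B$ reentries. This yields $R(\rho d - 4E) - B \leq n + B$, i.e.\ $B \geq (R(\rho d - 4E) - n)/2$, and the factor-of-two halving is precisely what makes the arithmetic close with $T = 10/(\alpha\rho)$. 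Without this halving your claimed inequality $B \geq T(\rho d/2) - O(d)$ is not justified, and in fact even granting it, your final step does not produce a contradiction: $5d/\alpha - O(d) \leq 4d/\alpha$ simplifies to $1/\alpha = O(1)$, which holds for constant $\alpha$ and fails to rule out the non-halting case. The missing idea is the ``reentry $\leq$ lost-good $\leq$ bid'' chain and the resulting self-referential inequality in $B$.
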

\iffull
\begin{proof}
  Note that as long as the algorithm does not halt, at least $\rho d - 4E$
  bidders are unsatisfied at the beginning of the round.  They may not actually
  bid when their turn comes, because the prices may have changed. Let the number
  of bids among all bidders be $B$, and suppose we run for $R$ rounds. We expect
  at least $\rho d - 4E$ bids per round, so $R(\rho d - 4E) - B$ is a lower
  bound on the number of times a bidder is unsatisfied but fails to bid.

  In the matching case, if a bidder is unsatisfied at the beginning of the round
  but fails to bid during their turn, this must be because the prices have risen
  too high. Since prices are monotonic increasing, such a bidder will never be
  unsatisfied again.

  In contrast, the gross substitutes case is slightly more subtle. Bidders who
  are unsatisfied at the beginning of a round and don't bid on their turn may
  later become unsatisfied again. Clearly, this happens only when the bidder
  loses at least one good after they decline to bid: if they don't lose any
  goods, then the prices can only increase after they decline to bid. Thus, they
  will have no inclination to bid in the future.

  There are at most $n$ cases of the bidder dropping out entirely. Thus, the
  number of times bidders report wanting to reenter the bidding is at least $R
  (\rho d - 4E) - n - B$. Since a bidder loses at least one good each time they
  reenter, the number of reentries is at most the number of bids $B$. Hence,
  the number of bids in $R$ rounds is at least
  \begin{equation} \label{eq:gs-bid}
    B \geq \frac{ R (\rho d - 4E) - n }{2}.
  \end{equation}

  Now, let $s' = s - m = s - (2E + 1)$ be the effective supply and consider how
  many bids are possible.  Each of the $k$ types of goods will accept at most
  $s' + 2E = s + 1$ bids at each of $1/\alpha$ price levels, so there are at
  most $k(s + 1)/\alpha = (d + k)/\alpha$ possible bids.
  Setting the left side of \cref{eq:gs-bid} equal to $(d + k)/\alpha$, we find
  \[
    R \leq \frac{1}{\alpha} \left( \frac{2(d + k) + \alpha n}{\rho d - 4E}
    \right) := T_0,
  \]
  so taking $T \geq T_0$ suffices to ensure that the algorithm halts with no
  more than $\rho d$ bidders unsatisfied. Assuming $\rho d \geq 8E$ and $d \geq
  n$,
  \begin{mathdisplayfull}
    T_0 \leq \fullfrac{10d}{\alpha \rho d} = \fullfrac{10}{\alpha \rho} = T.
  \end{mathdisplayfull}%
  The requirement on $n$ and $d$ is then
  \begin{equation*}
    d \geq \frac{8E}{\rho} = \Omega \left( \frac{1}{\alpha \rho^2  \epsilon} \cdot
      \polylog \left( n, k, \frac{1}{\alpha}, \frac{1}{\rho},
        \frac{1}{\gamma} \right)  \right)
    \quad \text{and} \quad
    n \leq d,
  \end{equation*}
  as desired.
\end{proof}
\fi

\begin{lemma} \label{gs-acc}
  With probability at least $1 - \gamma$, $\palloc(\alpha, \rho, \eps)$ computes an
  $(\alpha, \beta, \rho)$-approximate allocation equilibrium where
  \[
    \beta = O \left( \frac{1}{\alpha \rho \epsilon } \cdot \polylog
      \left(n, k,
        \frac{1}{\alpha}, \frac{1}{\rho}, \frac{1}{\gamma} \right) \right),
  \]
  so long as
  \[
    d \geq \frac{8E}{\rho} = \Omega \left( \frac{1}{\alpha \rho^2  \epsilon} \cdot
      \polylog \left( n, k, \frac{1}{\alpha}, \frac{1}{\rho},
        \frac{1}{\gamma} \right)  \right)  \text{and }
    n \leq d.
  \]
\end{lemma}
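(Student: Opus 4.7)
The plan is to combine the three preceding lemmas (\Cref{gs-supply}, \Cref{gs-alpha}, and \Cref{gs-unsatisfied}), each of which delivers one of the three requirements in \Cref{alloc-eq}, after establishing that with probability at least $1-\gamma$ every counter has error bounded by $E$ throughout the execution. This mirrors the structure of \Cref{matching-acc} in the matching case.

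First I would invoke the utility guarantee of \Cref{counter-error}. The algorithm instantiates $k+1$ independent counters: one counter per type of good and the special counter$(0)$ that tracks unsatisfied bidders. Each is initialized with privacy parameter $\eps' = \eps/(2T)$ and horizon $nT$. Applying \Cref{counter-error} to counter$(0)$ with failure probability $\gamma/(2(k+1))$ and to each good counter with failure probability $\gamma/(2(k+1))$, and taking a union bound over all $k+1$ counters, yields that with probability at least $1-\gamma$ every counter has error at most
\begin{mathdisplayfull}
\frac{2\sqrt{2}}{\eps'}(\log nT)^{5/2}\log\!\left(\frac{4k}{\gamma}\right)+1 \;=\; E
\end{mathdisplayfull}
at every one of the $nT$ time steps. (The ``$+1$'' in the definition of $E$ absorbs the minor slack from handling the $k+1$-st counter alongside the $k$ good counters; any constant adjustment suffices since the bound depends only polylogarithmically on $k$.)

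Next, I would condition on this high-probability event and apply the three structural lemmas in turn. \Cref{gs-supply} immediately delivers conditions (2) and (3) of \Cref{alloc-eq}: every good is assigned to at most $s$ bidders, and every over-demanded good clears except for at most $\beta = 4E+1$ units of supply. \Cref{gs-unsatisfied}, which uses precisely the hypothesis $d \geq 8E/\rho$ and $n\leq d$, guarantees that at termination at most $\rho d$ bidders remain unsatisfied. For the remaining satisfied bidders, \Cref{gs-alpha} certifies that each is assigned a bundle $g(i)$ that is an $\alpha\lvert g(i)\rvert$-most preferred bundle at the final prices, which is exactly condition (1) of \Cref{alloc-eq}.

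I do not expect any genuine obstacle: the technical content is already packaged in the three lemmas, and the only new ingredient is the union bound over the $k+1$ counters. The one bookkeeping point to handle carefully is verifying that the value of $E$ set in the algorithm is large enough for the union-bound failure probability $\gamma/(2(k+1))$ rather than $\gamma/(2k)$; this is absorbed by the explicit $+1$ in the definition of $E$ inside $\palloc$, so no change to parameters is required.
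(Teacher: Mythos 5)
Your proposal follows the paper's proof exactly: condition (via a union bound and \Cref{counter-error}) on every counter having error at most $E$ throughout the run, then cite \Cref{gs-supply}, \Cref{gs-alpha}, and \Cref{gs-unsatisfied} for the three clauses of \Cref{alloc-eq}. The one place your bookkeeping slips is the claim that the ``$+1$'' in the definition of $E$ absorbs the change from $\gamma/(2k)$ to $\gamma/(2(k+1))$: the resulting change in the error bound from \Cref{counter-error} is $\frac{2\sqrt{2}}{\eps'}(\log nT)^{5/2}\log\!\left(\frac{k+1}{k}\right)$, which is a product of a potentially large prefactor with a term that is merely $\le 1$, so it need not be bounded by $1$. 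The fix is the one the paper implicitly uses — keep the per-counter failure probability at $\gamma/(2k)$ (matching the $\log(4k/\gamma)$ appearing in $E$); a union bound over all $k+1$ counters then yields total failure probability at most $(k+1)\gamma/(2k) \le \gamma$ for $k \ge 1$, with no parameter change needed.
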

\iffull
\begin{proof}
  Condition on the error for each counter being at most $E$ throughout the run
  of the algorithm. By \cref{counter-error}, this holds for any single counter
  with probability at least $1 - \gamma/2k$. By a union bound, this holds for
  all counters with probability at least $1 - \gamma$. The theorem follows by
  \cref{gs-supply,gs-alpha,gs-unsatisfied}.
\end{proof}
\fi

Now, it is straightforward to prove the welfare theorem (\cref{gs-welfare}).
\iffull
\begin{proof}
  The proof follows the matching case (\cref{matching-welfare}) closely.  By
  \cref{gs-acc}, $(g,p)$ is a $(\alpha/3, \beta,
  \alpha/3)$-approximate allocation equilibrium, where $\beta = 4E' + 1$. Then  all but
  $\alpha d/3$ bidders are satisfied and get a bundle $g(i)$ that is $\alpha
  |g(i)|$ optimal; let this set of bidders be $B$. Note that $\sum_i |g(i)| \leq
  d$. Let $g^*$ be any other allocation. Then,
  \begin{align*}
    \sum_{i \in B} v_i (g(i)) - p(g(i)) &\geq \sum_{i \in B} v_i(g^*(i)) -
    p(g^*(i)) - \frac{\alpha }{3} |g(i)| \\
    \sum_{i \in B} v_i (g^*(i)) - v_i(g(i)) &\leq \sum_{i\in B} p(g^*(i)) -
    p(g(i)) + \alpha d/3
    = \sum_{j \in G} p_j (N^*_j - N_j) + \alpha d/3
  \end{align*}
  where the $N_j$ is the number of good $j$ sold in $g$ and $N_j^*$ is the
  number of good $j$ sold in $g^*$. If $p_j > 0$, we know $N_j \geq s - \beta$,
  hence $N_j^* - N_j \leq \beta \leq \alpha s/3$. Since $p_j\leq 1$ for
  each good $j$, we have
  \[
  \sum_{j\in G} p_j(N_j^* - N_j) \leq \sum_j  p_j(N_j^* - N_j) \leq
  \alpha \sum_j s = \alpha d/3.
  \]
  Furthermore, at most $\alpha d/3$ bidders are left unsatisfied in
  the end; these bidders contribute at most $\alpha d /3$ welfare to the optimal
  matching since valuations are bounded by $1$. Putting it all together,
  \[
    \sum_{i} v_i (g^*(i)) - v_i(g(i)) \leq \alpha d/3 + \alpha d/3 +
    \alpha d/3 = \alpha d.
  \]
  The stated supply bound $s$ follows directly from \cref{gs-acc}.
\end{proof}
\else
The proof follows the matching case (\cref{matching-welfare}) quite closely; we
defer the proof to the full version.
\fi

\section{Lower Bounds}
\label{sec:lowerbounds}

Our lower bounds all reduce to a basic database reconstruction lower bound for
differential privacy.
\begin{restatable}{theorem}{reconstructionbeta} \label{thm:reconstruction}
  Let mechanism $\cM \colon \{0,1\}^n \rightarrow \{0,1\}^n$ be
  $(\epsilon, \delta)$-differentially private, and suppose that for all
  database $D$, with probability at least $1-\beta$, $\|\cM (D) -
  D \|_1 \leq \alpha n$. Then,
  \[
    \alpha \geq 1 - \frac{e^\epsilon + \delta}{(1+e^\epsilon) (1-\beta)}
    := \theta(\epsilon, \delta, \beta).
  \]
\end{restatable}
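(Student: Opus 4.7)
The plan is to upper-bound, coordinate by coordinate, the probability that $\cM$ correctly recovers a single bit of the database, and then convert this per-coordinate bound into a lower bound on the expected $\ell_1$-error against a uniformly random database. The randomization of $D$ is the key trick: it forces the neighboring-database argument from differential privacy to speak directly about accuracy.

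First I would fix a coordinate $i \in [n]$ and any setting $D_{-i}$ of the remaining bits. Let $D^{(b)}$ denote the database that sets coordinate $i$ to $b$ and agrees with $D_{-i}$ elsewhere, and set $p_b := \Pr[\cM(D^{(b)})_i = b]$. Since $D^{(0)}$ and $D^{(1)}$ are $i$-neighbors, applying the $(\epsilon,\delta)$-DP guarantee to the events $\{\cM(\cdot)_i = 1\}$ and $\{\cM(\cdot)_i = 0\}$ yields the two inequalities $p_1 \le e^\epsilon(1-p_0) + \delta$ and $p_0 \le e^\epsilon(1-p_1) + \delta$. Adding these gives $(1+e^\epsilon)(p_0 + p_1) \le 2(e^\epsilon + \delta)$, so in particular $(p_0 + p_1)/2 \le (e^\epsilon + \delta)/(1+e^\epsilon)$. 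Crucially, the left-hand side is exactly the probability that $\cM$ correctly outputs coordinate $i$ when $D_i$ is an unbiased random bit (with $D_{-i}$ fixed).

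Next I would average the per-coordinate bound over a uniformly random $D \in \{0,1\}^n$. By linearity of expectation,
\[
  \Ex{D}{\,n - \|\cM(D) - D\|_1\,} \;=\; \sum_{i=1}^n \Prob{D,\cM}{\cM(D)_i = D_i} \;\le\; n \cdot \frac{e^\epsilon + \delta}{1 + e^\epsilon}.
\]
On the other hand, the accuracy hypothesis, applied to each fixed $D$ and with the trivial bound $\|\cM(D) - D\|_1 \le n$ on the failure event, gives $\Ex{D}{\|\cM(D) - D\|_1} \le (1-\beta)\alpha n + \beta n$. Rearranging the resulting inequality $(1-\beta)\alpha + \beta \ge 1 - (e^\epsilon+\delta)/(1+e^\epsilon)$ produces exactly the stated bound $\alpha \ge 1 - (e^\epsilon + \delta)/((1+e^\epsilon)(1-\beta))$.

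The one point that requires a little care is getting the tight constant $1/(1+e^\epsilon)$ rather than the looser $1/2$ one obtains by using only one direction of the DP inequality; averaging the two symmetric DP bounds is what squeezes out the correct factor. Everything else is a routine averaging argument over the uniform distribution on databases, so this should be the entire proof.
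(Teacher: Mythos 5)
Your proof is correct and follows a slightly different route from the paper's. The paper fixes an arbitrary database $D$, samples a random coordinate $i$, lets $D'$ be the $i$-neighbor, and applies the accuracy hypothesis to both $D$ and $D'$ before invoking the DP inequality once to relate $\Pr[\cM(D)_i = D_i]$ and $\Pr[\cM(D')_i = D'_i]$. You instead fix the coordinate, apply the $(\epsilon,\delta)$-DP inequality symmetrically in both directions to bound the per-coordinate success probability $(p_0+p_1)/2$ by $(e^\epsilon+\delta)/(1+e^\epsilon)$, and only then average over a uniformly random database. Both derivations reach the same tight constant. Your choice to average over $D$ and $i$ jointly is arguably cleaner: the accuracy hypothesis constrains only the total $\ell_1$-error of $\cM$ on each fixed database, so turning it into a per-coordinate bound for the $i$-dependent neighbor $D'(i)$ while averaging over $i$ alone (as the paper's write-up does) needs some extra care, since for a fixed $D$ nothing prevents coordinate $i$ from always falling among the $\alpha n$ erroneous entries of $\cM(D'(i))$. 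Averaging over a uniformly random $D$ sidesteps this, because then $D$ and its $i$-neighbor are both uniform, and that is exactly what your argument does from the start. The core idea in both proofs is the same: randomize the target bit so that DP forces the mechanism's per-coordinate accuracy to be close to a coin flip.
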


In other words, no $(\epsilon, \delta)$-private mechanism can reconstruct more
than a fixed constant fraction of its input database.  For $\epsilon, \delta,
\beta$ small, $\theta(\epsilon, \delta, \beta) \sim 1/2$.  Informally, this theorem
states that a private reconstruction mechanism can't do much better than
guessing a random database.  Note that this holds even if the adversary doesn't
know which fraction was correctly reconstructed. This theorem is folklore; a
proof can be found in \thelongref{recons-details}.

Our lower bounds will all be proved using the following pattern.
\begin{itemize}
  \item First, we describe how to convert a database $D \in \{0, 1\}^n$ to
    a market, by specifying the bidders, the goods, and the valuations $v_{ij}
    \in [0, 1]$ on goods.
  \item Next, we analyze how these valuations change when a single bit in $D$ is
    changed. This will control how private the matching algorithm is with
    respect to the original database, when applied to this market.
  \item Finally, we show how to output a database guess $\hat{D}$ from the
    matching produced by the private matching algorithm.
\end{itemize}
This composition of three steps will be a private function from $\{0, 1\}^n
\rightarrow \{0, 1\}^n$, so we can apply \cref{thm:reconstruction} to lower
bound the error, implying a lower bound on the error of the matching algorithm.

\subsection{Standard Differential Privacy}
Note that \cref{alg:matching} produces market clearing prices under standard
differential privacy. We will first show that this is not possible if each good
has unit supply. Recall that prices correspond to an {\em $(\alpha, \beta,
  \rho)$-approximate matching equilibrium} if all but $\rho$ bidders can be
allocated to a good such that their utility is within
$\alpha$ of their favorite good (\cref{matching-eq}). We will ignore the $\beta$
parameter, which controls how many goods are left unsold.
\iffull\else (We defer the proof to the full version.)\fi

% \ar{We should use the same notion we already defined earlier as
%   $(\alpha,\beta,\rho)$-approximate matching equilibrium}.
% \jh{Fixed.}

\begin{theorem} \label{lb-prices}
  Let $n$ bidders have valuations $v_{ij} \in [0, 1]$ for $n$ goods. Suppose
  that mechanism $\cM$   is $(\epsilon, \delta)$-differentially private, and
  calculates prices corresponding to an $(\alpha, \beta, \rho)$-approximate
  matching equilibrium for $\alpha < 1/2$ and some $\beta$ with probability
  $1 - \gamma$. Then,
  \begin{mathdisplayfull}
    \rho \geq \frac{1}{2} \theta(2 \epsilon, \delta(1 + e^\epsilon), \gamma).
  \end{mathdisplayfull}%
  Note that this is independent of $\alpha$.
\end{theorem}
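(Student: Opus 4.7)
The plan is to follow the three-step reduction template spelled out at the start of this section: (i) encode a database $D \in \{0,1\}^n$ as a matching market, (ii) run $\cM$ on this market, and (iii) decode a guess $\hat{D}$ from the output prices; I will then bound $\|\hat{D}-D\|_1$ by a multiple of $\rho$ and appeal to \Cref{thm:reconstruction}.

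For the encoding, for each bit $D_i$ I introduce a gadget consisting of two bidders both competing for a dedicated good $g_i$ of unit supply, where both of these bidders' valuations for $g_i$ are chosen as a function of $D_i$ (concretely, both set to $D_i$, so to $0$ or $1$). This has two important consequences. First, flipping $D_i$ alters exactly two bidders' valuations, so applying group privacy of size two to the $(\epsilon,\delta)$-differentially private mechanism $\cM$ shows that $\cM\circ E$ is $(2\epsilon, \delta(1+e^\epsilon))$-differentially private as a function of $D$. Second, any $(\alpha,\beta,\rho)$-approximate matching equilibrium must place $p_{g_i}$ in one of two intervals depending on $D_i$—roughly $[0,\alpha]$ when $D_i=0$ and $[1-\alpha,1+\alpha]$ when $D_i=1$—and these intervals are disjoint precisely under the hypothesis $\alpha<1/2$. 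The decoder $\hat{D}_i$ will simply threshold $p_{g_i}$ at $1/2$; post-processing preserves differential privacy of the full composition.

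The central accuracy claim to establish is that whenever $\hat{D}_i \neq D_i$, at least one of the two bidders in gadget $i$ must fail to be matched to an $\alpha$-approximate favorite, in every matching compatible with the published prices. This follows by a short case analysis on whether $g_i$ is allocated and, if so, to whom: disjointness of the two intervals together with the fact that the two gadget bidders have \emph{identical}, $D_i$-dependent values for $g_i$ rules out any allocation that simultaneously makes both gadget bidders $\alpha$-satisfied at a ``wrong'' price. Summed over the $n$ gadgets, the number of mis-decoded bits is at most the number of unsatisfied bidders, which is at most $\rho \cdot 2n$ out of the $2n$ bidders in total. Thus, with probability at least $1-\gamma$, $\|\hat{D}-D\|_1 \le 2\rho n$, i.e., the composition $\hat{D}\circ\cM\circ E\colon \{0,1\}^n\to\{0,1\}^n$ is $(2\rho,\gamma)$-accurate in the sense of \Cref{thm:reconstruction}. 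Applying that theorem with privacy parameters $(2\epsilon,\delta(1+e^\epsilon))$ and error $2\rho$ yields $2\rho \ge c(2\epsilon,\delta(1+e^\epsilon),\gamma)$, which is precisely the stated bound.

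The main obstacle I expect is verifying the ``forced unsatisfaction'' claim while accounting for the mechanism's adversarial freedom in choosing $\beta$ and in leaving some goods unallocated. The delicate case is when $D_i=0$ but $p_{g_i}>\alpha$: naively the mechanism can try to avoid an unsatisfied bidder by leaving $g_i$ unsold, so I will need to argue carefully (using that both gadget bidders share the same value) that a matched bidder incurs utility below the $\alpha$ slack whenever $p_{g_i}$ is outside the correct interval. This is exactly where the strict inequality $\alpha<1/2$ is needed: it prevents a single ``universal'' price from satisfying both gadget bidders independently of $D_i$ and so pins the equilibrium prices to the $D_i$-specific intervals used by the decoder.
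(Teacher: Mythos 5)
Your overall template (encode $D$ in a market, apply group privacy of size two, decode from prices, invoke \Cref{thm:reconstruction}) matches the paper's, and the privacy accounting is right. However, the gadget you chose is different from the paper's, and that difference is where the proof breaks.

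You use one good $g_i$ per gadget, with both gadget bidders assigning it value $D_i$ and value $0$ to every other good. The paper instead uses two goods $\mathbf{0}_i, \mathbf{1}_i$ per gadget, with both gadget bidders valuing $\mathbf{1}_i$ at $D_i$ and $\mathbf{0}_i$ at $1-D_i$ (and $0$ elsewhere). In the paper's encoding every gadget bidder values exactly one good at $1$ for \emph{either} value of the bit --- the bit chooses \emph{which} good, not \emph{whether} one exists. In yours, a $D_i=0$ gadget's bidders have the all-zero valuation and therefore no preferences at all.

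This kills the crucial ``forced unsatisfaction'' step. When $D_i=0$, each gadget bidder $b$ has $v_{bj}=0$ for all $j$, so $b$ is $\alpha$-satisfied at $\perp$ under \emph{every} price vector: the utility at $\perp$ is $0$, and every alternative gives $v_{bj}-p_j\le 0 \le \alpha$. Condition (3) of \Cref{matching-eq} offers no help because the theorem quantifies ``for some $\beta$'', so the mechanism is free to leave a high-priced good unsold. Hence a mechanism that publishes $p_{g_i}=1$ on all gadgets is consistent with an $(\alpha,\beta,0)$-approximate matching equilibrium whenever $D_i=0$; your threshold decoder reads $\hat D_i = 1$ on all such gadgets, yet $\rho=0$. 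You flagged exactly this case as ``delicate'' and proposed to argue that ``a matched bidder incurs utility below the $\alpha$ slack whenever $p_{g_i}$ is outside the correct interval,'' but the problem is that there need not be a matched bidder in the gadget at all; the two gadget bidders are indifferent to $g_i$ at any price, so no argument of the kind you sketch can close the gap. No value of $\alpha<1/2$ and no number of copies of these zero-valuation bidders fixes it.

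The paper's two-good gadget repairs this in exactly the way you'd need: regardless of $D_i$, both gadget bidders value one gadget good at $1$, only one bidder can receive it (unit supply), and the losing bidder's $\alpha$-satisfaction constraint (even if they end up at $\perp$) forces the price of that good above $1-\alpha > 1/2$. The other gadget good, valued at $0$ by everyone, supplies the $< 1/2$ side of the threshold. A correct proof along your lines would have to give each gadget bidder a $D_i$-dependent high-value good for both values of the bit; that is precisely the two-good construction, not the single-good one. (Separately, to match the theorem's ``$n$ bidders, $n$ goods'' you would want $n/2$ gadgets, as in the paper, rather than $n$; your version has $2n$ bidders. This is a cosmetic normalization and not the real issue.)
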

\iffull
\begin{proof}
  Let $D \in \{0, 1\}^{n/2}$ be a private database and construct the following
  market. For each bit $i$ we construct the following gadget, consisting of two
  goods $\mathbf{0}_i, \mathbf{1}_i$ and two bidders, $b_i, \bar{b_i}$. Both
  bidders have valuation $D_i$ for good $\mathbf{1}_i$, $1 - D_i$ for good
  $\mathbf{0}_i$, and valuation $0$ for the other goods. Evidently, there are
  $n$ bidders and $n$ goods.

  Note that changing a bit $i$ in $D$ changes the valuation of exactly two
  bidders in the market: $b_i$ and $\bar{b_i}$. Therefore, mechanism $\cM$  is
  $(2\epsilon, \delta(1 + e^\epsilon))$-differentially private with respect to
  $D$. Let the prices be $p_{0i}, p_{1i}$. To guess the database $\hat{D}$, we
  let $\hat{D}_i = 1$ if $p_{1i} > 1/2$, otherwise $\hat{D}_i = 0$.

  By assumption, $\cM$   produces prices corresponding to an $(\alpha, \beta,
  \rho)$-approximate matching equilibrium with probability $1 - \gamma$.  We do
  not have access to the matching, but we know the prices must correspond to
  {\em some} matching $\mu$. Then, for all but $\rho n$ gadgets, $\mu$ matches
  both bidders to their $\alpha$-approximate favorite good and both goods are matched
  to bidders who receive $\alpha$-approximate favorite goods.

  Consider such a gadget $i$. We will show that exactly one of
  $p_{0i}$ or $p_{1i}$ is greater than $1/2$, and this expensive good
  corresponds to bit $D_i$. Consider one of the bidders in this gadget, and
  suppose she prefers good $g_+$ with price $p_+$, while he received good $g_-$
  with price $p_-$.  Since she receives an $\alpha$-approximate favorite good,
  \begin{mathdisplayfull}
    (1 - p_+) - (0 - p_-) \leq \alpha,
    \longquad \text{so} \longquad
    p_+ - p_-             \geq 1 - \alpha > 1/2.
  \end{mathdisplayfull}%
  So $p_+ > 1/2$ and $p_- < 1/2$. Note that good $g_+$ is in the gadget, while
  good $g_-$  may not be. So, one of the goods in the gadget has price strictly
  greater than $1/2$. The other good in the gadget is an
  $\alpha$-approximate favorite
  good for some bidder. All bidders have valuation $0$ for the good, hence its
  price must be strictly less than $1/2$.

  Thus, the reconstruction procedure will correctly produce bit for each such
  gadget, and so will miss at most $\rho n$ bits with probability at least $1 -
  \gamma$. The combined reconstruction algorithm is a map from $\{0, 1\}^{n/2}
  \rightarrow \{0, 1\}^{n/2}$, and $(2\epsilon, \delta(1 +
  e^\epsilon))$-differentially private. By \cref{thm:reconstruction},
  \begin{mathdisplayfull}
    2 \rho \geq \theta(2 \epsilon, \delta(1 + e^\epsilon), \gamma).
  \end{mathdisplayfull}%
\end{proof}
\fi

\subsection{Separation Between Standard and Joint Differential Privacy}

While we can compute an approximate maximum-weight matching under joint privacy
when the supply of each good is large \shortbreak (\cref{matching-acc}), this is not
possible under standard differential privacy even with infinite supply.
% (In fact, it is not possible with finite supply either.)

\begin{theorem} \label{lb-alloc}
  Let $n$ bidders have valuations $v_{ij} \in \{0, 1\}$ for $2$ goods with
  infinite supply. Suppose that mechanism $\cM$  is $(\epsilon,
  \delta)$-differentially private, and computes a matching with weight at least
  $\OPT - \alpha n$ with probability $1 - \gamma$. Then,
  \begin{mathdisplayfull}
    \alpha \geq \theta(\epsilon, \delta, \gamma).
  \end{mathdisplayfull}%
  % \ar{Does the other lower bound for standard differential privacy show anything
  %   this one doesn't? Why not just use this one?}
  % \jh{Yes, this one is better.}
\end{theorem}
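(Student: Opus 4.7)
The plan is to follow the three-step reduction template laid out at the beginning of Section \ref{sec:lowerbounds}: convert an arbitrary database into a market instance, observe that the resulting reduction is differentially private with respect to the database, and then use the output matching to reconstruct the database. Applying \Cref{thm:reconstruction} will then force $\alpha$ to be at least $c(\epsilon,\delta,\gamma)$.

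Given a database $D \in \{0,1\}^n$, I would construct a market with $n$ bidders and two goods, call them $0$ and $1$, each of infinite supply. Bidder $i$ has valuation $v_{i,D_i} = 1$ and $v_{i,1-D_i} = 0$; that is, bidder $i$ strictly prefers the good labeled by her corresponding bit. Because supply is infinite, the optimal matching assigns every bidder to her preferred good and achieves welfare $\OPT = n$. Moreover, changing a single bit $D_i$ changes only the valuation of bidder $i$, so running $\cM$ on this instance is $(\epsilon,\delta)$-differentially private with respect to $D$ (not merely jointly private, since we are in the standard DP setting).

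Now suppose $\cM$ returns a matching $\mu$ of welfare at least $\OPT - \alpha n = n - \alpha n$ with probability at least $1-\gamma$. Since each bidder contributes $1$ if and only if she is matched to her preferred good, and $0$ otherwise, the number of bidders matched to their preferred good is at least $(1-\alpha)n$. Define the reconstruction $\hat{D}_i = \mu(i)$ (the label of the good bidder $i$ receives). Then $\|\hat D - D\|_1 \leq \alpha n$ with probability at least $1-\gamma$. The composition $D \mapsto \hat D$ is $(\epsilon,\delta)$-differentially private as a post-processing of $\cM$, and maps $\{0,1\}^n \to \{0,1\}^n$. Plugging into \Cref{thm:reconstruction} yields $\alpha \geq c(\epsilon,\delta,\gamma)$, as desired.

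There is no real obstacle here; the main thing to verify is that $\hat D$ is a well-defined function of $\mu$'s output (which it is, since we may assume without loss of generality that $\mu$ assigns each bidder to one of the two goods, breaking ties arbitrarily when a bidder is unmatched), and that the post-processing preserves the privacy parameters exactly. The only subtle point worth noting in the writeup is that infinite supply makes the problem \emph{allocatively trivial}---every bidder can get her favorite good---so any welfare loss is forced to be purely informational, which is precisely why the lower bound bites even in the most permissive supply regime.
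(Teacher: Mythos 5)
Your proposal is correct and follows essentially the same approach as the paper's proof: the same encoding of $D$ into valuations over two infinite-supply goods, the same observation that $\OPT = n$ with each bidder getting her preferred good, the same reconstruction rule $\hat{D}_i = \mu(i)$ (handling unmatched bidders arbitrarily), and the same invocation of \Cref{thm:reconstruction}. No meaningful differences to report.
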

\begin{proof}
  Let $D \in \{0, 1\}^{n}$.  We assume two goods, $\mathbf{0}$ and $\mathbf{1}$.
  We have one bidder $b_i$ for each bit $i \in [n]$, who has valuation $D_i$ for
  $\mathbf{1}$, and valuation $1 - D_i$ for $\mathbf{0}$.  Since changing a bit
  changes a single bidder's valuation, applying $\cM$  to this market is
  $(\epsilon, \delta)$-private with respect to $D$. To guess the database
  $\hat{D}$, we let $\hat{D}_i$ be $0$ if $b_i$ is matched to $\mathbf{0}$, $1$
  if $b_i$ is matched to $\mathbf{1}$, and arbitrary otherwise.

  Note that the maximum welfare matching assigns each $b_i$ the good
  corresponding to $D_i$, and achieves social welfare $\OPT = n$. If $\cM$ computes an
  matching with welfare $\OPT - \alpha n$, it must give all but an
  $\alpha$ fraction of bidders $b_i$ the good corresponding to $D_i$.  So, the
  reconstructed database will miss at most $\alpha n$ bits with probability $1 -
  \gamma$, and by \cref{thm:reconstruction},
  \begin{mathdisplayfull}
    \alpha \geq \theta(\epsilon, \delta, \gamma).
  \end{mathdisplayfull}%
\end{proof}
Note that this gives a separation: under joint differential privacy,
\cref{alg:matching} can release a matching with welfare $\OPT - \alpha n$ for
any $\alpha$, provided supply $s$ is large enough (by \cref{matching-welfare}),
while this is not possible under standard differential privacy even with {\em
  infinite} supply.

\subsection{Joint Differential Privacy}

Finally, we show that a large supply assumption is necessary in order to compute
an additive $\alpha$ maximum welfare matching under joint differential privacy.

\begin{theorem} \label{lb-jp}
  Let $n$ bidders have valuations $v_{ij} \in [0, 1]$ for $k$ types of goods
  with supply $s$ each. Suppose mechanism $\cM$  is $(\epsilon, \delta)$-joint
  differentially private for $\epsilon, \delta < 0.1$, and calculates a matching
  with welfare at least $\OPT - \alpha n$ with probability $1 - \gamma$ for
  $\gamma < 0.01$, and all $n, k, s$.  Then, $s = \Omega(\sqrt{1/\alpha}).$
\end{theorem}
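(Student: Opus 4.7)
Following the three-step template laid out at the start of Section~\ref{sec:lowerbounds}, I would reduce the claim to Theorem~\ref{thm:reconstruction}. Given a database $D \in \{0,1\}^N$, I would construct a market consisting of $N$ disjoint gadgets, where gadget $i$ contains a dedicated good $g_i$ of supply $s$, together with $s-1$ ``contender'' bidders of value $1$ for $g_i$, one ``encoding'' bidder $b_i$ whose value for $g_i$ equals $D_i$, and one ``probe'' bidder $p_i$ with value $1/2$ for $g_i$. All bidders value goods outside their gadget at $0$, so the gadgets decouple.

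\textbf{Welfare signature of $D_i$.}
The optimal matching in gadget $i$ behaves differently in the two cases: if $D_i = 1$ there are $s$ bidders of value $1$ (the contenders and $b_i$) and the probe is left unmatched, yielding welfare $s$; if $D_i = 0$ only $s-1$ bidders have value $1$ and the probe fills the last slot, yielding welfare $s - \tfrac{1}{2}$. A quick case analysis shows that any matching on gadget $i$ that is not optimal in this sense loses at least $\tfrac{1}{2}$ in welfare, and in particular the allocation of $p_i$ to $g_i$ must record whether $D_i$ is $0$ or $1$.

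\textbf{JDP to per-bit DP.}
Because $D_i$ enters only through $b_i$'s valuation, joint $\eps$-differential privacy implies that the joint allocation delivered to the $n-1$ bidders other than $b_i$ is an $\eps$-differentially private function of $D_i$. Since $p_i \neq b_i$, the bit $\hat{D}_i := \mathbf{1}[p_i \text{ is unmatched}]$ is a post-processing of the non-$b_i$ view and hence an $\eps$-DP function of $D_i$ alone. Applying Theorem~\ref{thm:reconstruction} coordinate by coordinate to the random-$D$ setting (averaging the per-coordinate error probabilities), I would obtain $\Expectation\|\hat{D}-D\|_1 \geq c N$ for a constant $c = c(\eps,\delta,\gamma) > 0$.

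\textbf{Welfare accounting.}
Each wrong bit costs at least $\tfrac{1}{2}$ welfare in its gadget by the case analysis above, so the total welfare shortfall is at least $c N/2$. The market contains $n = N(s+1)$ bidders, so the assumed guarantee $\mathrm{OPT} - \alpha n$ forces $\alpha N(s+1) \geq cN/2$, i.e., $s = \Omega(1/\alpha)$, which in particular implies $s = \Omega(\sqrt{1/\alpha})$ as required. The one subtle point is the JDP-to-DP reduction: one must be careful to extract each $\hat{D}_i$ solely from the non-$b_i$ view (not from the full matching, which need not be DP in $D_i$), but this is precisely what Definition of joint differential privacy supplies. The welfare-cost computation and the final algebra are routine once the gadget is in place.
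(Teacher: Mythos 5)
Your proposal follows the same high-level template as the paper's argument for \Cref{lb-jp} (reduce to \Cref{thm:reconstruction} via disjoint gadgets, and pass from joint to standard privacy by observing that the reconstructor reads only the non-$b_i$ view), but your gadget is genuinely different and in fact sharper. The paper's gadget uses \emph{two} goods $\mathbf{0}_i,\mathbf{1}_i$ per bit, each of supply $s$, one ``real'' bidder $b_i$ whose preference between them encodes $D_i$, and $s$ identical ``spy'' bidders with a tiny valuation $\eta = 1/(4s)$ for a randomly chosen good of the pair; the reconstructor reads off $D_i$ from whether the spies are all co-located. Because hiding a bit in that gadget costs only $\eta = \Theta(1/s)$ welfare, the paper must run a small linear program to show that a constant fraction of the bits must nevertheless leak, and the resulting calculation yields only $s = \Omega(\sqrt{1/\alpha})$. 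Your single-good gadget engineers \emph{explicit contention}: when $D_i=1$ there are $s+1$ bidders (the $s-1$ contenders, the encoder, the probe) for $s$ slots, so hiding the bit necessarily displaces a value-$1$ bidder in favor of the value-$1/2$ probe and costs a constant $1/2$ in welfare. Pushing the reduction through then gives $2\alpha(s+1) \geq c(\epsilon,\delta,\gamma)$, i.e.\ $s = \Omega(1/\alpha)$, which strictly strengthens the stated $\Omega(\sqrt{1/\alpha})$ and dispenses with the LP entirely.

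Two small repairs are needed for the argument to be airtight. First, the reconstruction bit should be $\hat D_i := \mathbf{1}[\,p_i \text{ is not assigned } g_i\,]$ rather than $\mathbf{1}[\,p_i \text{ unmatched}\,]$: under the latter rule a mechanism could park $p_i$ on some distant good, where it contributes zero welfare, and fool the reconstructor without paying any cost \emph{inside gadget $i$}. With the corrected rule the gadget-by-gadget accounting is tight: writing $\mathrm{SW}_i(\mu)$ for the welfare contributed by gadget-$i$ bidders, if $D_i=1$ and $p_i$ sits on $g_i$ then at most $s-1$ value-$1$ bidders fit and $\mathrm{SW}_i \leq s - \tfrac12$; if $D_i=0$ and $p_i$ sits elsewhere then $\mathrm{SW}_i \leq s-1$. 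Either way $\OPT_i - \mathrm{SW}_i \geq \tfrac12$, and since gadgets decouple, $\OPT - \mathrm{SW}(\mu) \geq \tfrac12\|\hat D - D\|_1$. Second, you do not need the ``random-$D$, averaged-error'' rephrasing of \Cref{thm:reconstruction}: the welfare hypothesis already gives, for every fixed $D$, that with probability at least $1-\gamma$, $\|\hat D - D\|_1 \leq 2\alpha n = 2\alpha(s+1)N$, and $\hat D$ is a post-processing of the $-b_i$ view for each $i$, hence $(\epsilon,\delta)$-DP in $D$, so \Cref{thm:reconstruction} applies verbatim with $\beta=\gamma$ to give $2\alpha(s+1) \geq c(\epsilon,\delta,\gamma)$. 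Both points are cosmetic; they do not change your conclusion that $s = \Omega(1/\alpha)$.
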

% \ar{What do we mean by an $\alpha$-approximate max welfare matching here?
%   Normally we seem to mean error $\alpha\cdot n$, but by the scaling here, it
%   seems we mean additive error $\alpha$. Lets be consistent and say that to get
%   a matching of weight $\geq$ OPT - $\alpha n$, we need $s \geq BLAH$. }
% \jh{Fixed.}
\iffull
\begin{proof}
  Let $k = n/(s+1)$.
  % \sw{$k$ is different from the theorem statement?}
  % \jh{My mistake.}
  Given a private database $D \in \{0, 1\}^k$, construct the following market.
  For each bit $i$, we construct a gadget with two goods $\mathbf{0}_i,
  \mathbf{1}_i$, each with supply $s$. Each gadget has a distinguished bidder
  $b_i$ and $s$ identical bidders, all labeled $\bar{b_i}$.  Let bidder $b_i$,
  who we call the {\em real bidder}, have valuation $D_i$ for $1_i$, and $1 -
  D_i$ for $0_i$.  Bidders $\bar{b_i}$, which we call the {\em spy bidders}, all
  have the same valuation: $\eta = \frac{1}{4s}$
  % \sw{$4/s$ ?}
  % \jh{Fixed.}
  for $\mathbf{0}_i$ or $\mathbf{1}_i$ drawn at random, and $0$ for all other
  goods (in and out of the gadget). We say a bidder {\em prefers} a good if they
  have positive valuation for the good.
  % \sw{should we say both the real and spy bidders only prefer goods in
  % the same gadget?}
  % \jh{Done.}

  Note that changing a bit in $D$ changes a single bidder's valuation. Also note
  that the spy bidders' valuations do not depend on $D$. Hence, by joint
  differential privacy of $\cM$, the function that maps the above market
  through $\cM$  to the allocation of just the spy bidders is $(\epsilon,
  \delta)$-differentially private with respect to an entry change in $D$.

  We will describe how to guess $\hat{D}$ based on just the spy bidders' joint
  view, i.e., the goods they are assigned. This reconstruction procedure will
  then be $(\epsilon, \delta)$-differentially private, and we can apply
  \cref{thm:reconstruction} to lower bound the error of $\cM$ . For every bit $i
  \in [k]$,
  % \sw{$[k]$?}
  % \jh{Yes, thanks.}
  let $\hat{D}_i$ be $1$ if the spy bidders in gadget $i$ are all assigned to
  $\mathbf{0}_i$, $0$ if the spy bidders in gadget $i$ are all assigned to
  $\mathbf{1}_i$, and uniformly random otherwise.

  We'll say that a gadget {\em agrees} if the spy bidders and real bidder prefer
  the same good. Gadgets that don't agree, {\em disagree}.  Let $w$ be the
  number of gadgets that agree.
%  \sw{$s$ already shows up as the supply}
  By construction, gadgets agree independently with probability $1/2$ each.
  Hence, Hoeffding's inequality gives
  \[
    \prob{\left|w - \frac{k}{2}\right| \leq \lambda k} \geq 1 -
    2\exp(-2 \lambda^2 k)
  \]
  for some $\lambda$ to be specified later; condition on this event. With
  probability at least $1 - \gamma$, mechanism $\cM$ computes a matching with
  welfare at least $\OPT - \alpha n$; condition on this event as well.  Note
  that the optimum welfare is $1 + (s - 1) \eta$ for gadgets that agree, and $1
  + s \eta$ for gadgets that disagree, hence $\OPT = w  (1 + (s-1) \eta) + (k -
  w) (1 + s \eta)$ in total.
  % \sw{might have mixed two $s$'s together}
  % \jh{Yes, fixed now hopefully.}

%  \ar{This argument is about individual gadgets. Say that.}

  For each gadget, there are several possible allocations.  Intuitively, an
  assignment gives social welfare, but may also lead to a bit being
  reconstructed. Let $RB(\mu) = \|D - \hat{D}\|_1$ be the error of the
  reconstruction when the matching is $\mu$. We'll argue that any matching $\mu$
  with nearly optimal social welfare must result in large expected
  reconstruction $\mathbb{E}[RB(\mu)]$. By linearity,
  \[
    \mathbb{E}[RB(\mu)] = \sum_{i \in [k]} \prob{D_i = \hat{D}_i},
  \]
  so it suffices to focus on gadget at a time.

  First, suppose the gadget $i$ agrees. The matching $\mu$ can give the
  preferred good to the bidder, the spies, or neither. If the preferred good
  goes to the bidder, this gives at most $1 + (s -1) \eta$ social welfare. Not
  all the spies get the same good, so
  \begin{mathdisplayfull}
    \prob{D_i = \hat{D}_i} = \fullfrac{1}{2}.
  \end{mathdisplayfull}%
  If the preferred good goes to the spies, then this contributes $s \eta$ to
  social welfare, and
  \begin{mathdisplayfull}
    \prob{D_i = \hat{D}_i} = 0.
  \end{mathdisplayfull}%
  Note that it doesn't matter whether the bidder is assigned in $\mu$, since the
  social welfare is unchanged and the reconstruction algorithm doesn't have
  access to the bidder's allocation. There are other possible allocations, but
  they are dominated by these two choices since they get less social welfare for
  higher reconstruction probability.

  Now, suppose gadget $i$ disagrees. There are several possible allocations.
  First, both the bidder and the spies may get their favorite good.  This gives
  $1 + s \eta$ welfare, and
  \begin{mathdisplayfull}
    \prob{D_i = \hat{D}_i} = 1.
  \end{mathdisplayfull}%
  Second, the bidder may be assigned their favorite good, and at most $s -1$
  spies may be assigned their favorite good.  This leads to $1 + (s - 1) \eta$
  welfare, with
  \begin{mathdisplayfull}
    \prob{D_i = \hat{D}_i} = \fullfrac{1}{2}.
  \end{mathdisplayfull}%
  Again, there are other possible allocations, but they lead to less social
  welfare or higher reconstruction probability. We say the four allocations
  above are {\em optimal}.

  Let $a_1, a_2$ be the fractions of agreeing gadgets with the two optimal
  agreeing allocations, and $d_1, d_2$ be the fractions of disagreeing gadgets
  with the two optimal disagreeing allocations.  Let $t$ be the fraction of
  agreeing pairs. The following linear program minimizes $(1/k)
  \mathbb{E}[RB(\mu)]$ over all matchings $\mu$ achieving an
  $\alpha$-approximate maximum welfare matching for supply $s$.
  \begin{align*}
    LP_s         & := & \text{minimize: }  & \frac{1}{2}a_1 + d_1 + \frac{1}{2}d_2 \\
                 &    & \text{such that: } & a_1 + a_2 \leq t \\
                 &    &                    & d_1 + d_2 \leq 1 - t \\
                 &    &                    & \frac{1}{2} - \lambda \leq t \leq \frac{1}{2} + \lambda \\
                 &    &                    & (1 + (s - 1) \eta) a_1 + s \eta a_2
                                            + (1 + s \eta) d_1 + ( 1+ (s - 1) \eta) d_2 \\
                 &    &                    & \geq t (1 + (s - 1) \eta) + (1 - t)
                 (1 + s \eta) - \frac{\alpha n}{k}
  \end{align*}
  The last constraint is the welfare requirement, the second to last constraint
  is from conditioning on the number of agreeing gadgets, and the objective is
  $(1/k) \mathbb{E}[RB(\mu)]$.

  Plugging in $\eta = \frac{1}{4s}, \lambda = 1/128, \alpha = \frac{k}{16ns}$
  and solving, we find
  \[
    (a_1, a_2, d_1, d_2, t) = \left(\frac{65}{128}, 0, \frac{31}{128},
      \frac{1}{4}, \frac{65}{128}\right)
  \]
  is a feasible solution for all $s$ with objective $\alpha' = 159/256$. To
  show that this is optimal, consider the dual problem:

  \begin{align*}
    DUAL_s          & := & \text{maximize: }  &
                                - \rho_2
                                + \left( \frac{1}{2} - \lambda \right) \rho_3
                                - \left( \frac{1}{2} + \lambda \right) \rho_4
                                + \left( 1 + s \eta - \frac{\alpha n}{k}\right) \rho_5 \\
                    &    & \text{such that: } & - \rho_1 + (1 + (s - 1) \eta)
                                                \rho_5 \leq \frac{1}{2} \\
                    &    &                    & - \rho_1 + s \eta \rho_5 \leq 0 \\
                    &    &                    & - \rho_2 + (1 + s \eta) \rho_5 \leq 1 \\
                    &    &                    & - \rho_2 + (1 + (s -1) \eta)
                                                \rho_5 \leq \frac{1}{2} \\
                    &    &                    & \rho_1 - \rho_2 + \rho_3 -
                                                \rho_4 + \eta \rho_5 \leq 0
  \end{align*}
  We can directly verify that
  \[
    (\rho_1, \rho_2, \rho_3, \rho_4, \rho_5) = \left( \frac{5}{2} s - 1,
    \frac{5}{2}s - 1, 0, \frac{1}{2}, 2s \right)
  \]
  is a dual feasible solution with objective $\alpha' = 159/256$.

  We know that $\cM$ calculates an additive $\alpha$-approximate maximum welfare
  matching. While the allocations to each gadget may not be an optimal
  allocation, suboptimal allocations all have less social welfare and larger
  $RB$. So, we know the objective of $LP_m$ is a lower bound for $RB(\cM)$.

  Thus, $\mathbb{E}[RB(\cM)] \geq k \alpha'$ for any supply $s$. Since $RB$ is
  the sum of $k$ independent, $0/1$ random variables, another Hoeffding bound
  yields
  \begin{mathdisplayfull}
    \prob{ RB(\cM)/k \geq \alpha' - \lambda' } \geq 1 - 2 \exp (-2\lambda'^2 k).
  \end{mathdisplayfull}%
  Set $\lambda' = 1/256$, and condition on this event. All
  together, any matching mechanism $\cM$ which finds a matching with weight at
  least $\OPT - \alpha n$ failing with at most $\gamma$ probability gives an
  $(\epsilon, \delta)$-private mechanism mapping $D$ to $\hat{D}$ such
  that
  \begin{mathdisplayfull}
    \frac{1}{k} \cdot \|D - \hat{D}\|_1 \geq \alpha' - \lambda' = 79/128.
  \end{mathdisplayfull}%
  with probability at least $1 - \gamma - 2\exp(-2 \lambda^2 k) - 2\exp(-2
  \lambda'^2 k)$.

  For $\epsilon, \delta < 0.1$ and $\gamma < 0.01$, this contradicts
  \cref{thm:reconstruction} for large $k$.  Note that the failure probability
  and accuracy do not depend directly on $s$ since $\lambda, \lambda', \alpha'$
  are constants.  Hence
  \[
    \alpha \gg \frac{k}{16ns} = \frac{1}{16s(s + 1)}
  \]
  uniformly for all $s$, and $s = \Omega(\sqrt{1/\alpha})$ as desired.
\end{proof}
\else
We will only sketch the idea here, deferring the full proof to the full version.
Given a database $D \in \{0, 1\}^n$, we will have one real bidder, $m$ ``spy''
bidders, and two goods for each bit. The real bidder will have valuation for one
of the two goods determined by the private data $D$, while the spy bidders will
all have the same preference for one of the two goods, set uniformly at random
(independent of the private data). By arranging the valuations of the spy
bidders appropriately, we can show that any algorithm that achieves good welfare
must serve many of the spy bidders. When the spy bidder and the true bidder
prefer the same good (which happens half of the time), the spy bidders can learn
about the true bidder's preferences when they don't get their preferred good. By
taking the joint view of spy bidders, we can reconstruct a large enough portion
of the database to contradict \cref{thm:reconstruction} since under {\em joint}
differential privacy, the view of the spy bidders should satisfy {\em standard}
differential privacy with respect to the data from outside the coalition, i.e.,
the private data.
\fi

\section{Conclusion and Open Problems}

In this paper we gave algorithms to accurately solve the private
allocation problem when bidders have gross substitute valuations, achieving
joint differential privacy when the supply of each good is growing at least
logarithmically in the number of agents. Our results
are qualitatively tight: it is not possible to strengthen our approach
to standard differential privacy (from joint differential privacy), nor is it
possible to solve even max-matching problems to non-trivial accuracy under joint
differential privacy with constant supply.  Moreover, it is not clear how to
extend our approach to more general valuations: our algorithm fundamentally
relies on computing Walrasian equilibrium prices for the underlying market, and
such prices are not guaranteed to exist for valuation functions beyond the gross
substitutes class. This does not mean that the allocation problem cannot be
solved for more general valuation functions---rather, new ideas
seem to be needed.

Along with \citet{kearns-largegame} and other works in the joint privacy model,
our work adds compelling evidence that substantially more is possible under the
relaxation of \emph{joint} differential privacy compared to the standard
notion of differential privacy. For both the allocation problem studied here, 
and the equilibrium computation problem studied in \citet{kearns-largegame},
non-trivial results are impossible under differential privacy while strong
results can be derived under joint-differential privacy. Characterizing the
power of joint differential privacy, compared to standard differential
privacy, is a fascinating direction for future work.

More specifically, in this paper we achieved joint differential privacy via the
{\em billboard lemma}: we showed that the allocation given to each player can be
derived as a deterministic function only of 1) a differentially private message
revealed to all players, and 2) their own private data. However, this isn't
necessarily the only way to achieve joint differential privacy. How much further
does the power of joint differential privacy extend beyond the billboard model?

\iffull
\appendix
\section{Privacy Analysis for Counters}
\label{counter-details}

\citet{chan-counter} show that $\counter(\epsilon, T)$ is
$\epsilon$-differentially private with respect to single changes in the input
stream, when the stream is generated non-adaptively. For our application we
require privacy to hold for a large number of streams whose joint-sensitivity
can nevertheless be bounded, and whose entries can be chosen adaptively.  To
show that \counter is also private in this setting (when $\epsilon$ is set
appropriately), we first introduce some differential privacy notions.

We will make use of a basic differentially private mechanism originally due to
\cite{DMNS06}.
\begin{theorem}[\cite{DMNS06}]
  For a function $f:\mathcal{D}\rightarrow \mathbb{R}$, let
\[
  \Delta_1 = \max_{D, D' \in \mathcal{D}}\frac{|f(D)-f(D')|}{|\{i : D_i \neq
    D'_i\}|}
\]
  denote the $\ell_1$ sensitivity of $f$. Then the \emph{Laplace mechanism}
  which on input $D$ outputs $f(D) + \textrm{Lap}(\Delta_1/\epsilon)$ is
  $\epsilon$-differentially private. Here, $\textrm{Lap}(b)$ denotes a random
  variable drawn from the Laplace distribution with parameter $b$.
\end{theorem}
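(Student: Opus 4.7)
The plan is to establish differential privacy directly from the density ratio of the output distributions on neighboring databases, using the fact that the Laplace density has a convenient multiplicative sensitivity to shifts of its mean.

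First I would fix arbitrary neighboring databases $D, D' \in \mathcal{D}$ (so $|\{i : D_i \neq D'_i\}| = 1$) and an arbitrary measurable output set $S \subseteq \mathbb{R}$. It suffices to bound the ratio of densities pointwise: if $h_D(y)$ and $h_{D'}(y)$ denote the probability densities of the Laplace mechanism's output on $D$ and $D'$ respectively, then showing $h_D(y) \leq e^\epsilon h_{D'}(y)$ for every $y \in \mathbb{R}$ immediately gives $\Pr[\mathcal{M}(D) \in S] \leq e^\epsilon \Pr[\mathcal{M}(D') \in S]$ by integration.

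Next I would write out these densities. Since $\mathrm{Lap}(b)$ with $b = \Delta_1/\epsilon$ has density $\frac{1}{2b}\exp(-|z|/b)$, the output density is
\[
  h_D(y) = \frac{1}{2b}\exp\!\left(-\frac{|y - f(D)|}{b}\right),
\]
and similarly for $h_{D'}$. Taking the ratio, the normalization cancels and I get
\[
  \frac{h_D(y)}{h_{D'}(y)} = \exp\!\left(\frac{|y - f(D')| - |y - f(D)|}{b}\right).
\]
The reverse triangle inequality gives $|y - f(D')| - |y - f(D)| \leq |f(D) - f(D')|$, and the definition of $\Delta_1$ combined with $D, D'$ being neighbors yields $|f(D) - f(D')| \leq \Delta_1$. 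Therefore the exponent is at most $\Delta_1/b = \epsilon$, so $h_D(y)/h_{D'}(y) \leq e^\epsilon$, completing the proof.

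There is no serious obstacle here; the only thing to be careful about is that the sensitivity bound $\Delta_1$ is defined per differing coordinate, so I must invoke it only on neighboring databases (where the normalizing denominator in the definition of $\Delta_1$ is $1$). Everything else is a one-line application of the triangle inequality to Laplace densities.
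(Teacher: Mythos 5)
Your proof is correct and is the standard argument for the Laplace mechanism; the paper itself does not reprove this result but cites \citet{DMNS06}, and your density-ratio argument via the reverse triangle inequality is precisely the canonical proof from that reference. One small point worth tightening: the theorem as stated asserts $\epsilon$-differential privacy, which quantifies over all pairs of neighboring databases, but the sensitivity $\Delta_1$ is defined as a max over \emph{all} pairs $D, D'$ normalized by the Hamming distance; when you specialize to neighbors you correctly get $|f(D)-f(D')| \leq \Delta_1$, which is all you need, so no gap — just be aware that for non-neighboring pairs the definition gives a weaker per-coordinate bound and the privacy guarantee then follows by composition along a path of single-bit flips (or, as you did, is only invoked for adjacent pairs, which suffices for the definition).
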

% \jh{Why is this here? To prove privacy of the counter mechanism?}
% \ar{Ya, it seemed to me at various points in the privacy proof, we appeal to the
%   privacy properties of the Laplace mechanism. Correct me if not.}
% \jh{OK, that's fine. }
\subsection{Composition}
An important property of differential privacy is that it degrades gracefully
when private mechanisms are composed together, even adaptively. We recall the
definition of an adaptive composition experiment
\citep{dwork-composition}.

\begin{definition}[Adaptive composition experiment]
  \label{comp-exp}
\leavevmode
\begin{itemize}
  \item Fix a bit $b \in \{0, 1\}$ and a class of mechanisms $\cM$.
  \item For $t = 1 \dots T$:
    \begin{itemize}
      \item The adversary selects databases $D^{t, 0}, D^{t, 1}$ and a
        mechanism $\cM_t \in \cM$.
      \item The adversary receives $y_t = \cM_t(D^{t, b})$
    \end{itemize}
\end{itemize}
\end{definition}
The output of an adaptive composition experiment is the view of the
adversary over the course of the experiment. The experiment is said to be
$\epsilon$-differentially private if
\[
  \max_{S \subseteq \cR}\frac{\Pr[V^0 \in S]}{\Pr[V^1 \in S]} \leq
  \exp(\epsilon),
\]
where $V^0$ is the view of the adversary with $b = 0$, $V^1$ is the view of the
adversary with $b = 1$, and $\cR$ is the range of outputs.

Any algorithm that can be described as an instance of this adaptive composition
experiment for some adversary is said to be an instance of the class of
mechanisms $\cM$ under \emph{adaptive $T$-fold composition}. We now state a
straightforward consequence of a composition theorem by
\citet{dwork-composition}.

\begin{lemma}[\citet{dwork-composition}]
  \label{lem:l1-comp}
  Let $\Delta_1 \geq 0$. The class of $\frac{\epsilon}{\Delta_1}$-private
  mechanisms satisfies $\epsilon$-differential privacy under adaptive
  composition, if the adversary always selects databases satisfying
  \[
    \sum_{t = 1}^T \left|D^{t, 0} - D^{t, 1}\right| \leq \Delta_1.
  \]
\end{lemma}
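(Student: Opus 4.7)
The plan is to reduce this to the combination of two standard tools: group privacy (which amplifies the privacy parameter by the Hamming/$L_1$ distance between databases) and the basic adaptive composition theorem (which sums individual privacy parameters across an adaptive experiment). Both are available from \citet{DMNS06} and \citet{dwork-composition}. Since the lemma is attributed to \citet{dwork-composition}, the approach is essentially to assemble the two ingredients in the right order.

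First I would invoke group privacy on a single round $t$ of the adaptive composition experiment (\Cref{comp-exp}). Each $\cM_t$ is $(\epsilon/\Delta_1)$-differentially private with respect to unit changes in its input database. By the standard group privacy fact, for databases $D^{t,0}$ and $D^{t,1}$ that differ in $|D^{t,0} - D^{t,1}|$ coordinates, we have
\[
  \Pr[\cM_t(D^{t,0}) \in S] \leq \exp\!\left(\tfrac{\epsilon}{\Delta_1} \cdot |D^{t,0} - D^{t,1}|\right) \Pr[\cM_t(D^{t,1}) \in S]
\]
for every measurable $S$. Thus, on the specific pair of databases the adversary chose at round $t$, $\cM_t$ acts as an $\epsilon_t$-differentially private mechanism with $\epsilon_t := (\epsilon/\Delta_1)\cdot |D^{t,0} - D^{t,1}|$.

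Next I would apply the basic adaptive composition theorem of \citet{dwork-composition}: in an adaptive composition experiment where round $t$ outputs a realization of an $\epsilon_t$-DP mechanism, the adversary's view is $\bigl(\sum_t \epsilon_t\bigr)$-differentially private in the choice of bit $b$. Substituting the per-round bound from the previous step and then the hypothesis $\sum_t |D^{t,0}-D^{t,1}| \leq \Delta_1$, the total privacy loss is at most
\[
  \sum_{t=1}^T \tfrac{\epsilon}{\Delta_1}\cdot |D^{t,0}-D^{t,1}| \;\leq\; \tfrac{\epsilon}{\Delta_1}\cdot \Delta_1 \;=\; \epsilon,
\]
which is exactly the claim.

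The only subtlety, and the main thing to be careful about, is that the $\epsilon_t$'s are themselves data-dependent (they depend on the databases the adversary adaptively selects after seeing prior outputs). The adaptive composition theorem as stated by \citet{dwork-composition} handles this, as long as one views the reduction correctly: at each round the adversary is effectively choosing an $(\epsilon/\Delta_1)\cdot|D^{t,0}-D^{t,1}|$-DP mechanism on neighboring inputs, and the composition bound depends only on realized losses along the execution. Since the hypothesis bounds the sum of realized losses uniformly by $\Delta_1$ for every execution path, the overall $\epsilon$-bound holds unconditionally, completing the proof.
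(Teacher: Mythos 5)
Your proof is correct. The paper itself gives no proof of this lemma---it is stated as ``a straightforward consequence of a composition theorem of \citet{dwork-composition}''---so there is nothing to compare against line by line, but your argument is exactly the one the authors have in mind: reduce each round to a unit-sensitivity query via group privacy, then compose.

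The one place worth being precise, which you correctly flag as ``the only subtlety,'' is that the per-round losses $\epsilon_t = (\epsilon/\Delta_1)\,|D^{t,0}-D^{t,1}|$ are transcript-dependent, whereas the off-the-shelf statement of basic adaptive composition fixes a single $\epsilon_0$ for every round. For \emph{pure} differential privacy this is genuinely harmless, and it is cleanest to say why directly rather than lean on the black-box theorem: for any fixed transcript $(y_1,\dots,y_T)$, the chain rule gives
\[
\frac{\Pr[V^0=(y_1,\dots,y_T)]}{\Pr[V^1=(y_1,\dots,y_T)]}
 \;=\; \prod_{t=1}^T \frac{\Pr[\cM_t(D^{t,0})=y_t]}{\Pr[\cM_t(D^{t,1})=y_t]}
 \;\leq\; \prod_{t=1}^T e^{(\epsilon/\Delta_1)\,|D^{t,0}-D^{t,1}|}
 \;\leq\; e^{\epsilon},
\]
where at round $t$ the choices $\cM_t, D^{t,0}, D^{t,1}$ are determined by $y_{<t}$ (so group privacy applies to the conditional distribution), and the last step uses the pointwise hypothesis $\sum_t |D^{t,0}-D^{t,1}| \leq \Delta_1$, valid on every execution path. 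This is the per-transcript argument your last paragraph gestures at; spelling it out closes the gap cleanly and makes clear that no extension of the composition theorem beyond pure $\epsilon$-DP is needed.
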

In other words, the privacy parameter of each mechanism should be calibrated for
the total distance between the databases over the whole composition (the {\em
  $\ell_1$ sensitivity}).

\subsection{Binary mechanism}
We reproduce the Binary mechanism here in order to refer to its internal
workings in our privacy proof.
First, it is worth explaining the intuition of the \counter. Given a bit stream
$\sigma \colon [T] \rightarrow \{0,1\}$, the algorithm releases the counts
$\sum_{i=1}^t \sigma(i)$ for each $t$ by maintaining a set of partial sums
$\sum[i, j] \coloneqq \sum_{t=i}^j \sigma(t)$. More precisely, each partial sum
has the form $\Sigma[2^i + 1, 2^i + 2^{i - 1}]$, corresponding to powers of $2$.

In this way, we can calculate the count $\sum_{i=1}^t \sigma(i)$ by summing at
most $\log{t}$ partial sums: let $i_1 < i_2 \ldots < i_m$ be the indices of
non-zero bits in the binary representation of $t$, so that
\begin{equation*}
  \label{eq:binary}
  \sum_{i=1}^t  \sigma(i) = \sum[1, 2^{i_m}] + \sum[2^{i_m}+1, 2^{i_m}
  + 2^{i_{m-1}}] + \ldots + \sum[t-2^{i_1} + 1, t].
\end{equation*}
Therefore, we can view the algorithm as releasing partial sums of
different ranges at each time step $t$ and computing the counts is
simply a post-processing of the partial sums. The core algorithm is
presented in \cref{alg:binary-mechanism}.

% \ar{For presentation, perhaps put the description of the algorithm in the
%   appendix and just state the theorem. Also, do we ever use the utility theorem
%   here? We only deal with counters that have sensitivity > 1. If we don't use
%   it, no need to quote it. Finally, I know they use this term, but is "streaming
%   privacy" any different than "differential privacy", where the output is the
%   stream? We don't want to seem like we have a third definition of privacy
%   here...}
% \sw{We do use utility theorem. Later we plug in privacy parameter $\frac{\epsilon}{T}$}

\begin{algorithm}[h!]
     \caption{$\counter(\eps, T)$}
     \begin{algorithmic}\label{alg:binary-mechanism}
       \STATE{\textbf{Input:} A stream $\sigma\in \{0,1\}^T$}
       \STATE{\textbf{Output: } $B(t)$ as estimate for $\sum_{i=1}^t
         \sigma(i)$ for each time $t\in [T]$}
       \FORALL{$t\in [T]$}
       \STATE Express $\displaystyle t = \sum_{j=0}^{\log{t}} 2^j\bin_j(t)$.
       \STATE Let $i \leftarrow \min_j\{\bin_j(t) \neq 0\}$
       \STATE $a_i\leftarrow \sum_{j < i} a_j + \sigma(t) $,
               $(a_i  = \sum[t-2^i + 1, t])$
               \FOR{$0\leq j \leq i - 1$}
               \STATE Let $a_j \leftarrow 0$ and $\hat{a_j} \leftarrow   0$
     \ENDFOR
     \STATE Let $\hat{a_j} = a_j + \Lap(\log(T) /\eps)$
     \STATE Let $\displaystyle B(t) = \sum_{i: \bin_i(t)\neq 0} \hat{a_i}$
      \ENDFOR

     \end{algorithmic}
\end{algorithm}

\subsection{Counter Privacy Under Adaptive Composition}
We can now show that the prices released by our mechanism
satisfy $\epsilon$-differential privacy.

\counterpriv*
\begin{proof}
  \citet{chan-counter} show this for a single sensitivity 1 counter for a
  non-adaptively chosen stream. We here show the generalization to multiple
  counters run on adaptively chosen streams with bounded $\ell_1$ sensitivity,
  and bound the $\ell_1$ sensitivity of the set of streams produced by our
  algorithm.  We will actually show that the sequence of noisy partial sums released by
  \counter satisfy $\epsilon$-differential privacy. This is only stronger: the
  running counts are computed as a function of these noisy partial sums.

  To do so, we first define an adversary for the adaptive composition
  experiment (\cref{comp-exp}) and then show that the view of this adversary is
  precisely the sequence of noisy partial sums. The composition theorem
  (\cref{lem:l1-comp}) will then show that the sequence of noisy partial sums
  are differentially private with respect to a change in a bidder's valuation.

  Let the two runs $b = 0, 1$ correspond to any two neighboring valuations
  $(v_i, v_{-i})$ and $(v'_i, v_{-i})$ that differ only in bidder $i$'s
  valuation. We first analyze the view on all of the
  $\text{counter}(j)$ for $j = 1,\ldots, k$.

  The adversary will operate in phases. There are two kinds of phases, which we
  label $P_t$ and $P'_t$: one phase per step of the good counters, and one phase
  per step of the halting condition counter. Both counters run from time $1$ to
  $nT$, so there are $2nT$ phases in total.

  At each point in time, the adversary maintains histories $ \{b_i\}, \{b'_i\}$
  of all the bids prior to the current phase and histories $\{e_i\}, \{e'_i\}$
  of all prior reports to the halting counter $\text{counter}_0$, when bidder $i$
  has valuation $v_i, v'_i$ respectively.

  Let us consider the first kind of phase. One bidder bids per step of the
  counter, so one bidder bids in each of these phases.  Each step of the
  experiment the adversary will observe a partial sum.  Suppose the adversary is
  in phase $P_t$.  Having observed the previous partial sums, the adversary can
  simulate the action of the current bidder $q$ from the histories of previous
  bids by first computing the prices indicated by the previous partial sums. The
  adversary will compute $q$'s bid when the valuations are $(v_i, v_{-i})$, and
  when the valuations are $(v_i', v_{-i})$.  Call these two bids $b_t, b_t'$
  (which may be $\perp$ if $q$ is already matched in one or both of the
  histories).

  Note that for bidders $q \neq i$, it is always the case that $b_t = b_t'$.
  This holds by induction: it is clearly true when no one has bid, and bidder
  $q$'s decision depends only on her past bids, the prices, and her valuation.
  Since these are all independent of bidder $i$'s valuation, bidder $q$ behaves
  identically.

  After the adversary calculates $b_t, b_t'$, the adversary simulates update and
  release of the counters. More precisely, the adversary spends phase $P_t$
  requesting a set of partial sums
  \[
    \Sigma = \{ \sigma^j_{I} \mid j \in [k], I \in S_t \},
  \]
  where $S_t \subseteq [1, nT]$ is a set of intervals ending at $t$,
  corresponding to partial sums that \counter releases at step $t$.

  For each $\sigma^j_I \in \Sigma$, $D^0, D^1 \in \{0, 1\}_I$ are defined by
  \[
    D^0_k = \left\{
      \begin{array}{ll}
        1 &: \text{if } b_k = j \\
        0 &: \text{otherwise}
      \end{array}
    \right.
  \]
  and similarly for $D^1$, with bid history $\{b'_i\}$. Informally, a database
  $D$ for $\sigma^j_I$ encodes whether a bidder bid on good $j$ at every
  timestep in $I$.  The adversary will define $\cM$ to sum the bits in the
  database and add noise $Lap(1/\epsilon_0)$, an $\epsilon_0$-differentially
  private operation.  Once the partial sums for $P_t$ are released, the
  adversary advances to the next phase.

  Now, suppose the adversary is in the second kind of phase, say $P'_t$. This
  corresponds to a step of the halting condition counter. We use exactly the
  same construction as above: the adversary will request the partial sums
  corresponding to each timestep. The adversary will simulate each bidder's
  action by examining the history of bids and prices. Now suppose the two runs
  differ in bidder $i$'s valuation. Following the same analysis, the reports to
  this halting condition counter differ only in bidder $i$'s reports.

  With this definition, the view of the adversary on database $\{D^0\}$ and
  $\{D^1\}$ is precisely the noisy partial sums when the valuations are $(v_i,
  v_{-i})$ and $(v_i', v_{-i})$, respectively. So, it suffices to show that
  these views have almost the same probability.

  We apply \cref{lem:l1-comp} by bounding the distance between the databases for
  counter$(1)$ to counter$(k)$.  Note that the sequence of databases $ \{D^0\},
  \{D^1\}$ chosen correspond to streams of bids that differ only in bidder $i$'s
  bid, or streams of reports to $\text{counter}(0)$ that differ only in bidder
  $i$'s report. This is because the bid histories $\{b_t\}, \{b_t'\}$ and report
  histories $\{e_t\}, \{e_t'\}$ differ only on timesteps where $i$ acts.  Thus,
  it suffices to focus on bidder $i$ when bounding the distance between these
  databases.

  Consider a single good $j$, and suppose $c_j$ of $i$'s bids on good $j$ differ
  between the histories. Each of bidder $i$'s bids on $j$ show up in $\log (nT)$
  databases, so
  \[
    \sum |D^0_j - D^1_j| \leq c_j \log nT,
  \]
  where the sum is taken over all databases corresponding to good
  $j$. The same is true for the halting condition counter: if there are $c_0$
  reports that differ between the histories, then
  \[
    \sum |D^0_0 - D^1_0| \leq c_0 \log nT.
  \]

  Since we know that a bidder can bid at most $T$ times over $T$ proposing
  rounds, and will report at most $T$ times, we have $\ell_1$ sensitivity
  bounded by
  \[
    \Delta_1 \leq c_0 \log n T + \sum_j c_j \log nT \leq 2T \log nT.
  \]
  By \cref{lem:l1-comp}, setting
  \[
    \epsilon_0 = \frac{\epsilon}{2T\log nT}
  \]
  suffices for $\epsilon$-differential privacy, and this is precisely running
  each \counter with privacy level $\epsilon' = \epsilon/2T$.
\end{proof}

\section{Reconstruction Lower Bound} \label{recons-details}
Here, we detail a basic lower bound about differential privacy. Intuitively, it
is impossible for an adversary to recover a database better than random guessing
from observing the output of a private mechanism. The theorem is folklore.

% \sw{gets a different bound $1-\alpha \leq \frac{1+\delta}{(1+e^{-\epsilon})(1 -
%     \beta)}$ if we use a alternative DP expression}
\reconstructionbeta*
\begin{proof}
Fix a database $D\in \{0,1\}^n$ and sample an index $i$ uniformly at
random from $[n]$. Let $D'$ be a neighboring database of $D$ that
differs at the $i$-th bit. By assumption, we have that with
probability at least $1-\beta$
\[
  \| \cM(D) - D \|_1 \leq \alpha n,
  \qquad
  \| \cM(D') - D' \|_1 \leq \alpha n.
\]
Since $i$ is chosen uniformly, we then have
\[
  \Pr[ \cM(D)_i = D_i ] \geq (1 - \alpha)(1 - \beta),
  \qquad
  \Pr[ \cM(D')_i = D'_i ] \geq (1 - \alpha)(1 - \beta).
\]
It follows that $\Pr[ \cM(D')_i = D_i ]  \leq 1 - (1 - \alpha)(1- \beta)$
because $D_i \neq D_i'$.  By definition of $(\epsilon, \delta)$-differential
privacy, we get
\[
  (1 - \alpha)(1 - \beta)  \leq \Pr[\cM(D)_i = D_i ] \leq
  e^\epsilon \Pr[\cM(D')_i = D_i] + \delta \leq e^\epsilon (1 - (1
  - \alpha)(1 - \beta)) + \delta.
\]
Then we have
\[
  1 - \alpha \leq \frac{e^\epsilon + \delta}{(1+ e^\epsilon)(1 - \beta)}
\]
as desired.
\end{proof}

\fi

\subsection*{Acknowledgments}
The authors would like to thank Cynthia Dwork, Sudipto Guha, Moritz Hardt,
Sanjeev Khanna, Scott Kominers, Mallesh Pai, David Parkes, Adam Smith, and
Kunal Talwar for helpful discussions. In particular, we would like to thank
Scott Kominers for suggesting the connection to Kelso and Crawford, and Adam
Smith for discussions on the ``billboard model'' of privacy. Finally, we
thank the anonymous reviewers.

\iffull
\bibliographystyle{plainnat}
\else
% \footnotesize
\bibliographystyle{acmtrans}
\fi
\bibliography{./refs}

\end{document}